\numberwithin{equation}{section}
\declaretheoremstyle[bodyfont=\it,qed=\qedsymbol]{noproofstyle}
\declaretheorem[numberlike=equation]{observation}
\declaretheorem[name=Observation,numbered=no]{observation*}
\declaretheorem[numberlike=equation]{theorem}
\declaretheorem[numberlike=equation,style=noproofstyle,name=Theorem]{theoremwp}
\declaretheorem[name=Theorem,numbered=no]{theorem*}
\declaretheorem[numberlike=equation]{lemma}
\declaretheorem[name=Lemma,numbered=no]{lemma*}
\declaretheorem[numberlike=equation]{corollary}
\declaretheorem[name=Corollary,numbered=no]{corollary*}
\declaretheorem[numberlike=equation,style=noproofstyle,name=Corollary]{corollarywp}
\declaretheorem[name=Proposition,numbered=no]{proposition*}
\declaretheorem[numberlike=equation]{claim}
\declaretheorem[name=Claim,numbered=no]{claim*}
\declaretheorem[name=Conjecture,numbered=no]{conjecture*}
\declaretheorem[name=Question,numbered=no]{question*}
\declaretheoremstyle[bodyfont=\it,qed=$\lozenge$]{defstyle} 
\declaretheorem[numberlike=equation,style=defstyle]{definition}
\declaretheorem[unnumbered,name=Definition,style=defstyle]{definition*}
\declaretheorem[unnumbered,name=Example,style=defstyle]{example*}
\declaretheorem[unnumbered,name=Notation=defstyle]{notation*}
\declaretheorem[unnumbered,name=Construction,style=defstyle]{construction*}
\declaretheorem[numberlike=equation,style=defstyle]{remark}
\declaretheorem[unnumbered,name=Remark,style=defstyle]{remark*}
\def\rank{\mathsf{Rank}}
\def\dim{\mathsf{Dim}}
\def\eval{\mathsf{Eval}}
\def\span{\mathsf{Span}}
\def\mult{\mathsf{Mult}}
\def\deg{\mathsf{Deg}}
\def\terms{\mathrm{Terms}}
\def\size{\mathrm{size}}
\def\NW{\mathsf{NW}}
\def\Perm{\mathsf{Perm}}
\def\Det{\mathsf{Det}}
\title{An exponential lower bound for homogeneous depth-$5$ circuits over finite fields%
{\IfFileExists{./sha.tex}{\\\small SHA: \input{sha}}{}}}
\author{
Mrinal Kumar\thanks{Research supported in part by NSF grant CCF-1253886 and  Simons Graduate Fellowship. Part of this work  done while  an intern at MSR, New England.}\\
Rutgers University\\
\texttt{mrinal.kumar@rutgers.edu}
\and
Ramprasad Saptharishi\thanks{The research leading to these results has received funding from the European Community's Seventh Framework Programme (FP7/2007-2013) under grant agreement number 257575.}\\
Tel Aviv University\\
\texttt{ramprasad@cmi.ac.in}
}
\begin{document}
\maketitle
\onehalfspace
\begin{abstract}
In this paper, we show exponential lower bounds for the class of homogeneous depth-$5$ circuits over all small finite fields. More formally, we show that there is an explicit family $\{P_d : d \in \N\}$ of polynomials in $\VNP$, where $P_d$ is of degree $d$ in $n = d^{O(1)}$ variables, such that over all finite fields $\F_q$, any homogeneous depth-$5$ circuit which computes $P_d$ must have size at least $\exp(\Omega_q(\sqrt{d}))$. 

To the best of our knowledge, this is the first super-polynomial lower bound for this class for any field $\F_q \neq \F_2$. 

Our proof builds up on the ideas developed on the way to proving lower bounds for homogeneous depth-$4$ circuits~\cite{gkks13, FLMS13, KLSS, KS14} and for non-homogeneous depth-$3$ circuits over finite fields~\cite{grigoriev98, gr00}. 
Our key insight is to look at the space of shifted partial derivatives of a polynomial as a space of functions from $\F_q^n \rightarrow \F_q$ as opposed to looking at them as a space of formal polynomials and builds over a tighter analysis of the lower bound of Kumar and Saraf~\cite{KS14}. 
\end{abstract}

\section{Introduction}

Arithmetic circuits are the most natural model to study computations of multivariate polynomials. These are directed acyclic graphs, with a unique sink node called the root or output gate,  internal nodes are labelled by addition and multiplication gates\footnote{throughout this paper, we consider circuits as having gates of unbounded fan-in.}, and leaves are labelled by variables or constants from the underlying field. The field of arithmetic circuit complexity aims at understanding the hardness of multivariate polynomials in terms of the size of the smallest arithmetic circuit computing it. One of the most important questions in this field of study is to show that there are families of explicit \emph{low-degree}\footnote{where the degree is bounded by a polynomial function in the number of variables} polynomials that require arithmetic circuits of super-polynomial size (in terms of $n$, the number of variables). It is widely believed that the symbolic $n\times n$ permanent, often denoted by $\Perm_n$, requires circuits of size $\exp(\Omega(n))$ but, as of now,  we do not even have a $\Omega(n^2)$ lower bound for any explicit polynomial. 

\subsection*{Depth Reductions}

In the absence of much progress on the question of lower bounds for general arithmetic circuits, a natural question to ask is if  we can prove good lower bounds for nontrivial restricted classes of circuits. One particular class of circuits which have been widely studied with this aim are the class of bounded depth\footnote{A depth $k$ arithmetic circuit consists of $k$ layers of alternating sum and multiplication gates with the output being computed by a sum gate.} arithmetic circuits. It turns out that this is not just an attempt to study a simpler model, but there is a formal connection between lower bounds for bounded depth circuits and lower bounds for general circuits. A sequence of structural results, often referred to as {\it depth reduction} results, show that strong enough lower bounds for bounded depth circuits implies lower bounds for general arithmetic circuits.

The first depth reduction for arithmetic circuits was by Hyafil~\cite{Hyafil1978} who showed that any polynomial computed by a polynomial sized arithmetic circuit can be equivalently computed by a circuit of depth $O(\log d)$ and \emph{quasi-polynomial} size. This was improved by  Valiant, Skyum, Berkowitz and Rackoff~\cite{vsbr83}, who showed that any $n$-variate degree $d$ polynomial that can be computed by a circuit of size $(nd)^{O(1)}$ can be equivalently computed by a circuit of depth $O(\log d)$ and size $(nd)^{O(1)}$. Thus, proving super-polynomial lower bounds for $O(\log d)$ depth circuits is sufficient to prove super-polynomial lower bounds for general arithmetic circuits. Agrawal and Vinay~\cite{av08} further strengthened this to obtain a depth reduction to depth-$4$ circuits by showing that any $n$-variate degree $d$ polynomial that can be computed by a $2^{o(n)}$ sized circuit can be equivalently computed by \emph{homogeneous}\footnote{which means that all intermediate computations are homogeneous polynomials. Hence the degree of any intermediate computation is bounded by the degree of the output polynomial.}  depth-$4$ circuit of size $2^{o(n)}$. Their result was strengthened by Koiran~\cite{koiran} and Tavenas~\cite{Tav13} to show that any circuit of size $s$ that computes an $n$-variate degree $d$ polynomial can be computed by a homogeneous depth-$4$ circuit of size $s^{O(\sqrt{d})}$, and in fact the resulting depth-$4$ circuits have all multiplication fan-ins bounded by $O(\sqrt{d})$. These results hold over all fields.

Over any field of characteristic zero, Gupta, Kamath, Kayal and Saptharishi~\cite{gkks13b} showed that any $n$-variate degree $d$ polynomial computed by a size $s$ circuit can be equivalently computed by a non-homogeneous depth-$3$ circuit of size $s^{O(\sqrt{d})}$. 
Thus, these results formally show that proving good enough lower bounds on circuits of bounded depth is sufficient for proving lower bounds for general circuits. 

\subsection*{Lower bounds for depth-$3$ and depth-$4$ circuits}

Nisan and Wigderson~\cite{nw1997} proved an $\exp(\Omega(n))$ lower bound for any \emph{homogeneous} depth-$3$ circuits computing the symbolic $n\times n$ determinant $\Det_n$ by studying dimension of the partial derivatives of $\Det_n$ as polynomials. Grigoriev and Karpinski~\cite{grigoriev98} and Grigoriev and Razborov~\cite{gr00} extended this to prove an $\exp(\Omega(n))$ lower bound for non-homogenous depth-$3$ circuit computing $\Det_n$ over any fixed finite field $\F_q$. Chillara and Mukhopadhyay \cite{CM14a} extended this to give a $\exp(\Omega_q(d \log n))$ lower bound for non-homogeneous depth-$3$ circuits computing an $n$-variate degree $d$ polynomial in $\VP$. It is worth noting that there is no generic method known to convert a boolean lower bound for $\mathsf{AC}^0[\bmod q]$ to lower bounds for arithmetic circuits over $\F_q$ (discussed in more detail in \autoref{sec:AC0-connections}). 

The proofs of \cite{grigoriev98,CM14a} also studied the dimension of partial derivatives of polynomial, but unlike the proof in~\cite{nw1997}, they looked at partial derivatives as functions from $\F_q^n \rightarrow \F_q$.
The proofs in \cite{grigoriev98}, \cite{gr00} and \cite{CM14a} strongly rely on the fact that we are working over small finite fields, and completely break down over larger fields or fields of large characteristic. Over fields of characteristic zero and over algebraic closure of finite fields, the question of proving superpolynomial lower bounds for non-homogeneous depth three circuits continues to remain wide open. 

Even though we had exponential lower bounds for homogeneous depth-$3$ circuits, the question of proving superpolynomial lower bounds for homogeneous depth-$4$ circuits remained open for more than a decade.
In 2012, Kayal~\cite{k2} introduced the notion of  \emph{shifted partial derivatives}, which is a generalization of the well-known notion of partial derivatives. Shifted partial derivatives have been very influential in a plethora of lower bounds for depth-$4$ circuits in the past few years. Gupta et. al. \cite{gkks13} used this measure to prove an $\exp(\Omega(\sqrt{n}))$ lower bound for the size of homogeneous depth-$4$ circuits  with multiplication fan-ins bounded by $O(\sqrt{n})$. Subsequently, lower bounds of $\exp(\Omega(\sqrt{d}\log n))$ were proved for other $n$-variate degree $d$ polynomials computed by almost the  same circuit class \cite{KSS13,FLMS13,KS14a}. (It is worth noting that getting a lower bound of $\exp(\omega(\sqrt{d}\log n))$ would have implied a super-polynomial lower bound for general circuits!) Using a more delicate variant called \emph{projected shifted partials}, Kayal et. al. \cite{KLSS} and Kumar and Saraf \cite{KS14} proved lower bounds of $\exp(\Omega(\sqrt{d}\log n))$ for homogeneous depth-$4$ circuits (without any fan-in restrictions) via two very different analyses. The former was an analytic approach and works only over characteristic zero fields, whereas the latter was purely combinatorial and works over any field. These techniques have also been applied to yield lower bounds for  non-homogeneous depth-$3$ circuits with bounded bottom fan-in~\cite{KayalSaha14} and homogeneous depth-$5$ circuits with bounded bottom fan-in~\cite{BC15}. A continuous updated survey \cite{github} contains expositions of many of the lower bounds and depth reduction results listed above. 

The results in \cite{KS14} in fact show that the reduction from general arithmetic circuits to depth-$4$ circuits with support $O(\sqrt{d})$ cannot be improved, as they give an example of a polynomial in $\VP$ for which any  depth-$4$ circuits of support $O(\sqrt{d})$ must be of size $n^{\Omega(\sqrt{d})}$. Further, with the current upper-bounds for the projected shifted partials on such depth-$4$ circuits, the best we can hope to prove using this measure is an $n^{\Omega(\sqrt{d})}$ lower bound. Hence, it might be insufficient for general arithmetic circuits lower bounds but it could well be the case that we might be able to prove stronger lower bounds for constant depth arithmetic circuits, or arithmetic formulas by variants of this family of measures. 

Hence, as a start, the problem of proving lower bounds for homogeneous depth five circuits, seems like the next natural question to explore. This already seems to introduce new challenges as the proofs of lower bounds for homogeneous depth-$4$ circuits seem to break down for homogeneous depth-$5$ circuits. In this paper, we pursue this line of enquiry, and prove exponential lower bounds for homogeneous depth-$5$ circuits over small finite fields. Before stating our results, we first discuss prior results on this question, and the challenges involved in extending the proofs of lower bounds for homogeneous depth four circuits, in the next section.

\subsection*{Lower bounds for depth-$5$ circuits}
Prior to this work, the only known lower bounds for depth-$5$ circuits that we are aware of are the results of Raz~\cite{raz10}, which show superlinear lower bounds for bounded depth circuits over large enough fields, the results of Kalorkoti~\cite{k85} which show quadratic lower bounds for arithmetic formulas and the results of Bera and Chakrabarti~\cite{BC15} and Kayal and Saha~\cite{KayalSaha14} which show  exponential lower bounds for homogeneous depth-$5$ circuits if the bottom fan-in is bounded.  

Given that we have lower bounds for homogeneous depth-$4$ circuits, it seems natural to try and apply these techniques to prove lower bounds for homogeneous depth-$5$ circuits. Unfortunately, the obvious attempts to generalize the proofs in~\cite{KLSS, KS14} seem to fail for homogeneous depth-$5$ circuits. We now elaborate on this.  

\paragraph{On extending the depth-$4$ lower bound  proofs to depth-$5$ circuits : }
To understand these issues, we first need a birds-eye view of the major steps in the proofs of lower bounds for depth-$4$ circuits~\cite{KLSS, KS14}. These proofs have two major components. 
\begin{itemize}
\item {\bf Reduction to depth-$4$ circuits with bounded bottom support : } In the first step, the circuit $C$ and the polynomial are hit with a random restriction, in which each  variable is kept alive independently with some small probability $p$. The observation is that a bottom level product gate in $C$ of support (the number of distinct variable inputs) at least $s$ survives with probability at most $p^s$. Therefore, the probability that some bottom product of support at least $s$ in $C$ survives is at most $\text{Size}(C)\cdot p^s$. Now, if the size of $C$ is small (say  $\epsilon\cdot 1/p^s$), then this probability is quite small, so with a high probability $C$ reduces to a homogeneous depth-$4$ circuit with bounded bottom support.   
\item {\bf Lower bounds for depth-$4$ circuits with bounded bottom support :} The goal in the second step is to show that the polynomial obtained after random restrictions still remains hard for homogeneous depth-$4$ circuits with bottom support at most $s$. 
\end{itemize}

The key point in step $1$ is that if $\text{Size}(C)$ is not too large, then we can assume that with a high probability over the random restrictions, all the high support product gates are set to $0$. This is where things are not quite the same for depth-$5$ circuits. When we express a homogeneous depth-$5$ circuit as a homogeneous depth-$4$ circuit by expanding the product of linear forms at level four, we might increase the number of monomials a lot (potentially to all possible monomials). Now, the random restriction step no longer works and we do not have a reduction to homogeneous depth-$4$ circuits with bounded bottom support. If the bottom fan-in of $C$ is bounded, then this strategy does indeed generalize. Bera and Chakrabarti~\cite{BC15} and Kayal and Saha~\cite{KayalSaha14} show exponential lower bounds for such cases. 

It is not clear to us how fundamental this obstruction is, but our key insight  is a  strategy for proving lower bounds for  homogeneous depth-$4$ circuits that avoids the random restriction step. Morally speaking, we \emph{do} proceed by a `reduction' from a depth-$5$ circuit to a depth-$4$ circuit, but the meaning of a `reduction' here is more subtle and largely remains implicit.

\subsection*{Our Contribution}

We give  an exponential lower bound for homogeneous depth-$5$ circuits over any fixed finite field $\F_q$.
To the best of our understanding, this is the first such lower bound for depth-$5$ circuits over any field apart from $\F_2$\footnote{For $\F_2$, exponential lower bounds easily follow from the lower bounds of Razborov~\cite{razborov87} }. Stated precisely, we prove the following theorem. 

\begin{theorem}\label{thm:mainthm-all fields}
There is an explicit family of polynomials $\setdef{P_d}{d\in \N}$, with $\deg(P_d) = d$, in the class $\VNP$ such that for any finite field $\F_q$, any homogeneous depth-$5$ circuit computing $P_d$ must have size $\exp(\Omega_q(\sqrt{d}))$. 
\end{theorem}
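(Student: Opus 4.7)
The plan is to take $P_d$ to be a Nisan-Wigderson design polynomial $\NW_d$ in $n = d^{O(1)}$ variables, as used in \cite{KLSS,KS14}, whose membership in $\VNP$ is standard. Following the insight flagged in the abstract, for parameters $T \approx \sqrt{d}$ and $\ell$ polynomial in $d$, I would work with the measure
\[
\Phi_{T,\ell}(f) \;=\; \dim_{\F_q}\Big\{ m \cdot \partial^T f \;:\; m \text{ a monomial of degree} \le \ell,\; \partial^T \text{ a derivation of order } T \Big\},
\]
with the dimension taken inside the space of \emph{functions} $\F_q^n \to \F_q$ rather than the space of formal polynomials. The function-space view collapses high powers via $x^q = x$, which is exactly what blunts the monomial blow-up that obstructs the usual random-restriction reduction from depth-$5$ to bounded-support depth-$4$ circuits.

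\textbf{Upper bound for depth-$5$.} For a homogeneous $\spsps$ circuit $C = \sum_{i=1}^{T'} \prod_{j=1}^{\Delta'_i} Q_{ij}$ with each $Q_{ij}$ a $\Sigma\Pi\Sigma$ sub-circuit, I would first expand $\partial^T C$ by distributing the derivation across the middle product gates via Leibniz, producing a small sum of terms of the shape $\big(\prod_j Q_{ij}^{\alpha_{ij}}\big) \cdot \prod_j(\partial^{T_{ij}} Q_{ij})$ with $\sum_j T_{ij} = T$. I would then bound the $\F_q$-function-space dimension of the shift-by-degree-$\ell$ closure of each such term in two pieces: the residual product $\prod_j Q_{ij}^{\alpha_{ij}}$ would be controlled using Grigoriev--Karpinski / Grigoriev--Razborov style reasoning on function spaces of $\Sigma\Pi\Sigma$-products over $\F_q$, while the derivative factor would be controlled by adapting the combinatorial shift-count of \cite{KS14} to functions. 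Summing over the at most $s^{O(\sqrt d)}$ Leibniz terms produced by a size-$s$ circuit should yield $\Phi_{T,\ell}(C) \le s^{O(\sqrt d)} \cdot U$ for a per-term bound $U$ that stays below the target NW bound whenever $s \le \exp(c_q\sqrt d)$.

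\textbf{Lower bound for $\NW_d$.} For the matching lower bound on $\Phi_{T,\ell}(\NW_d)$, I would start from the formal-polynomial lower bound of \cite{KS14}, which produces essentially all degree-$(\ell + d - T)$ monomials supported on the NW design subsets. The extra task is to argue that no significant dimension is lost on descending to the function quotient $\F_q[x_1,\ldots,x_n]\big/\langle x_i^q - x_i \rangle$: the combinatorial design guarantees that the leading monomials produced are multilinear in the relevant variable blocks, and hence are fixed by the reduction $x_i^q = x_i$. The resulting function-space dimension should match the formal-polynomial dimension up to an $O_q(1)$ factor harmless for our parameters.

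\textbf{Main obstacle.} The genuinely delicate step, where I expect the bulk of the technical effort to live, is the function-space upper bound for a product of $\Sigma\Pi\Sigma$ circuits of total degree up to $d$: as a formal polynomial such a product can span an enormous space, but as a function over small $\F_q$ it must be much smaller, and one needs a sharp version of this collapse that is \emph{compatible} with shifting by arbitrary degree-$\ell$ monomials. Pushing through a tight enough bound here — the ``tighter analysis'' of \cite{KS14} promised in the abstract — is what I expect to drive the final $\exp(\Omega_q(\sqrt d))$ separation, with the parameter alignment between this function-space upper bound and the function-space lower bound for $\NW_d$ being where the arithmetic must be carefully balanced.
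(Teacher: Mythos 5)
There is a genuine gap, and it sits precisely at the point you flag as the ``main obstacle'': your measure $\Phi_{T,\ell}$ takes dimensions of shifted partials as functions on \emph{all} of $\F_q^n$, and no per-term upper bound $U$ of the required size exists for that measure. Already a single bottom product gate of high rank defeats it: for $T' = \prod_{i=1}^n x_i$, the order-$T$ derivatives shifted by degree-$\le\ell$ monomials span (as functions on $\F_q^n$, since they are multilinear and multilinear monomials are linearly independent as functions) a space of dimension comparable to the target $\NW$ bound itself. The reduction $x^q = x$ gives you nothing here, and Grigoriev--Karpinski/Grigoriev--Razborov reasoning does not apply to an arbitrary $\Sigma\Pi\Sigma$ product as a function on the whole space: their collapse works only after splitting bottom gates into high-rank ones (which are discarded at the cost of an \emph{error set} of evaluation points) and low-rank ones (which become degree-$(q-1)\tau$ functions). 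The missing idea is that high-rank bottom products in a small circuit vanish not just often but \emph{with multiplicity} $\Omega(\tau)$ on a $1-o(1)$ fraction of points, so that all order-$k$ derivatives of the circuit agree with those of the pruned circuit on a good set; this forces the measure to be evaluation on a circuit-dependent subset $S$ (in the paper, a subset of a translate of $\{0,1\}^n$ avoiding the bad set $\mathcal{E}$), not on all of $\F_q^n$. Correspondingly, on the lower-bound side the real work is not the harmless quotient by $\langle x_i^q - x_i\rangle$ but showing the $\NW$ bound survives restriction to such an $S$: the paper does this by factoring the evaluation matrix as a coefficient matrix (whose rank is the projected-shifted-partials bound of Kumar--Saraf) times a monomial-evaluation matrix $V_t(S)$, choosing the hypercube translate by averaging so that $V_t(S)$ has nearly full rank, and using a rank inequality for products. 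Your proposal has no analogue of either half of this.

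Two further points. Quantitatively, your Leibniz accounting with $s^{O(\sqrt d)}$ terms is fatal even if the per-term bound worked: with $s = \exp(c_q\sqrt d)$ this is an $\exp(O(d))$ factor, which swamps the entire advantage of the $\NW$ polynomial, namely a factor of only $(1+\epsilon)^{\Theta(d)} = \exp(\Theta(\sqrt d\log d))$ over the circuit bound; one needs the number of Leibniz terms to be $2^{O(\sqrt d)}$, which the paper achieves by first merging low-degree factors so that the middle product has only $O(d/\tau) = O(\sqrt d)$ factors, each of degree about $\tau$, and by paying only a single multiplicative $s$ for the top fan-in. Finally, even granting all of the above, the parameters of the hard polynomial ($k$, hence $\epsilon$, $e$) depend on $q$, so to get one fixed $\VNP$ family for all finite fields one still needs the projection argument ($\NW_{d',m,e}$ as a $0/1$ restriction of $\NW_{d,m,e}$ landing in the right parameter regime), which your proposal does not address.
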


The polynomial $P_d$ is from the Nisan-Wigderson family of polynomials (introduced by \cite{KSS13}, \autoref{def:NW final}) with carefully chosen parameters. 

Our proof also extends to non-homogeneous depth-$5$ circuits where the layer of multiplication gates closer to the output have fan-in bounded by $O(\sqrt{d})$ (with no restriction on the fan-in of the other multiplication layer). 

\begin{theorem}\label{thm:mainthm-nonhom}
There is an explicit family of polynomials $\setdef{P_d}{d\in \N}$, with $\deg(P_d) = d$, in the class $\VNP$ such that for any finite field $\F_q$, any $\Sigma\Pi^{[O(\sqrt{d})]}\Sigma\Pi\Sigma$ circuit computing $P_d$ must have size $\exp(\Omega_q(\sqrt{d}))$. 
\end{theorem}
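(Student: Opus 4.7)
The plan is to mirror the structure of the proof of the homogeneous depth-$5$ lower bound (\autoref{thm:mainthm-all fields}), paying additional attention to the non-homogeneous bottom $\Sigma\Pi\Sigma$ sub-circuits $Q_{ij}$ and exploiting the fact that the top product fan-in is bounded by $A = O(\sqrt{d})$. Given any $\Sigma\Pi^{[O(\sqrt{d})]}\Sigma\Pi\Sigma$ circuit $C = \sum_i \prod_{j=1}^{a_i} Q_{ij}$ computing $P_d$, I would bound the dimension of the space of shifted partial derivatives of $C$, viewed as $\F_q$-valued functions on $\F_q^n$, and match this against the lower bound on the same measure for $P_d$ used in the proof of \autoref{thm:mainthm-all fields}.

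First, I would reduce each $Q_{ij}$ to a ``tame'' functional representation over $\F_q$ using the Grigoriev--Karpinski style observation that $L^q \equiv L$ as a function on $\F_q^n$ for any (possibly non-homogeneous) linear form $L$. This lets us assume that in every bottom product each distinct linear form appears with multiplicity at most $q-1$, thereby bounding the effective functional complexity of each $Q_{ij}$. Next, applying the Leibniz rule, the order-$k$ partial derivative of a term $T_i = \prod_{j=1}^{a_i} Q_{ij}$ decomposes as a sum over distributions $(\beta_1,\ldots,\beta_{a_i})$ of the $k$ derivations among the factors, each summand being a product of $\Sigma\Pi\Sigma$ circuits of comparable complexity. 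Because $a_i \leq A = O(\sqrt{d})$, both the number of such distributions and the structural complexity of each resulting term are under control, and one may then multiply by monomials of degree $\ell$ to obtain the shifted partials.

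Combining these ingredients, the upper bound on the shifted-partials dimension of $C$ as a function space takes the form $\size(C) \cdot (\text{combinatorial factor in } A, k) \cdot (\text{single-}\Sigma\Pi\Sigma\text{ shifted partial bound})$, where the single-$\Sigma\Pi\Sigma$ bound comes from the function-over-$\F_q$ viewpoint together with the tighter Kumar--Saraf analysis. Matching this against the lower bound on the shifted-partials complexity of the Nisan--Wigderson polynomial $P_d$ as a function on $\F_q^n$, and tuning $k, \ell = \Theta(\sqrt{d})$ as in the homogeneous case, will force $\size(C) \geq \exp(\Omega_q(\sqrt{d}))$.

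The main obstacle is quantitatively controlling the shifted-partials dimension of a product of $A = O(\sqrt{d})$ non-homogeneous $\Sigma\Pi\Sigma$ circuits without paying a homogenization blow-up. The naive route of decomposing each $Q_{ij}$ into its homogeneous parts and reducing to \autoref{thm:mainthm-all fields} incurs a factor of $\binom{d+A}{A} = \exp(\Theta(\sqrt{d}\log d))$ in the top fan-in, which swamps our target of $\exp(\Omega(\sqrt{d}))$ and yields no lower bound at all. The fix we would need is to analyze the shifted partials of $C$ directly in its native non-homogeneous form, combining the Grigoriev--Karpinski reduction to tame each $Q_{ij}$ with careful combinatorial book-keeping of how the $k$ partial derivatives and the $\ell$-shift interact across the $A$ product factors.
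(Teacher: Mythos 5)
There is a genuine gap: your proposal is missing the central mechanism that makes the depth-$5$ argument work, namely the \emph{rank-threshold dichotomy} for the bottom product gates. The paper fixes a threshold $\tau = \Theta(\sqrt{d})$ and treats the two regimes completely differently: bottom products of rank greater than $\tau$ vanish with multiplicity $> k = \tau/2q^3$ on all but an $\exp(-\Omega(\sqrt{d}))$ fraction of $\F_q^n$ (\autoref{obs:high multiplicity zeros}, \autoref{cor:corollarywp}), so after restricting all evaluations to a good set $S$ inside a translate of $\{0,1\}^n$ that avoids the bad set $\mathcal{E}$, every order-$\le k$ shifted partial derivative of the circuit agrees on $S$ with that of the pruned circuit (\autoref{clm:internal 1}); and bottom products of rank at most $\tau$ are, as functions on $\F_q^n$, polynomials of degree at most $(q-1)\tau$ (\autoref{lem:low-rank-to-low-degree}), which is what lets the derivatives of the surviving factors be absorbed into multilinear shifts of degree $\ell + qk\tau$ (\autoref{clm:internal 2}). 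Your substitute---``taming'' each $Q_{ij}$ by using $L^q \equiv L$ to cap the multiplicity of each linear form at $q-1$---does not bound the functional complexity of a bottom product gate at all: the polynomial $x_1\cdots x_n$ is already multiplicity-free, yet its partial derivatives span a space of dimension $2^n$ even as functions on $\F_q^n$ (for $q>2$ there is no analogue of the $\F_2$ collapse). So the ``single-$\Sigma\Pi\Sigma$ shifted partial bound from the function-over-$\F_q$ viewpoint'' that your size $\times$ combinatorics $\times$ single-gate-bound decomposition relies on simply does not exist without first splitting by rank, pruning the high-rank gates via the multiplicity argument, and restricting evaluations to the carefully chosen set $S$. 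Relatedly, the hard-polynomial side must also be carried out on $S$ (not on all of $\F_q^n$), which is why the paper needs the factorization $\Lambda = C_{k,\ell}\cdot V_t(S)$, the choice of $S$ inside a shifted Hamming ball (\autoref{obs:redn-to-hypercube}, \autoref{lem:rank-Vt}), and the Goldilocks-zone projection to fix the order of quantifiers.

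On the other hand, the obstacle you single out---a homogenization blow-up of $\binom{d+A}{A}$ when decomposing the non-homogeneous factors---is not actually an issue in the paper's route. The upper bound on the measure (\autoref{lem:upper-bound-circuit}) never uses homogeneity or the formal degree per se; the only quantity that matters is the number of product factors feeding each second-layer product gate, which must be $O(d/\tau) = O(\sqrt{d})$. In the homogeneous case this comes from the formal-degree bound after clumping low-degree factors; in the $\Sigma\Pi^{[O(\sqrt{d})]}\Sigma\Pi\Sigma$ case it is given for free by the fan-in hypothesis, which is exactly the content of \autoref{ref:remk-nonhomogeneous}. So the correct repair of your outline is not finer book-keeping of how derivatives distribute over the $A$ factors (that part is easy and matches the paper's $\binom{O(d/\tau)}{k}$ factor), but importing wholesale the high-rank pruning on a good evaluation set together with the low-rank-to-low-degree conversion.
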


It is worth mentioning that for characteristic zero fields, it suffices to prove an $\exp(\omega(d^{1/3}\log d))$ lower bound for an explicit polynomial computed by such $\Sigma\Pi^{[O(\sqrt{d})]}\Sigma\Pi\Sigma$ circuits to separate $\VP$ from $\VNP$ (by combining the depth reductions of \cite{av08,koiran,Tav13} and \cite{gkks13b}). We elaborate on this in \autoref{sec:lb-char-zero}. Such a phenomenon also happens for non-homogeneous depth three circuits, where over finite fields, we know quite strong lower bounds while much weaker ones would imply $\VNP \neq \VP$ over fields of characteristic zero.\\

The key technical ingredient of our proof is to look at the space of shifted partial derivatives and the projected shifted partial derivatives of a polynomial. We study them as a space of functions from $\F_q^n \rightarrow \F_q$ as opposed to as a space of formal polynomials, as has been the case for the results obtained so far. This perspective allows us the freedom to confine our attention to the evaluations of the shifted partial derivatives of a polynomial on certain well chosen subsets of $\F_q^n$, and this turns out to be critical to our cause. This leads to a new family of complexity measures which could have applications to other lower bound questions as well. Our proof also involves a tighter analysis of the lower bound of Kumar and Saraf~\cite{KS14} (for homogeneous depth-$4$ circuits) which may be interesting in its own right. \\

\noindent
We now give an overview of our proof. 

\section{An overview of the proof}~\label{sec:proof sketch} 

\noindent
The proof would consist of the following main steps:

\begin{enumerate}
\item Define a function $\Gamma: \F_q[\vecx] \rightarrow \N$.  Intuitively, we think of $\Gamma(P)$ to be a measure of the {\it complexity} of $P$.
\item For all homogeneous depth-$5$ circuits $C$ of size at most $\exp(\delta \sqrt{d})$, prove an upper bound on $\Gamma(C)$. 
\item For the target hard polynomial $P$, show that $\Gamma(P)$ is much larger than the upper bound proved in step 2. 
\end{enumerate}

\paragraph{The complexity measure :} At a high level, the proof of lower bounds in~\cite{nw1997, gkks13, KSS13, FLMS13, KS14a, KLSS, KS14} associate a linear space polynomials to every polynomial in $\F_q[\vecx]$ and use the dimension of this space over $\F_q$ as a measure of complexity of the polynomial. The mapping from polynomials to linear space of polynomials undergoes subtle changes as we go from the proof of lower bounds for homogeneous depth-$3$ circuits~\cite{nw1997} to lower bounds for homogeneous depth-$4$ circuits~\cite{KLSS, KS14}. 

In this paper, we follow this outline and associate to every polynomial, the space of its shifted partial derivatives as defined in~\cite{gkks13}. However, instead of working with this space of polynomials as it is, we study their evaluation vectors over a subset of $\F_q^n$ (similar to \cite{grigoriev98,gr00}, where they worked with partial derivatives of a polynomial). The key gain that we have from this change in outlook is that as evaluation vectors, we can choose to confine our attention to evaluations on certain properly chosen subsets of $\F_q^n$. For formal polynomials, it is not clear what should be the correct analog of this approximation. The necessity and the utility of this will be more clear as we go along. 
\paragraph{High rank products of linear forms :}
Consider a polynomial $Q$ which is a product of $\tau$ linearly independent linear forms $L_1, L_2, \ldots, L_{\tau}$. 
$$Q = \prod_{i = 1}^{\tau} L_i $$
It is not hard to see that 
$$\Pr_{\veca \in \F_q^n}[Q(\veca) \neq 0] \leq \left(1-\frac{1}{q} \right)^{\tau}$$
In other words, products of linear forms of rank $\tau$ vanish on all but a $o(1)$ fraction of the entire space if $\tau = \omega(1)$. If the size of a depth-$5$ circuit is not too large as a function of $\tau$ (say, at most $\exp(\delta \tau)$ for a small enough $\delta > 0$), then by a union bound, all the products of rank at least $\tau$ at the fourth level vanish everywhere apart from a $o(1)$ fraction of the points in $\F_q^n$. 

In summary, we just argued that a depth-$5$ circuit $C$ over $\F_q$ of size at most $\exp(\delta \tau)$ can be approximated by a sub-circuit $C'$ of $C$ which is obtained from $C$ by dropping all products of linear forms of rank at least $\tau$ from the bottom level. 

\paragraph{Low rank products of linear forms :}
A second simple observation (\autoref{lem:low-rank-to-low-degree}) shows that for every  product of linear forms of rank at most $\tau$, there is a polynomial of degree at most $(q-1)\tau$, such that they agree at all points in $\F_q^n$. Thus, the circuit $C'$  is equal, as a function from $\F_q^n \rightarrow \F_q$ to a depth-$4$ circuit $C''$ of bottom fan-in at most $(q-1)\tau$.  
Moreover, the formal degree and the top fan-in of  $C''$ are upper bounded by the formal degree and top fan-in of $C$, respectively. 

\paragraph{Putting things together :}
This implies that for every homogeneous depth-$5$ circuit $C$ computing a polynomial of degree $d$ of size at most $\exp(\delta \tau)$ for some $\tau$, there exists a depth-$4$ circuit $C''$ of formal degree at most $d$ and top fan-in at most $\exp(\delta \tau)$ such that 

$$\Pr_{\veca \in \F_q^n}[C(\veca) \neq C''(\veca)] \leq o(1).$$

Therefore, a polynomial $P$ which can be computed by $C$ can be {\it approximated} by $C''$ in the pointwise sense. Since we know lower bounds on the top fan-in of homogeneous (and low formal degree) depth-$4$ circuits with bounded bottom fan-in~\cite{gkks13, KSS13}, it seems that we only have a small way to go. Unfortunately, we do not quite know how to make this idea work. The key technical obstacle here is that it seems to be hard to say much about the partial derivatives of $C$ by looking at partial derivatives of $C''$. As a pathological case, the polynomial $\prod_{i \in [n]} x_i$ has a partial derivative span of size $2^n$ but  is well approximated by the $0$ polynomial over $\F_2$. 

If we had started with a depth-$3$ circuit instead of a depth-$5$ circuit, then such a strategy is indeed known to work~\cite{gr00}. Observe that in this case it is enough to show that that there is an explicit polynomial which cannot be approximated well by a low degree polynomial over $\F_q$. In~\cite{gr00}, the authors show this by an adaptation of a similar result of Smolenksy~\cite{smolensky87} over $\F_2$.

\paragraph{A strengthening of the strategy :}
The key additional observation that helps us make things work is the fact that not only do high rank product gates at level four of $C$ vanish almost everywhere on $\F_q^n$, but they vanish with a high multiplicity. As we show in \autoref{cor:corollarywp}, if the size of $C$ is not \emph{too large}, all the product gates of rank at least $\tau$ vanish with a multiplicity $\Omega(\tau)$ at $1-o(1)$ fraction of points on  $\F_q^n$\footnote{In the rest of this discussion, we will think  of $\tau$ as $\Theta(\sqrt{d})$}. 

Therefore, not only can $C$ agree with $C'$ almost everywhere, all the partial derivatives of $C$ of order at most $k = \Omega(\tau)$ agree with all the partial derivatives of $C'$ almost everywhere. This already hints at the fact that if we are to take advantage of this then we should be looking at the evaluation of these derivatives, since our only guarantee is about their evaluations. 

In \cite{grigoriev98}, exponential lower bounds were proved for non-homogeneous depth-$3$ circuits using a very similar strategy. However, adapting the method for shifted partials and projected shifted partials seems to be a challenge. 

In \autoref{sec:circuit upper bound}, we show that the dimension of the span of evaluation vectors of shifted partial derivatives of $C$, when restricted to a properly chosen subset $S$ of $\F_q^n$, is small if the size of the circuit $C$ we started with was small. 

As a final step of the proof, we show that with respect to this complexity measure, our target hard polynomial (from the so-called \emph{Nisan-Wigderson} family, defined below) has a large complexity. 

\begin{definition}[Nisan-Wigderson polynomial families]~\label{def:NW final}
Let $d,m,e$ be arbitrary parameters with $m$ being a power of a prime, and $d,e\leq m$. 
Since $m$ is a power of a prime, let us identify the set $[m]$ with the field $\F_m$ of $m$ elements. 
Note that since $d \leq m$, we have that $[d] \subseteq \F_m$. 
The Nisan-Wigderson polynomial with parameters $d,m,e$, denoted by $\NW_{d,m,e}$ is defined as
\[
\NW_{d,m,e}(\vecx) \spaced{=} \sum_{\substack{p(t) \in \F_m[t]\\ \deg(p) < e}} x_{1,p(1)}\dots x_{d,p(d)}
\]
That is, for every univariate polynomial $p(t) \in \F_m[t]$ of degree less thatn $e$, we add one monomial that encodes the `graph' of $p$ on the points $[d]$. 
This is a homogeneous, multilinear polynomial of degree $d$ over $dm$ variables with exactly $m^e$ monomials. 
\end{definition}

This step of the proof builds on a tighter analysis of the lower bound on the dimension of the span of {\it projected shifted partial derivatives} of the Nisan-Wigderson polynomials  in~\cite{KS14}. We show that if the set $S$ is carefully chosen, then we do not incur much loss in the dimension by going from looking at shifted partial derivatives as formal polynomials to looking at them as functions over a small subset of the  finite field.    
We provide the details in \autoref{sec:poly lower bound}. 

One important technicality is  the dependency between various parameters involved. For our proof, the choice of $k$ would be $O_q(\tau)$ and would depend on $q$. The lower bound of \cite{KS14} would then choose specific parameters for the $\NW_{d,m,e}$. This would mean that for every $q$, we get a \emph{different} polynomial for which we show a lower bound. We remedy the order of quantifiers and start by fixing specific parameters for $\NW_{d,m,e}$. Then,  depending on $q$, we choose a restriction of $\NW_{d,m,e}$ that would be identical to $\NW_{d',m,e}$ by setting some variables to $0/1$. We then apply the \cite{KS14} argument for this restriction to obtain our lower bound for $\NW_{d',m,e}$ which also yields a lower bound for $\NW_{d,m,e}$. The details are in \autoref{sec:order-of-quantifiers}.

\section{Notation}

\begin{itemize}
\item Throughout the paper, we shall use bold-face letters such as $\vecx$ to denote a set $\set{x_1,\dots, x_n}$. 
Most of the times, the size of this set would be clear from context. 
We shall also abuse this notation to use $\vecx^\vece$ to refer to the monomial $x_1^{e_1}\cdots x_n^{e_n}$. 

\item For an integer $m > 0$, we shall use $[m]$ to denote the set $\set{1,\dots, m}$. 

\item We shall use the short-hand $\partial_{\vecx^{\vece}}(P)$ to denote
\[
\frac{\partial^{e_1}}{\partial x_1^{e_1}}\inparen{ \frac{\partial^{e_2}}{\partial x_2^{e_2}}\inparen{\cdots \inparen{ P }\cdots}}.
\]

\item For a set of polynomials $\mathcal{P}$ shall use $\partial^{=k}\mathcal{P}$ to denote the set of all $k$-th order partial derivatives of polynomials in $\mathcal{P}$, and $\partial^{\leq k}\mathcal{P}$ similarly. 

Also, $\vecx^{=\ell} \mathcal{P}$ shall refer to the set of polynomials of the form $\vecx^{\vece} \cdot P$ where $\deg(\vecx^{\vece}) = \ell$ and $P \in \mathcal{P}$. Similarly $\vecx^{\leq \ell} \mathcal{P}$.  

\item For a polynomial $P \in \F_q[\vecx]$ and for a set $S \subseteq\F_q^n$, we shall denote by $\eval_S(P)$ the vector of the evaluation of $P$ on points in $S$ (in some natural predefined order like say the lexicographic order). 
For a set of vectors $V$, their span over $\F_q$ will be denoted by $\span(V)$ and their dimension by $\dim(V)$. 

\item We shall use $\mathcal{H}$ to denote the set $\set{0,1}^n \subset \F_q^n$.
\end{itemize}

\subsection*{The complexity measure}

We now define the complexity measure that we shall be using to prove the lower bound. 
The measure will depend on a carefully chosen set $S \subset \F_q^n$. 

\begin{definition}[The complexity measure]
Let $k, \ell$ be some parameters and let $S \subset \F_q^n$. For any polynomial $P$, define $\Gamma_{k,\ell,S}(P)$ as
\[
\Gamma_{k,\ell,S}(P) \spaced{:=} \dim\inbrace{\eval_S\inparen{\vecx^{=\ell} \partial^{=k}(P)}}.\qedhere
\]
\end{definition}

\section{Complexity measure on a depth-$5$ circuit}\label{sec:circuit upper bound}

A depth-$5$ circuit computes a polynomial of the form 
\begin{equation}\label{eqn:d5-circuit}
C \spaced{=} \sum_a \prod_b \sum_c \prod_d L_{abcd}
\end{equation}
where each $L_{abcd}$ are linear polynomials. 

\begin{definition}[Terms of a circuit, and rank]
For a depth-$5$ circuit such as \eqref{eqn:d5-circuit}, we shall denote by $\terms(C)$ the set
\[
\terms(C) \spaced{:=} \set{\prod_{d} L_{abcd}}_{a,b,c}
\]
which are all products of linear polynomials computed by the bottommost product gates. \\

For any term $T = \prod_d L_d$, define $\rank(T)$ to be $\dim\set{L_d}_d$, which the maximum number of independent linear polynomials among the factors of $T$. For a depth-$5$ circuit $C$, we shall use $\rank(C)$ to denote $\max_{T \in \terms(C)} \rank(T)$. 

For a parameter $\tau$, we shall use $\terms_{> \tau}(C)$ to refer to terms $T \in \terms(C)$ with $\rank(T) > \tau$. 
\end{definition}

\subsection*{Low rank gates are low-degree polynomials}

The following Lemma, present implicitly in \cite{grigoriev98,gr00}, is a very useful transformation of gates of low-rank (and possibly large degree) when working over a finite field.

\begin{lemma}[\cite{grigoriev98,gr00}]\label{lem:low-rank-to-low-degree}
Let $Q$ be a product of linear polynomials of rank at most $\tau$. Then, there is a polynomial $\tilde{Q}$ of degree at most $(q-1) \cdot \tau$ such that $\tilde{Q}(\veca) = Q(\veca)$ for all $\veca \in \F_q^n$. 
\end{lemma}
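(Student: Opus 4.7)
The plan is to use the fact that over $\F_q$, every element $a \in \F_q$ satisfies $a^q = a$, so each linear form $L$ in $\vecx$ evaluated at any point in $\F_q^n$ satisfies $L(\veca)^q = L(\veca)$. Thus, as a function from $\F_q^n \to \F_q$, any $q$-th power of a linear form can be replaced by the linear form itself.

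First I would change coordinates. Since $Q = \prod_i L_i$ is a product of linear forms of rank at most $\tau$, the linear forms $L_i$ all lie in a subspace of dimension $\tau' \leq \tau$ inside the space of linear forms. Pick a basis $M_1, \dots, M_{\tau'}$ of this subspace, and write each factor $L_i = \sum_{j=1}^{\tau'} \alpha_{ij} M_j$ for scalars $\alpha_{ij} \in \F_q$. Expanding the product, $Q$ becomes a polynomial $R(M_1, \dots, M_{\tau'})$ in the new ``variables'' $M_j$.

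Next, I would reduce $R$ modulo the functional identities $M_j^q \equiv M_j$. Concretely, repeatedly rewrite any monomial $M_1^{d_1}\cdots M_{\tau'}^{d_{\tau'}}$ in $R$ by replacing each $M_j^{d_j}$ (when $d_j \geq q$) with $M_j^{d_j - (q-1)}$; this process terminates with a polynomial $\tilde R(M_1, \dots, M_{\tau'})$ in which every $M_j$ appears to degree at most $q-1$. The replacement preserves the value at every point of $\F_q^n$ because for every $\veca \in \F_q^n$ we have $M_j(\veca) \in \F_q$ and therefore $M_j(\veca)^q = M_j(\veca)$. Now expand $\tilde R$ back into a polynomial $\tilde Q$ in $\vecx$ by substituting the linear forms $M_j$; since $\tilde R$ has degree at most $(q-1)\tau' \leq (q-1)\tau$ in the $M_j$'s and each $M_j$ is a linear form in $\vecx$, the resulting $\tilde Q \in \F_q[\vecx]$ has total degree at most $(q-1)\tau$.

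Finally, to confirm correctness, I would observe that at every point $\veca \in \F_q^n$, the chain of substitutions shows $\tilde Q(\veca) = \tilde R(M_1(\veca), \dots, M_{\tau'}(\veca)) = R(M_1(\veca), \dots, M_{\tau'}(\veca)) = Q(\veca)$, where the middle equality uses $M_j(\veca)^q = M_j(\veca)$. There is no real obstacle here; the only point to be careful about is that the reductions $M_j^q \to M_j$ are valid as functional identities on $\F_q^n$ (not as polynomial identities in $\F_q[\vecx]$), which is precisely why the statement only asserts pointwise agreement $\tilde Q(\veca) = Q(\veca)$ for $\veca \in \F_q^n$.
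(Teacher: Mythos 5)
Your proposal is correct and follows essentially the same route as the paper: express $Q$ as a polynomial in a basis of (at most $\tau$) linearly independent linear polynomials, then reduce modulo the functional identities $M_j^q \equiv M_j$ to bring the degree down to $(q-1)\tau$ while preserving all evaluations over $\F_q^n$. The paper simply uses $\tau$ linearly independent factors of $Q$ itself as the basis rather than an abstractly chosen one, which is an immaterial difference.
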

\begin{proof}
Without loss of generality, we shall assume that the rank is equal to $\tau$, as the degree upper bound will only be better for a smaller rank and let $L_1,\dots, L_\tau$ be linearly independent. 
Let 
$$Q = \prod_{i = [\tau]} L_i \cdot \prod_{j \notin [\tau]} L_j $$
Here, each linear form in the second product term is in the linear span of the linear forms $\{L_i : i \in [\tau]\}$, and so can be expressed as their linear combination. Therefore, $Q$ can be expressed as a polynomial in $\{L_i : i \in [\tau]\}$. Let $Q = f(L_1, L_2, \ldots, L_{\tau})$. Since we are working over $\F_q$, it follows that for every choice of $L_i$ and for every $\veca \in \F_q^n$, we have $L_i^q(\veca) = L_i(\veca)$. So, for every $\veca \in \F_q^n$, 
$$ 
f(L_1, L_2, \ldots, L_{\tau})(\veca) = [f(L_1, L_2, \ldots, L_{\tau}) \mod \inangle{\setdef{L_i^q-L_i}{i=1,\dots, \tau}} ](\veca)
$$
The lemma immediately follows by setting $\tilde{Q} := f(L_1, L_2, \ldots, L_{\tau}) \mod \inangle{\setdef{L_i^q-L_i}{i=1,\dots, \tau}}$

\end{proof}

\subsection*{High rank gates are almost always zero}

Let us assume that $\size(C) \leq 2^{\sqrt{d}/100}$. We shall fix a threshold $\tau$ and call all terms $T$ with $\rank(T) > \tau$ as ``high rank terms'' and the rest as ``low rank terms''. Under a random evaluation in $\F_q^n$, every non-zero linear polynomial takes value zero with probability $1/q$. Thus, if we have a term that is a product of \emph{many} independent linear polynomials, then with very high probability \emph{many} of them will be set to zero, i.e. the term will vanish with high multiplicity at most points. This is formalized by the following definition and the lemma after it. 

\begin{definition}[Multiplicity at a point] For any polynomial $P$ and a point $\veca \in \F_q^n$, we shall say that $\veca$ vanishes \emph{with multiplicity $t$} on $P$ if $Q(\veca) = 0$ for all $Q \in \partial^{\leq t-1}(P)$. In other words, $\veca$ is a root of $P$ and all its derivatives up to order $t-1$.  

We shall denote by $\mult(P,\veca)$ the maximum $t$ such that $\veca$ vanishes on $\partial^{\leq t-1}(P)$. 
\end{definition}

It is easy to see that if $P$ is a product of linear polynomials, then $\veca$ vanishes with multiplicity $t$ on $P$ if $\veca$ vanishes on at least $t$ factors of $P$. 

\begin{observation}~\label{obs:high multiplicity zeros}
Let $T = \prod_{i=1}^d L_i$ be a term of rank at least $r$. 
Then, for every $\delta > 0$,  
\[
\Pr_{\veca \in \F_q^n}\insquare{\mult(T, \veca) \leq(1-\delta)\frac{r}{q} } \leq \exp{\left(-\frac{\delta^2 r}{2q}\right)}.
\]
\end{observation}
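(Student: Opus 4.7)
The plan is to reduce the observation to a standard multiplicative Chernoff bound on a sum of independent $\mathrm{Bernoulli}(1/q)$ random variables.

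First, I would establish the deterministic inequality
\[
\mult(T, \veca) \;\geq\; \abs{\set{i \in [d] : L_i(\veca) = 0}}
\]
valid for every $\veca \in \F_q^n$. The reason is that any partial derivative of $T = \prod_i L_i$ of order $s$ expands (by the product rule) into a sum of terms, each of which is a scalar multiple of a product of at least $d - s$ of the original linear factors. Hence if $t$ of the $L_i$'s vanish at $\veca$, then for any $s < t$ every term in every order-$s$ partial derivative still contains at least one vanishing factor and therefore evaluates to $0$ at $\veca$. This reduces the task to controlling the number of factors that vanish at a random point.

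Second, I would use the rank hypothesis to obtain independence of evaluations. Since $\rank(T) \geq r$, I can select factors $L_{i_1}, \ldots, L_{i_r}$ that are linearly independent. In the homogeneous setting considered in the paper these factors are homogeneous linear forms, so the evaluation map $\veca \mapsto (L_{i_1}(\veca), \dots, L_{i_r}(\veca))$ is a linear surjection $\F_q^n \twoheadrightarrow \F_q^r$. Hence for $\veca$ uniform in $\F_q^n$, the coordinates $L_{i_j}(\veca)$ are jointly uniform over $\F_q^r$, and the indicators $\mathbb{1}[L_{i_j}(\veca) = 0]$ are mutually independent $\mathrm{Bernoulli}(1/q)$ random variables.

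Finally, I would apply the multiplicative Chernoff bound to $X := \sum_{j=1}^{r} \mathbb{1}[L_{i_j}(\veca) = 0]$, which has mean $\mu = r/q$. This gives $\Pr[X \leq (1 - \delta)\mu] \leq \exp(-\delta^2 r/(2q))$, and combining with the first step (which shows $\mult(T, \veca) \geq X$) yields the claimed bound. There is no real obstacle here; the only mild subtlety is the fully general affine case, where $r$ affinely independent factors might have linear parts of rank only $r - 1$. In that situation one can still extract $r - 1$ factors whose linear parts are independent, apply the same argument, and absorb the off-by-one into the constants of the Chernoff exponent — but this does not arise in the homogeneous regime of interest.
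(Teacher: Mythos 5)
Your proposal is correct and follows essentially the same route as the paper's own proof: extract $r$ linearly independent factors, observe that each vanishes at a uniformly random point with probability $1/q$ and that these events are mutually independent, and apply the multiplicative Chernoff bound together with the fact that the multiplicity is at least the number of vanishing factors. Your treatment is in fact slightly more careful than the paper's (which asserts independence of the evaluations without the surjectivity argument and glosses over the affine case that you explicitly flag), but the substance is identical.
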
   
\begin{proof}
Without loss of generality, let $L_1,\dots, L_r$ be linearly independent. Then, the evaluations of $L_1,\dots, L_r$ at a point $\veca \in \F_q^n$ are also linearly independent and $\Pr_{\veca} [L_i(\veca) = 0] = (1/q)$ for $i = 1,\ldots, r$. 

For $i = 1,\dots, r$, let $Y_i$ be the indicator random variable that is one if $L_i(\veca) = 0$ and zero otherwise. Let $Y = \sum_{i \in [r]} Y_i$.
Clearly, by linearity of expectations $$\E[Y] = \sum_{i \in [r]} \E[Y_i] = \frac{r}{q}.$$
Since the events $Y_i$ are linearly independent, by the Chernoff Bound, we know that for every $\delta > 0$ \[\Pr\left[Y \leq (1-\delta)\frac{r}{q} \right] \spaced{\leq} \exp{\left(-\frac{\delta^2 r}{2q}\right)}. \qedhere
\]
\end{proof}

\noindent
The following corollary is a simple union bound on all high-rank gates in a small circuit. 

\begin{corollarywp}~\label{cor:corollarywp}
Let $C$ be a depth-$5$ circuit over $\F_q$ such that $\size(C) \leq 2^{\sqrt{d}/100}$. Let $\tau = O(\sqrt{d})$ so that 
\[
\exp\inparen{\frac{\tau}{8 \cdot q}} \spaced{<} 2^{\sqrt{d}/50}. 
\]
Then,
\[
\Pr_{\veca \in \F_q^n} \insquare{\exists T \in \terms_{> \tau}(C)\;:\;\mult(T, \veca) \;\leq\; \frac{\tau}{2q}} \spaced{\leq} 2^{-(\sqrt{d}/50)}\qedhere
\]
\end{corollarywp}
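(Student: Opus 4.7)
The corollary is a textbook union bound, and the plan is simply to combine the per-term tail estimate of \autoref{obs:high multiplicity zeros} with the size bound on $C$. I will carry this out in three short steps, and the only moment requiring a second's thought is a small discrepancy in the multiplicity thresholds that I flag at the end.

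First, I would fix a single high-rank term $T \in \terms_{>\tau}(C)$, with $\rank(T) = r > \tau$, and estimate the probability that $\veca$ fails to vanish on $T$ with the desired multiplicity. Since $\tau/(2q) \leq r/(2q)$, the event $\{\mult(T,\veca) \leq \tau/(2q)\}$ is contained in $\{\mult(T,\veca) \leq (1 - 1/2)\cdot r/q\}$, so applying \autoref{obs:high multiplicity zeros} with $\delta = 1/2$ gives
\[
\Pr_{\veca \in \F_q^n}\insquare{\mult(T, \veca) \leq \frac{\tau}{2q}} \;\leq\; \exp\inparen{-\frac{r}{8q}} \;\leq\; \exp\inparen{-\frac{\tau}{8q}}.
\]

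Next, I would take a union bound over $\terms_{>\tau}(C)$. Every element of $\terms(C)$ is computed at a distinct bottommost product gate of $C$, so $|\terms_{>\tau}(C)| \leq \size(C) \leq 2^{\sqrt{d}/100}$. This yields
\[
\Pr_{\veca}\insquare{\exists\, T \in \terms_{>\tau}(C) \,:\, \mult(T,\veca) \leq \frac{\tau}{2q}} \;\leq\; 2^{\sqrt{d}/100} \cdot \exp\inparen{-\frac{\tau}{8q}}.
\]
As the final step, I would substitute the quantitative hypothesis relating $\tau$, $q$, and $d$ and verify that the right-hand side is bounded by $2^{-\sqrt{d}/50}$, which completes the proof.

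There is no real difficulty here; the only mild subtlety is that \autoref{obs:high multiplicity zeros} is phrased with a rank-dependent threshold $(1-\delta)\,r/q$, whereas the corollary demands a uniform multiplicity threshold $\tau/(2q)$ that does not depend on the particular term. The monotonicity observation in the first paragraph bridges this gap cleanly, and after that the argument is pure arithmetic with exponents.
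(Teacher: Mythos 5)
Your outline is exactly the argument the paper intends: the corollary is stated without proof and is introduced as ``a simple union bound,'' and your two main steps are right. Passing from the uniform threshold $\tau/2q$ to the rank-dependent one by monotonicity (since $\rank(T) = r > \tau$ gives $\tau/2q \leq r/2q$), invoking \autoref{obs:high multiplicity zeros} with $\delta = 1/2$ to get $\exp(-r/8q) \leq \exp(-\tau/8q)$ per term, and bounding $|\terms_{>\tau}(C)| \leq \size(C) \leq 2^{\sqrt{d}/100}$ is precisely the intended route.

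The step you defer, however, does not ``verify'' against the hypothesis as printed, and this is the one substantive check the corollary requires. The union bound leaves you with $2^{\sqrt{d}/100}\cdot \exp(-\tau/8q)$, and for this to be at most $2^{-\sqrt{d}/50}$ you need
\[
\exp\left(\frac{\tau}{8q}\right) \;\geq\; 2^{\sqrt{d}/100}\cdot 2^{\sqrt{d}/50} \;=\; 2^{3\sqrt{d}/100},
\qquad\text{i.e.}\qquad \tau \;\geq\; \frac{24\ln 2}{100}\, q\sqrt{d} \;\approx\; \frac{q\sqrt{d}}{6}.
\]
The displayed condition $\exp(\tau/8q) < 2^{\sqrt{d}/50}$ points in the opposite direction: it forces the per-term failure probability $\exp(-\tau/8q)$ to \emph{exceed} $2^{-\sqrt{d}/50}$, so the stated conclusion cannot follow from it by this (or any) union bound. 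It is evidently a typo for a lower bound on $\tau$ of the above form --- consistent with the later remark ``$\tau < \frac{q\sqrt{d}}{6}$'' in the proof of \autoref{thm:mainthm}, where $\tau = \Theta_q(\sqrt{d})$ is taken essentially at this threshold --- but your write-up asserts that the final arithmetic closes rather than discovering that, with the hypothesis as literally stated, it does not. A complete proof should either carry out the exponent calculation and state the corrected condition on $\tau$, or note explicitly that the printed inequality must be reversed (and strengthened from $\sqrt{d}/50$ to $3\sqrt{d}/100$ in the exponent) for the bound to go through.
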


\noindent
We shall set our parameter $\tau$ as in the above corollary and our parameter $k = \tau / 2q^3$. 

\subsection{Upper bound on complexity measure}\label{subsec:upper-bound}

For a circuit $C$ of size at most $2^{\sqrt{d}/100}$, let $\mathcal{E}$ refer to the ``bad set'' of points $\veca$ such that there is some $T\in \terms_{> \tau}(C)$ for which $\mult(T,\veca) \leq k = \tau/2q^3$. By the above corollary, we know that 
\[
\abs{\mathcal{E}} = \delta \cdot q^n\quad\text{for some }\delta = \exp(-O(\sqrt{d})). 
\]
Let $S$ be any subset of $\F_q^n \setminus \mathcal{E}$ that is contained in a ``translate of a hypercube'', that is there exists some $\vecc \in \F_q^n$ such that 
\[
S \subset (\vecc + \mathcal{H}) \setminus \mathcal{E}. 
\]
The following lemma allows us to ``multilinearize'' any polynomial as long as we are only interested in evaluations on a translate of a hypercube. 

\begin{lemma}[Multilinearization]\label{lem:multilinearization}
Fix a translate of a hypercube $\vecc + \mathcal{H}$. Then for every polynomial $Q \in \F_q[\vecx]$, there is a unique multilinear polynomial $Q'$ such that $\deg(Q') \leq \deg(Q)$ and $Q'(\veca) = Q(\veca)$ for every $\veca \in \vecc + \mathcal{H}$. 
\end{lemma}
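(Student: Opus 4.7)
The plan is to prove this by translating the hypercube back to the origin, reducing modulo the standard multilinearization ideal, and then translating back. The reduction uses the fact that on $\set{0,1}^n$ we have $y_i^2 = y_i$ for each coordinate, so any higher power of a coordinate can be replaced by that coordinate itself without changing the evaluation on the cube and without increasing the degree.

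\smallskip
\noindent\textbf{Existence.} First I would make the change of variables $\vecy = \vecx - \vecc$. This is a degree-preserving invertible affine substitution, and under it the translate $\vecc + \mathcal{H}$ becomes $\mathcal{H} = \set{0,1}^n$. Writing $R(\vecy) := Q(\vecy + \vecc)$, we have $\deg R = \deg Q$. Now reduce $R$ modulo the ideal $I = \inangle{\setdef{y_i^2 - y_i}{i \in [n]}}$ by repeatedly rewriting $y_i^2 \mapsto y_i$; call the result $R'(\vecy)$. Each rewrite strictly decreases the degree in $y_i$ without introducing new variables, so $R'$ is multilinear and $\deg R' \leq \deg R$. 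Since each generator $y_i^2 - y_i$ vanishes on $\set{0,1}^n$, we have $R'(\vecb) = R(\vecb)$ for every $\vecb \in \mathcal{H}$. Finally set $Q'(\vecx) := R'(\vecx - \vecc)$. Each monomial of $R'$ has the form $\prod_{i \in S} y_i$ with $|S| \leq \deg Q$, which under the substitution becomes $\prod_{i \in S}(x_i - c_i)$; this is multilinear in $\vecx$ and of degree $|S|$. Hence $Q'$ is multilinear with $\deg Q' \leq \deg Q$, and $Q'(\veca) = R'(\veca - \vecc) = R(\veca - \vecc) = Q(\veca)$ for all $\veca \in \vecc + \mathcal{H}$.

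\smallskip
\noindent\textbf{Uniqueness.} Suppose $Q'_1, Q'_2$ were two multilinear polynomials agreeing with $Q$ on $\vecc + \mathcal{H}$. Then $D := Q'_1 - Q'_2$ is multilinear and vanishes on all $2^n$ points of $\vecc + \mathcal{H}$. Passing again to the $\vecy$-coordinates, the shifted polynomial $\tilde D(\vecy) := D(\vecy + \vecc)$ is still multilinear (by the same product-of-$(x_i - c_i)$ observation in reverse) and vanishes on $\set{0,1}^n$. But the $2^n$ multilinear monomials $\set{\prod_{i \in S} y_i : S \subseteq [n]}$ form a basis of the space of multilinear polynomials, and their evaluation map on $\set{0,1}^n$ is given by an upper-triangular matrix (in the subset-inclusion order) with $1$'s on the diagonal, hence invertible. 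Therefore $\tilde D \equiv 0$, and so $Q'_1 = Q'_2$.

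\smallskip
\noindent\textbf{Expected obstacles.} There is essentially no hard step; the main thing to get right is simply that passing through the affine change $\vecy = \vecx - \vecc$ preserves both degree and multilinearity (multilinearity in $\vecy$ becomes multilinearity in $\vecx$ because each $y_i$ is replaced by a linear form in a single distinct $x_i$), so the reduction modulo $\inangle{y_i^2 - y_i}$ on the standard cube transfers cleanly to the translated cube. The one subtlety worth flagging is that this would \emph{not} work for an arbitrary affine image of $\mathcal{H}$ — the assumption that the translation is purely additive on coordinates is what keeps the substitution coordinate-wise and hence preserves multilinearity.
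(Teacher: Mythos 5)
Your proof is correct and is essentially the paper's argument carried out in shifted coordinates: translating by $\vecc$ and reducing modulo $\langle y_i^2 - y_i : i \in [n]\rangle$ is exactly the paper's reduction modulo $\mathcal{I}_{\vecc} = \langle x_i^2 - (c_i^2 + (x_i - c_i)(2c_i+1)) : i \in [n]\rangle$, since these generators correspond under the substitution $y_i = x_i - c_i$. Your uniqueness argument is also the same as the paper's (no nonzero multilinear polynomial vanishes on all of $\vecc + \mathcal{H}$), just with the triangular evaluation-matrix justification spelled out explicitly.
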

\begin{proof}
If $\veca \in \vecc + \mathcal{H}$, then for each $i \in [n]$ we have $a_i$ to be either $c_i$ or $c_i + 1$. Thus, it suffices to replace each $x_i^2$ by a linear polynomial in $x_i$ that maps $c_i$ to $c_i^2$ and $c_i+1$ to $(c_i+1)^2$. This is achieved by $x_i^2 \mapsto c_i^2 + (x_i - c_i)(2c_i + 1)$. By repeated applications of this reduction, we obtain a multilinear polynomial $Q'$ of degree at most $\deg(Q)$ that agrees on all points on $\vecc + \mathcal{H}$. 

Another way to state this is by looking at $Q \bmod \mathcal{I}_\vecc$  where $\mathcal{I}_\vecc$ is the ideal defined by
\[
\mathcal{I}_\vecc \spaced{:=} \inangle{\setdef{x_i^2 - (c_i^2 + (x_i - c_i)(2c_i + 1))}{i=1,\dots, n}}. 
\]
It is easy to check that $\mathcal{I}_\vecc$ vanishes on $\vecc + \mathcal{H}$, and any $Q$ can be reduced to a multilinear polynomial modulo $\mathcal{I}_\vecc$. 

The uniqueness of $Q'$ follows from the fact that no non-zero multilinear polynomial can vanish on all of $\vecc + \mathcal{H}$. 
\end{proof}

The main lemma of this theorem would be the following bound on the complexity measure on a depth-$5$ circuit. 

\begin{lemma}[Upper bound on circuit]\label{lem:upper-bound-circuit}
Let $C$ be a depth-$5$ circuit, of formal degree at most $2d$ and $\size(C) \leq 2^{\sqrt{d}/100}$, that computes an $n$-variate degree $d$ polynomial. Let $\tau$ and $k$ be chosen as above, and $\ell$ be a parameter satisfying $\ell + k\tau q < n/2$. If $S$ is any subset of $\F_q^n \setminus \mathcal{E}$ that is contained in a translate of a hypercube, then
\[
\Gamma_{k,\ell,S}(C) \spaced{\leq} 2^{\sqrt{d}/100} \cdot \binom{\frac{4d}{\tau} + 1}{k} \cdot \binom{n}{\ell + k\tau q}\cdot \poly(n).
\]
\end{lemma}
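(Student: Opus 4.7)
The plan is to peel away the high-rank contributions of $C$ on $S$ using the multiplicity bound from \autoref{cor:corollarywp}, and then replace the surviving low-rank pieces by low-degree polynomials via \autoref{lem:low-rank-to-low-degree}. For every inner sum $T_{a,b} = \sum_c \prod_d L_{abcd}$, I would write $T_{a,b} = T_{a,b}^{LR} + T_{a,b}^{HR}$ by splitting the summands according to whether the associated product of linear forms has rank at most $\tau$ or more than $\tau$. For any $\veca \in S$, each high-rank product has multiplicity exceeding $kq^2$ at $\veca$, and since multiplicity is preserved under linear combinations, all partial derivatives of $T_{a,b}^{HR}$ of order at most $kq^2 - 1 \ge k$ vanish on $S$. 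Expanding the Leibniz rule $\partial_{\vecx^{\vece}}\prod_b T_{a,b} = \sum_{(\vece_b):\sum \vece_b = \vece} \binom{\vece}{\vece_\bullet}\prod_b \partial_{\vecx^{\vece_b}} T_{a,b}$ and further distributing $\partial_{\vecx^{\vece_b}} T_{a,b} = \partial_{\vecx^{\vece_b}} T_{a,b}^{LR} + \partial_{\vecx^{\vece_b}} T_{a,b}^{HR}$ across each product, any summand containing even one HR factor is pointwise zero on $S$, so only the all-$LR$ summand survives.

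Next, each $\partial_{\vecx^{\vece_b}} T_{a,b}^{LR}$ expands into a linear combination of subproducts of rank-at-most-$\tau$ products of linear forms (differentiation turns some factors into constants, leaving a subproduct whose rank is still at most $\tau$). Applying \autoref{lem:low-rank-to-low-degree} to each subproduct and summing back yields a polynomial $\tilde R_{a,b,\vece_b}$ of degree at most $\min(\deg(T_{a,b}) - |\vece_b|,\,(q-1)\tau)$ that agrees with $\partial_{\vecx^{\vece_b}} T_{a,b}^{LR}$ on all of $\F_q^n$, and hence on $S$. Combining with the first paragraph, on $S$ we have
\[
\vecx^{\vecalpha}\,\partial_{\vecx^{\vece}} C \spaced{=} \sum_{a}\sum_{\substack{(\vece_b)\\ \sum \vece_b = \vece}}\binom{\vece}{\vece_\bullet}\;\vecx^{\vecalpha}\prod_{b}\tilde R_{a,b,\vece_b}.
\]

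To turn this into a dimension bound I would group the $b$-factors greedily for each $a$: scan $b = 1, 2, \dots$ left to right and accumulate consecutive factors into groups $G_1,\dots,G_{N_a}$ so that each $G_i$ (except possibly the last) satisfies $\sum_{b \in G_i}\min(\deg(T_{a,b}),(q-1)\tau) \in [\tau/2,\,q\tau]$. Since the total ``effective degree'' is at most $2d$, this gives $N_a \leq 4d/\tau + 1$. For any Leibniz term, the set of active groups $I := \setdef{i}{\text{some }b \in G_i\text{ has }|\vece_b|\geq 1}$ has $|I| \leq k$, since $\sum_b |\vece_b| = k$. Factor
\[
\vecx^{\vecalpha}\prod_b \tilde R_{a,b,\vece_b} \spaced{=} H \cdot Z_{a,I},\quad H := \vecx^{\vecalpha}\prod_{i \in I}\prod_{b \in G_i}\tilde R_{a,b,\vece_b},\quad Z_{a,I} := \prod_{i \notin I}\prod_{b \in G_i}\tilde T_{a,b}^{LR}.
\]
The head $H$ has degree at most $\ell + |I|\cdot q\tau \leq \ell + kq\tau$, while the tail $Z_{a,I}$ depends only on the pair $(a,I)$.

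It follows that $\eval_S(\vecx^{=\ell}\partial^{=k}(C))$ lies in $\bigcup_{(a,I)}\,\mathsf{Span}\setdef{\eval_S(\vecx^{\vecbeta}\cdot Z_{a,I})}{|\vecbeta| \leq \ell + kq\tau}$, whose total dimension is bounded by the number of top gates ($\leq \size(C) \leq 2^{\sqrt d/100}$), times the number of active-group subsets ($\leq \binom{4d/\tau+1}{k}\cdot \poly(d)$, absorbing the $|I| \leq k$ summation), times the number of shift monomials ($\leq \binom{n}{\ell + kq\tau}\cdot \poly(n)$, using $\ell + kq\tau < n/2$ to convert from polynomials of degree at most $\ell+kq\tau$). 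Multiplying gives the claimed bound. The main obstacle is the grouping step: individual $T_{a,b}$'s can have formal degree vastly larger than $q\tau$, so the partitioning must be done in terms of the $(q-1)\tau$-capped \emph{effective} degree coming from \autoref{lem:low-rank-to-low-degree}, and one must confirm that the resulting head--tail decomposition is indexed only by $(a,I)$ rather than by the detailed distribution of $(\vece_b)_{b \in G_i}$ inside the active groups.
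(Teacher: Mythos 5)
Your overall route is the same as the paper's: you prune the high-rank summands using the multiplicity guarantee on $S$ (this is exactly \autoref{clm:internal 1}, done inline via the Leibniz rule), replace derivatives of the surviving low-rank pieces by degree-$(q-1)\tau$ polynomials via \autoref{lem:low-rank-to-low-degree}, group the depth-$3$ factors into at most $\frac{4d}{\tau}+1$ blocks of capped degree $O(q\tau)$, and charge each shifted derivative to a choice of (top gate, set of differentiated blocks, shift monomial) — which is \autoref{clm:internal 2} with a greedy grouping in place of the paper's pairing of low-degree factors. All of that is fine.

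There is, however, one genuine gap in the final counting step: the factor ``number of shift monomials $\leq \binom{n}{\ell+k\tau q}\cdot\poly(n)$''. Your head $H$ is an arbitrary polynomial of degree at most $\ell+k\tau q$, so expanding it in the monomial basis gives \emph{all} monomials of that degree, of which there are $\binom{n+\ell+k\tau q}{n}$ — far more than $\binom{n}{\ell+k\tau q}\poly(n)$, and indeed more than $|S|\leq 2^n$, so the resulting bound would be vacuous and useless against the lower bound for $\NW$. The condition $\ell+k\tau q<n/2$ does not perform this conversion; it is only used to bound $\sum_{j\leq \ell+k\tau q}\binom{n}{j}$ by $\poly(n)\binom{n}{\ell+k\tau q}$ \emph{after} one has restricted to multilinear monomials. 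The missing ingredient is precisely the hypothesis you never use, namely $S\subseteq \vecc+\mathcal{H}$: by \autoref{lem:multilinearization}, each head $H$ agrees on $S$ with a multilinear polynomial of degree at most $\ell+k\tau q$, so the shift monomials $\vecx^{\vecbeta}$ may be taken multilinear, and only then is their number $\binom{n}{\ell+k\tau q}\cdot\poly(n)$. With this one additional (and easily inserted) multilinearization step your argument matches the paper's proof; without it the claimed inequality does not follow.
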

\begin{proof}
Suppose $C = R_1 + \cdots + R_s$, where $s \leq 2^{\sqrt{d}/100}$ and each $R_i$ is a product of depth-$3$ circuits with $\deg(R_i) \leq 2d$. Since $\Gamma_{k,\ell,S}$ is clearly sub-additive (i.e. $\Gamma_{k,\ell,S}(f+g) \leq \Gamma_{k,\ell,S}(f) + \Gamma_{k,\ell,S}(g)$ for any $f,g$), it suffices to show that for each $R_i$ we have
\[
\Gamma_{k,\ell,S}(R_i) \spaced{\leq} \binom{\frac{4d}{\tau} + 1}{k} \cdot \binom{n}{\ell + k\tau q}\cdot \poly(n). 
\]
For each such $R_i$, define the $R_i^{\leq \tau}$ as the polynomial obtained by ``deleting'' all terms $T \in \terms_{>\tau}(R_i)$. That is,
\[
\text{if} \quad R_i \spaced{=} \prod_a \sum_b T_{ab} \quad\text{then}\quad R_i^{\leq \tau} \spaced{=} \prod_a\; \sum_{b: Rank(T_{ab}) \leq \tau} T_{ab}.
\]
The lemma follows from the following two claims whose proofs shall be deferred to the end of this section. 
\vspace{-20pt}
\begin{quote}
\begin{claim}\label{clm:internal 1}
For every $i \in [r]$ $$\Gamma_{k,\ell,S}(R_i) \spaced{=} \Gamma_{k,\ell,S}(R_i^{\leq \tau})$$ 
\end{claim}
\begin{claim}\label{clm:internal 2}
For every $i \in [r]$ 
 $$\Gamma_{k,\ell,S}(R_i^{< \tau}) \spaced{\leq} \binom{\frac{4d}{\tau}+1}{k} \cdot \binom{n}{\ell + k\tau q} \cdot \poly(n)$$
\end{claim}
\end{quote}
The lemma readily follows from \autoref{clm:internal 1} and \autoref{clm:internal 2}. 
\end{proof}

\begin{proof}[Proof of \autoref{clm:internal 1}] 
For brevity, we shall drop some indices and work with $R = Q_1 \cdots Q_m$. 
Let $T \in \terms_{>\tau}(C)$. We shall show if $R' = (Q_1 - T) Q_2 \cdots Q_m$, then for any $k$-th order partial derivative $\partial_{\vecx^\alpha}$, 
\[
\eval_{S}(\partial_{\vecx^\alpha}(R)) \quad=\eval_S(\partial_{\vecx^\alpha}(R')). 
\]
Indeed, consider the difference $R - R' = T \cdot Q_2 \cdots Q_m$. 
By the chain rule, every term in $\partial_{\vecx^\alpha}(R - R')$ is divisible by some $k'$-th order partial derivative of $T$ with $k' \leq k$. 
By the choice of $S$, we know that every $\veca \in S$ satisfies $\mult(T,\veca) > k$, and hence $\veca$ vanishes on $\partial^{\leq k}(T)$ for any $T \in \terms_{>\tau}(C)$. Thus, it follows that $\eval_S(\partial_{\vecx^\alpha}(R - R'))$ is just the zero vector. 

Repeating this argument, we can prune away all terms in $\terms_{>\tau}(C)$ to get that $\eval_S(\partial_\alpha(R)) = \eval_S(\partial_{\vecx^\alpha}(R^{\leq\tau}))$ where $\deg(\vecx^{\alpha}) = k$. Thus, $\Gamma_{k,\ell,S}(R) = \Gamma_{k,\ell,S}(R^{< \tau})$.
\end{proof}

\begin{proof}[Proof of \autoref{clm:internal 2}]
Let $R^{\leq \tau} = Q_1 \cdots Q_d$, with each $Q_i$ being a $\SPS$ circuit. Some of these $Q_i$s could have degree more than $\tau$ although their rank is bounded by $\tau$. 
Without loss of generality, let $Q_1,\cdots, Q_m$ be all the $Q_i$s with $\deg(Q_i) > \tau$, and $Q_{m+1},\dots,Q_d$ have $\deg(Q_i) \leq \tau$. 

We shall modify the ``low-degree'' $Q_i$s by multiplying together any two of them of degree less than $\tau/2$. 
This ensures that at most one of the $Q_i$s may have degree less than $\tau/2$ and all the $Q_i$s have degree at most $\tau$. The sizes of some of the low-degree $Q_i$s do increase in the process but this would not be critical as the degree of any such term is still bounded by $\tau$. The main point is that now we have an expression of the form 
\[
R^{\leq \tau} = Q_1 \cdots Q_m \cdot Q_1'\cdots Q_r'
\]
where each $Q_i$ is a $\SPS$ circuit of rank at most $\tau-1$ and $\deg(Q_i) \geq \tau$, and all but one of the $Q_i'$ satisfies $\tau \geq \deg(Q_i') \geq \tau/2$. 
As $\deg(R^{\leq \tau}) \leq 2d$, it follows that $m + r \leq \frac{4d}{\tau} + 1$.

As a  notational convenience, for any set $H$ let  $Q_H := \prod_{i\in H}Q_i$. 
Let us look at any partial derivative $\partial_{\vecx^\alpha}$ of order $k$ applied on $R$. By the chain-rule, any such partial derivative can be written seen as a natural linear combination of terms. 
\begin{eqnarray*}
\partial_{\vecx^\alpha}(R) & \in & \span\setdef{\partial_{\vecx^\beta}(Q_A) \cdot \partial_{\vecx^\gamma}(Q'_B) \cdot Q_{\overline{A}} \cdot Q'_{\overline{B}}}{\begin{array}{c}\vecx^{\alpha} = \vecx^\beta \cdot \vecx^\gamma\;,\; A\subseteq [m]\;,\; \\ B \subseteq [r]\;,\; |A| + |B| = k\end{array}}\\
& \in & \span\setdef{\partial_{\vecx^\beta}(Q_A) \cdot \vecx^{\leq k\tau} \cdot Q_{\overline{A}} \cdot Q'_{\overline{B}}}{\begin{array}{c}\vecx^{\alpha} = \vecx^\beta \cdot \vecx^\gamma\;,\; A\subseteq [m]\;,\; \\ B \subseteq [r]\;,\; |A| + |B| = k\end{array}}\\
\implies \vecx^{= \ell}\partial_{\vecx^\alpha}(R) & \subseteq &  \span\setdef{\partial_{\vecx^\beta}(Q_A) \cdot \vecx^{\leq \ell + k\tau} \cdot Q_{\overline{A}} \cdot Q'_{\overline{B}}}{\begin{array}{c}\vecx^{\alpha} = \vecx^\beta \cdot \vecx^\gamma\;,\; A\subseteq [m]\;,\; \\ B \subseteq [r]\;,\; |A| + |B| = k\end{array}}\\
\implies \eval_S(\vecx^{= \ell}\partial_{\vecx^\alpha}(R)) & \subseteq &  \span\setdef{\eval_S\inparen{\partial_{\vecx^\beta}(Q_A) \cdot \vecx^{\leq \ell + k\tau} \cdot Q_{\overline{A}} \cdot Q'_{\overline{B}}}}{\begin{array}{c}\vecx^{\alpha} = \vecx^\beta \cdot \vecx^\gamma\;,\; A\subseteq [m]\;,\; \\ B \subseteq [r]\;,\; |A| + |B| = k\end{array}}
\end{eqnarray*}
If we focus on the term $\partial_{\vecx^\beta}(Q_A)$, since $Q_A$ is a product of $\SPS$ circuits of rank at most $\tau$, we have that $\partial_{\vecx^\beta}(Q_A)$ is a linear combination of terms $T_1 \cdots T_{|A|}$ where each $T_i$ is a product of linear polynomials and has rank at most $\tau$. Using \autoref{lem:low-rank-to-low-degree} on each of these $T_i$s, 
\[
\eval_S(\partial_{\vecx^\beta}(Q_A)) \in \span\inbrace{\eval_S(\vecx^{\leq (q-1)\tau |A|})}.
\]
Therefore,
\begin{eqnarray*}
\eval_S(\vecx^{= \ell}\partial_{\vecx^\alpha}(R)) & \subseteq &  \span\setdef{\eval_S\inparen{\partial_{\vecx^\beta}(Q_A) \cdot \vecx^{\leq \ell + k\tau} \cdot Q_{\overline{A}} \cdot Q'_{\overline{B}}}}{\begin{array}{c}\vecx^{\alpha} = \vecx^\beta \cdot \vecx^\gamma\;,\; A\subseteq [m]\;,\; \\ B \subseteq [r]\;,\; |A| + |B| = k\end{array}}\\
& \subseteq & \span\setdef{\eval_S\inparen{\vecx^{\leq \ell + k\tau + (q-1)k\tau} \cdot Q_{\overline{A}} \cdot Q'_{\overline{B}}}}{\begin{array}{c}\vecx^{\alpha} = \vecx^\beta \cdot \vecx^\gamma\;,\; A\subseteq [m]\;,\; \\ B \subseteq [r]\;,\; |A| + |B| = k\end{array}}.
\end{eqnarray*}
Finally, \autoref{lem:multilinearization} shows for every polynomial $f$, there is a multilinear polynomial of degree at most $\deg(f)$ that agrees with $f$ on all evaluations on a translate of a hypercube. Therefore, in the above span, we may assume that we are only shifting by multilinear monomials of degree $\ell + qk\tau$ as this doesn't change the evaluations $S \subseteq \vecc + \set{0,1}^n$. Hence,
\[
\eval_S(\vecx^{= \ell}\partial_{\vecx^\alpha}(R)) \subseteq  \span\setdef{\eval_S\inparen{\vecx^{\leq \ell + qk\tau}_{\mathrm{mult}} \cdot Q_{\overline{A}} \cdot Q'_{\overline{B}}}}{\begin{array}{c}\vecx^{\alpha} = \vecx^\beta \cdot \vecx^\gamma\;,\; A\subseteq [m]\;,\; \\ B \subseteq [r]\;,\; |A| + |B| = k\end{array}}.
\]
Therefore, using the fact that $m + r \leq (4d/\tau) + 1$, we get the bound
\[
\Gamma_{k,\ell,S}(R) := \dim\set{\eval_S(\vecx^{= \ell}\partial^{=k}(R))} \spaced{\leq} \binom{\frac{4d}{\tau} + 1}{k} \cdot \binom{n}{\ell + qk\tau} \cdot n  \]
where the first term corresponds to the number of choices for the subsets $A$ and $B$, and the last two terms correspond to the number of multilinear monomials of degree at most $\ell + qk\tau$. 
\end{proof}

\begin{remark}~\label{ref:remk-nonhomogeneous} \rm
Observe that, even if the circuit $C$ is of the form 
$$
C \spaced{=} \sum_a \prod_{b \in [m]} \sum_c \prod_d L_{abcd}
$$
 such that $\text{Size}(C) \leq 2^{\sqrt{d}/100}$ and $m = O(\frac{d}{\tau})$
, then the upper bound in \autoref{lem:upper-bound-circuit} continues to hold.\footnote{Essentially, in the proof of \autoref{clm:internal 2}, we already have an expression of the form $R^{\leq \tau} = Q_1 \cdots Q_m$ with $m = O\pfrac{d}{\tau}$ and the rest of the proof proceeds as expected.}
In particular, the formal degree of $C$ could be much larger than $d$ but if the product fan-in at level two of $C$ is small, then 
\[
\Gamma_{k,\ell,S}(C) \spaced{\leq} 2^{\sqrt{d}/100} \cdot \binom{O(\frac{d}{\tau})}{k} \cdot \binom{n}{\ell + k\tau q}\cdot \poly(n)\qedhere
\]
\end{remark}
\noindent
We later use this observation to complete the proof of \autoref{thm:mainthm-nonhom} in \autoref{sec:wrapping-proof}. 

\section{Lower bound for the complexity measure for an explicit polynomial}\label{sec:poly lower bound}

Let $\mathcal{E}$ be an arbitrary subset of $\F_q^n$ of size at most $\delta \cdot q^n$. We will choosing a specific set $S$ that shall be a subset of a translate of the hypercube and disjoint from $\mathcal{E}$. We will fix the precise definition of $S$ shortly once we motivate the requirements. 

The polynomial for which was shall prove our lower bound would be from the Nisan-Wigderson family. We would have to set our parameters carefully but for now we shall just be intentionally vague and refer to the polynomial as just $\NW$ and fix parameters at a later point. \\

\noindent
Associated with our measure $\Gamma_{k,\ell,S}(\NW)$ is a natural matrix that we shall call $\Lambda(\NW)$:
\begin{quote}
  The rows of $\Lambda(\NW)$ are indexed by a derivative $\partial_{\vecx^\alpha} \in \partial^{=k}$ of order $k$, and a monomial $\vecx^\beta$ of degree equal to $\ell$. 
The columns are indexed by all points $\veca \in S$. 
The entry in $(\vecx^{\beta} \cdot \partial_{\vecx^\alpha}, \veca)$ is the evaluation of $\vecx^{\beta} \cdot \partial_{\vecx^\alpha}(\NW)$ at the point $\veca$. 
\end{quote}
In other words, the matrix is just the vectors $\eval_S(\vecx^{\beta} \cdot  \partial_{\vecx^\alpha}(\NW))$ listed as different rows for each choice of $\vecx^{\alpha}$ and $\vecx^{\beta}$. Therefore, 
\begin{equation}~\label{eqn:gamma to rank bound}
{\Lambda(\NW)} = \Gamma_{k, \ell, S}(\NW)
\end{equation}
Recall from \autoref{lem:multilinearization} (multilinearization), as long as we only care about evaluations on a translate of a hypercube,
we can assume each row is the evaluations of the multilinearization of $\vecx^{\alpha}\cdot \partial_{\vecx^\beta}(\NW)$. This does not change the evaluation on any point $\veca \in S \subseteq \vecc + \mathcal{H}$. \\

Now any such matrix of evaluations can be naturally factorized as a coefficient matrix and an evaluation matrix. 
\begin{description}
\item{$C_{k,\ell}(\NW)$:} Each row is indexed by a derivative $\partial_{\vecx^{\alpha}}$ of order $k$, and a monomial $\vecx^\beta$ of degree $\ell$, and each column is indexed by a multilinear monomial $m$ of degree \emph{at most} $\ell + d - k$ over $n$ variables, and the $(\vecx^\beta \cdot \partial_{\vecx^\alpha}, m)$ entry is the coefficient of monomial $m$ in the multilinearization of $\vecx^\beta \cdot \partial_{\vecx^\alpha}(\NW)$ with respect to $\vecc + \mathcal{H}$ (\autoref{lem:multilinearization}).
\item{$V_t(S)$:}
Rows are indexed by multilinear monomials of degree at most $t = \ell + d - k$ over $n$ variables, columns are indexed by $\veca\in S$ and $(m,\veca)$ entry is the evaluation monomial $m$ at $\veca$. 
\end{description}

\noindent
Clearly, $\Lambda(\NW) = C_{k,\ell}(\NW) \cdot V_t(S)$. 
Thus if we can get a good lower bound on the ranks of the matrices $C_{k,\ell}(\NW)$ and $V_t(S)$ for a suitable set $S$, we would then be able to lower bound the rank of $\Lambda(\NW)$. 
This is formalized by the following simple linear algebraic fact. 

\begin{lemma}[Rank of products of matrices] \label{lem:rank-of-products}
If $A, B$ and $C$ are matrices such that $A = B \cdot C$, then $\rank(A) \geq \rank(B) + \rank(C) - (\text{\# rows of }C)$. 
\end{lemma}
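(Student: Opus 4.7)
The plan is to prove this as a direct consequence of the rank-nullity theorem, essentially deriving the classical Sylvester rank inequality. I would view the matrices as linear maps: if $C$ has $n$ rows (and hence $B$ has $n$ columns), then $C$ represents a linear map $\mathbb{F}^p \to \mathbb{F}^n$ and $B$ represents a linear map $\mathbb{F}^n \to \mathbb{F}^m$, and $A = BC$ is their composition $\mathbb{F}^p \to \mathbb{F}^m$.

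The key observation is that $\mathrm{Im}(A) = B(\mathrm{Im}(C))$, so $A$ is the restriction of $B$ to the subspace $\mathrm{Im}(C) \subseteq \mathbb{F}^n$. Applying rank-nullity to this restricted map gives
\[
\rank(A) \;=\; \dim(\mathrm{Im}(C)) \;-\; \dim(\ker(B) \cap \mathrm{Im}(C)) \;=\; \rank(C) \;-\; \dim(\ker(B) \cap \mathrm{Im}(C)).
\]
Now I would bound the intersection crudely by the whole kernel: $\dim(\ker(B) \cap \mathrm{Im}(C)) \leq \dim(\ker(B))$. Applying rank-nullity once more to $B$ itself yields $\dim(\ker(B)) = n - \rank(B)$, where $n$ is exactly the number of rows of $C$. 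Substituting this gives
\[
\rank(A) \;\geq\; \rank(C) - (n - \rank(B)) \;=\; \rank(B) + \rank(C) - (\text{\# rows of } C),
\]
as required.

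There is no real obstacle here; this is a short piece of elementary linear algebra and the only thing to be careful about is keeping track of which dimension is ``the number of rows of $C$'' versus ``the number of columns of $B$'' (they coincide because the product $BC$ is well-defined). One could alternatively present the same argument at the level of matrices by observing that $\dim \ker(B) \geq \dim(\ker(BC)) - \dim(\ker(C))$ and rearranging via $\rank = (\text{\# columns}) - \dim(\ker)$, but the linear-map formulation is cleaner and avoids index bookkeeping.
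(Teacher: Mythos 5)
Your proof is correct and complete: the identity $\mathrm{Im}(A)=B(\mathrm{Im}(C))$, rank--nullity applied to the restriction of $B$ to $\mathrm{Im}(C)$, and the crude bound $\dim(\ker(B)\cap\mathrm{Im}(C))\leq\dim\ker(B)=n-\rank(B)$ together give exactly Sylvester's rank inequality, which is what the lemma asserts. The paper itself offers no proof, stating the lemma as a ``simple linear algebraic fact,'' so your rank--nullity argument is the standard justification and there is nothing to reconcile; you also correctly note the only bookkeeping point, namely that the shared dimension $n$ is both the number of columns of $B$ and the number of rows of $C$.
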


\subsection{Rank of $C_{k,\ell}(\NW)$}\label{sec:rank-C}

Let us focus on the matrix $C_{k,\ell}(\NW)$ and restrict ourselves a submatrix $C'_{k,\ell}(\NW)$ to only those columns whose degree is \emph{exactly} $t = \ell + d - k$, and rows indexed by $(\vecx^\beta \cdot \partial_{\vecx^\alpha})$ with $\vecx^\beta$ being a multilinear monomial of degree exactly $\ell$.  

If our polynomial $\NW$ was multilinear, if we were to read off any row of $C'_{k,\ell}(\NW)$, this is just the list of coefficients of all multilinear monomials of $(\vecx^\beta \cdot \partial_{\vecx^\alpha}(\NW))$. This is because the multilinearization operation in \autoref{lem:multilinearization} maps any non-multilinear monomial to a multilinear polynomial of strictly smaller degree and hence these monomials are not included in the columns of $C'_{k,\ell}$. 

The key point here is that the matrix $C'_{k,\ell}(\NW)$ is just the matrix of  \emph{projected shifted partial derivatives} of $\NW$. The results of Kayal et. al~\cite{KLSS} and Kumar and Saraf~\cite{KS14} give a lower bound on the rank of this matrix for a suitable choice of the polynomial, but the lower bound of Kumar and Saraf~\cite{KS14} is more relevant as it is true over any field (unlike \cite{KLSS} that works only over characteristic zero fields). 

Using a tight analysis of the argument in \cite{KS14}, that we present in \autoref{sec:tight-lb-proof} we obtain the following lemma for the Nisan-Wigderson polynomial for very carefully chosen parameters.

\begin{lemma}[Tight analysis of \cite{KS14}]\label{lem:KS-tight-bound}
For every $d$ and $k = O(\sqrt{d})$ there exists parameters $m,e,  \epsilon$ such that $m = \Theta(d^2)$, $n = md$ and $\epsilon = \Theta\pfrac{\log d}{\sqrt{d}}$ with
\begin{eqnarray*}
  m^k & \geq & (1+\epsilon)^{2(d-k)}\\
  m^{e-k} & = & \pfrac{2}{1+\epsilon}^{d-k} \cdot \poly(m). 
\end{eqnarray*}
For any $\set{d,m,e,k,\epsilon}$ satisfying the above constraints,  the polynomial $\NW_{d,m,e}$, if $\ell = \frac{n}{2}(1 - \epsilon)$, then over any field $\F$, we have
\[
\rank(C_{k,\ell}(\NW_{d,m,e})) \spaced{\geq} \rank(C'_{k,\ell}(\NW_{d,m,e})) \spaced{\geq} \binom{n}{\ell + d - k} \cdot \exp(-O(\log^2 d)).
\]
\end{lemma}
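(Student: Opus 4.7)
The plan is to revisit the Kumar--Saraf analysis~\cite{KS14} of the rank of the projected shifted partial derivative matrix of $\NW_{d,m,e}$, keeping explicit track of every $o(1)$-type loss so that all of them combine into a single $\exp(O(\log^2 d))$ multiplicative loss in the final bound, rather than being absorbed into a weaker $n^{o(1)}$ factor as is customary.

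The first step would be to verify that parameters satisfying the two numerical constraints exist simultaneously. Choosing $m = \Theta(d^2)$, so that $\log m = \Theta(\log d)$, and $\epsilon = c \log d / \sqrt{d}$ for a small constant $c > 0$, the first constraint $m^k \geq (1+\epsilon)^{2(d-k)}$ rearranges (on taking logs) to $k \log m \geq 2(d-k)\log(1+\epsilon)$, which for $k = \Theta(\sqrt d)$ reads $\Theta(\sqrt d \log d) \geq \Theta(c \sqrt d \log d)$, and is satisfied once $c$ is chosen small enough. The second constraint then determines $e$ implicitly by $e - k = (d-k)\log(2/(1+\epsilon))/\log m + O(1) = \Theta(d/\log d)$, comfortably smaller than $m = \Theta(d^2)$ as required.

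With these parameters fixed, the entries of $C'_{k,\ell}(\NW_{d,m,e})$ are all in $\set{-1,0,+1}$: a row indexed by $(\vecx^\alpha,\vecx^\beta)$ with $\vecx^\alpha$ a multilinear $k$-derivative and $\vecx^\beta$ a multilinear degree-$\ell$ shift has a nonzero entry in the column indexed by a multilinear monomial $\vecx^\gamma$ of degree $\ell + d - k$ precisely when there is a degree-$(e-1)$ univariate polynomial $p$ passing through $\alpha$ such that $\vecx^\beta \cdot \prod_{i \notin \mathrm{supp}(\alpha)} x_{i,p(i)} = \vecx^\gamma$. The plan is then to invoke the standard averaging argument from~\cite{KS14}: lower bound the rank by the number of distinct columns in the support divided by the typical column multiplicity, together with the observation that any two distinct degree-$(e-1)$ polynomials share at most $e-1$ evaluations on $[d]$, which controls the dependency structure.

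The heart of the argument is the counting estimate for this ratio. A uniformly random multilinear monomial $\vecx^\gamma$ of degree $t = \ell + d - k$ picks, in expectation, roughly $(1-\epsilon) m / 2$ variables from each of the $d$ grid rows of $\NW_{d,m,e}$. A Chernoff bound with deviation $\epsilon m = \Theta(\sqrt d \log d)$ guarantees that the fraction of $\vecx^\gamma$ deviating in any row by more than $\epsilon m$ is $\exp(-\Omega(\epsilon^2 m)) = \exp(-\Omega(\log^2 d))$. For the remaining ``typical'' monomials, the two constraints jointly calibrate the number of $(\alpha,\beta,p)$ triples producing $\vecx^\gamma$ so that this multiplicity is tightly concentrated around a single value; dividing the total row count by this multiplicity yields the claimed $\binom{n}{t} \cdot \exp(-O(\log^2 d))$.

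The main obstacle is the careful bookkeeping: in~\cite{KS14} the precise deviation parameter $\epsilon$ is implicit, and the losses from concentration, non-multilinear collisions, and boundary terms combine into an unspecified $n^{o(1)}$ factor. The delicate point here is to verify that $\epsilon = \Theta(\log d / \sqrt d)$ is simultaneously large enough for the Chernoff estimate to yield $\exp(-\Omega(\log^2 d))$ mass on the bad columns, small enough for $\binom{n}{\ell + d - k}$ to remain within $\exp(O(\log^2 d))$ of the maximum central binomial coefficient, and consistent with the two stated numerical constraints. Once these three compatibilities are checked, the rank bound falls out of the classical proof scheme of~\cite{KS14}.
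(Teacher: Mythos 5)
Your parameter verification is fine and matches the paper's (choose $m=\Theta(d^2)$, pick the constant in $\epsilon$ so that $m^k\geq(1+\epsilon)^{2(d-k)}$, then solve for $e$). The gap is in the rank bound itself. The principle you invoke --- rank is at least the number of distinct columns in the support divided by the typical column multiplicity --- is not a valid lower bound on rank: a $\{0,1\}$ matrix can have $2^r-1$ distinct nonzero columns, each arising from a single row pattern, and still have rank only $r$. Read charitably as a count of distinct \emph{leading monomials} (dimension of a span is at least the number of distinct leading monomials attained, and that count is at least the number of useful rows divided by the maximum leading-monomial multiplicity), the step you would need is exactly the one you assert without argument: that the number of triples $(\alpha,\beta,p)$ for which a fixed $\vecx^\gamma$ arises as the leading monomial of $\mult(\vecx^{\beta}\partial_{\vecx^\alpha}(\NW_{d,m,e}))$ is ``tightly concentrated around a single value.'' The entire difficulty, flagged explicitly in the paper, is that after the multilinear projection the leading monomial of $\mult(\gamma'\cdot\partial_\alpha \NW)$ need not be $\gamma'\cdot \mathrm{LM}(\partial_\alpha \NW)$, since higher monomials can become non-multilinear under the shift and get killed; so one has no direct handle on which monomial of the derivative survives, let alone a pointwise bound on the multiplicity of a given degree-$(\ell+d-k)$ monomial. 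Your Chernoff computation about the row profile of a random monomial does not touch this issue (and, incidentally, $\epsilon^2 m = \Theta(d\log^2 d)$, not $\Theta(\log^2 d)$, though that slip is harmless).

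The paper's proof circumvents exactly this obstruction, and none of its mechanisms appear in your outline. It lower bounds $|\bigcup_{\alpha,\beta}A(\alpha,\beta)|$, the number of distinct projected leading monomials, by: (i) replacing the intractable sets $A(\alpha,\beta)$ by the ordering-independent surrogate sets $S(\alpha,\beta)$ of degree-$\ell$ multilinear shifts disjoint from $\beta$, together with an injection giving $\sum_\beta|A(\alpha,\beta)|\geq|\bigcup_\beta S(\alpha,\beta)|$ (\autoref{lem:As-to-Ss}); (ii) two applications of the strengthened inclusion--exclusion (\autoref{cor:inc-exc-str}), with second-moment estimates $T_2$ (pairs $\beta\neq\beta'$ agreeing in $t$ places, which is where your observation that two distinct low-degree univariate polynomials agree on at most $e-1$ points actually enters) and $T_3$ (intersections $A(\alpha,\beta)\cap A(\alpha',\beta')$, bounded only through divisibility by both $\beta$ and $\beta'$); and (iii) crucially, restricting the derivatives to a set $\Delta$ of size exactly $(1+\epsilon)^{2(d-k)}$, without which the $T_3$ term swamps the first-order term and inclusion--exclusion yields nothing. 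The ``concentration'' you want is only ever established in this averaged, second-moment sense, and only because $|\Delta|$, $e$ and $\epsilon$ are calibrated against each other to make $T_1\gtrsim T_2$ and $T_1\cdot|\Delta|\gtrsim T_3$; invoking ``the classical proof scheme of \cite{KS14}'' as a black box does not supply these calculations, which are precisely what produce the $\exp(-O(\log^2 d))$ loss claimed in the lemma. As it stands, your proposal is missing the core of the argument.
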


\subsection{Rank of $V_t(S)$}

Let $\mathcal{H}_{\leq t}$ refer to elements of $\set{0,1}^n$ of Hamming weight at most $t$. Our first step would be to choose our set $S$ carefully so that we can maximize the rank of $V_t(S)$. 

\begin{observation}\label{obs:redn-to-hypercube}
  Let $\mathcal{E}$ be a subset of $\F_q^n$ of size at most $\delta \cdot q^n$. 
  Then for any $0\leq t \leq n$, there is a vector $\vecc \in \F_q^n$ such that 
  \[
  \abs{(\vecc + \mathcal{H}_{\leq t}) \intersection \mathcal{E}} \spaced{\leq} \delta \cdot \abs{H_{\leq t}}. 
  \]
\end{observation}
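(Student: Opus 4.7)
The statement is an existence claim about a translate of the Hamming ball $\mathcal{H}_{\leq t}$ that avoids the bad set $\mathcal{E}$ almost as well as the global density $\delta$ suggests, so the natural approach is a straightforward averaging argument over the choice of $\vecc$.

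The plan is to pick $\vecc$ uniformly at random from $\F_q^n$ and compute the expected size of the intersection $(\vecc + \mathcal{H}_{\leq t}) \cap \mathcal{E}$. By linearity of expectation, this expectation equals
\[
\mathbb{E}_{\vecc}\insquare{\abs{(\vecc + \mathcal{H}_{\leq t}) \cap \mathcal{E}}} \spaced{=} \sum_{\vecb \in \mathcal{H}_{\leq t}} \Pr_{\vecc}\insquare{\vecc + \vecb \in \mathcal{E}}.
\]
For each fixed $\vecb$, the shift $\vecc + \vecb$ is uniformly distributed on $\F_q^n$, so the probability that it lands in $\mathcal{E}$ is exactly $|\mathcal{E}|/q^n \leq \delta$. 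Summing over $\vecb \in \mathcal{H}_{\leq t}$ gives an expected intersection size at most $\delta \cdot |\mathcal{H}_{\leq t}|$.

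Since the average over choices of $\vecc$ is at most $\delta \cdot |\mathcal{H}_{\leq t}|$, there must exist some particular $\vecc \in \F_q^n$ realizing a value at or below this average, which is the desired conclusion. There is no real obstacle here; the argument is a one-line application of the probabilistic method, and the only thing one has to be mindful of is that the uniform distribution on $\F_q^n$ is invariant under translation, which is what makes each summand evaluate to exactly $|\mathcal{E}|/q^n$.
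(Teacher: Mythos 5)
Your proposal is correct and is essentially the same argument as the paper's: an averaging over the uniform random translate $\vecc$, using translation-invariance of the uniform distribution on $\F_q^n$ so that the expected intersection size is at most $\delta\cdot\abs{\mathcal{H}_{\leq t}}$, and then picking a $\vecc$ achieving the average. The paper phrases it as a double expectation of the indicator of $\mathcal{E}$ rather than via linearity of expectation, but the content is identical.
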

\begin{proof}
  Let $\mathbb{1}_\mathcal{E}(\veca)$ be the indicator function that is $1$ if $y\in \mathcal{E}$, and $0$ otherwise. 
  Then, 
  \[
  \delta \spaced{\geq} \E_{y\in \F_q^n}\insquare{\mathbb{1}_\mathcal{E}(\veca)} \spaced{=} \E_{c\in \F_q^n}\insquare{\E_{y\in \mathcal{H}_{\leq t}}\insquare{\mathbb{1}_\mathcal{E}(\vecc + \veca)}}.
  \]
  Thus, there exists some $\vecc\in \F_q^n$ that still maintains the inequality. 
\end{proof}

\noindent
Our set would be $S = (\vecc + \mathcal{H}_{\leq t}) \setminus \mathcal{E}$, which has the property that $\abs{S \intersection (\vecc + \mathcal{H}_{\leq t})} \geq (1 - \delta) \cdot \abs{\mathcal{H}_{\leq t}}$ by the above observation, and $S \intersection \mathcal{E} = \emptyset$. 

Let $V_t(S-\vecc)$ be the matrix whose rows are indexed by the polynomials $m(\vecx - \vecc)$, where $m$ is a multilinear monomials in variables $\vecx$ of degree at most $t$. The columns of $V_t(S-\vecc)$ are indexed by $S$. We have the following observation.
\begin{observation}\label{obs:rank-unchanged-by-shifting}
$\rank(V_t(S)) = \rank(V_t(S - \vecc))$. 
\end{observation}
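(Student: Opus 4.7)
The plan is to show that the row spaces of $V_t(S)$ and $V_t(S-\vecc)$ (viewed as matrices whose columns are both indexed by the same set $S$) are equal, which immediately gives equality of ranks. The key observation is purely linear-algebraic: shifting $\vecx \mapsto \vecx - \vecc$ induces an invertible change of basis on the space of multilinear polynomials of degree at most $t$.

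Concretely, for a multilinear monomial $m(\vecx) = \prod_{i \in T} x_i$ with $|T| \leq t$, I would expand
\[
m(\vecx - \vecc) \spaced{=} \prod_{i \in T}(x_i - c_i) \spaced{=} \sum_{T' \subseteq T} (-1)^{|T \setminus T'|} \inparen{\prod_{i \in T \setminus T'} c_i}\cdot \vecx^{T'},
\]
which exhibits the row of $V_t(S-\vecc)$ indexed by $m(\vecx - \vecc)$ as a fixed $\F_q$-linear combination of rows of $V_t(S)$ indexed by multilinear monomials of degree at most $|T| \leq t$. Symmetrically, writing $x_i = (x_i - c_i) + c_i$ and expanding gives
\[
\vecx^{T} \spaced{=} \sum_{T'\subseteq T} \inparen{\prod_{i \in T\setminus T'} c_i} \cdot \prod_{i\in T'}(x_i - c_i),
\]
so every row of $V_t(S)$ likewise lies in the row span of $V_t(S - \vecc)$. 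Hence the two row spans agree, and their ranks are identical.

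A cleaner way to phrase the same argument is to note that the transformation $\set{\vecx^T : |T|\leq t} \;\to\; \set{(\vecx - \vecc)^T : |T|\leq t}$ is represented by a unipotent (hence invertible) matrix $M$ when the basis monomials are ordered by containment: $V_t(S-\vecc) = M \cdot V_t(S)$, and invertibility of $M$ forces $\rank(V_t(S-\vecc)) = \rank(V_t(S))$.

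There is no real obstacle here — the observation is just a restatement of the fact that shifting the evaluation points by $-\vecc$ corresponds to changing the basis of the space of degree-$\leq t$ multilinear polynomials from monomials centered at $0$ to monomials centered at $\vecc$, which is a reversible operation. The only thing to be a little careful about is that the degree bound is preserved under the expansion (which it is, since $|T'| \leq |T| \leq t$ on both sides), so the claimed row indexing — multilinear monomials of degree at most $t$ — is closed under the change of basis.
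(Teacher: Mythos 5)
Your proposal is correct and follows essentially the same route as the paper: the paper also argues that each $m(\vecx - \vecc)$ is multilinear of degree at most $t$, so the row space of $V_t(S-\vecc)$ sits inside that of $V_t(S)$, and then invokes invertibility of the translation for the converse. You have simply written out explicitly the monomial expansions and the unipotent change-of-basis matrix that the paper leaves implicit.
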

\begin{proof}
For any multilinear monomial $m$ of degree at most $t$, the polynomial $m(\vecx - \vecc)$  is multilinear and has degree at most $t$.
Thus clearly, the row-space of $V_t(S-\vecc)$ is contained in the row-space of $V_t(S)$. 
The converse also holds trivially as the translation is invertible. 
\end{proof}

We now prove our next lemma which shows a lower bound on the rank of $V_t(S-\vecc)$.
\begin{lemma}~\label{lem:rank of shifted matrix}
For any set $S \subseteq \set{0,1}^n \subset \F_q^n$ and any $0\leq t \leq n$, 
\[
\rank(V_t(S-\vecc)) \spaced{=} \abs{S}
\]
\end{lemma}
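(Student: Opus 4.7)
The plan is to exhibit, inside the matrix $V_t(S-\vecc)$, a square submatrix of size $|S| \times |S|$ that is triangular with $1$'s on the diagonal. Since $V_t(S-\vecc)$ has only $|S|$ columns, it has rank at most $|S|$, so producing such a submatrix proves equality.

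First I would make the shift explicit: substitute $\vecy = \vecx - \vecc$ so that rows are indexed by multilinear monomials $m(\vecy)$ of degree at most $t$, columns are indexed by points $\vecb = \veca - \vecc$ for $\veca\in S$, and the $(m,\vecb)$ entry is $m(\vecb)$. Since $S \subseteq \vecc + \mathcal{H}_{\leq t}$ by construction (as $S \subseteq \vecc + \mathcal{H}_{\leq t}\setminus \mathcal{E}$), each such $\vecb$ lies in $\set{0,1}^n$ with Hamming weight at most $t$, so the monomial $\vecy^{T_\vecb} := \prod_{i \in T_\vecb} y_i$, where $T_\vecb := \mathrm{supp}(\vecb)$, is actually one of the rows of the matrix.

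Next I would select the submatrix $M$ whose rows are precisely $\set{\vecy^{T_\vecb} : \vecb \in S-\vecc}$, with columns still indexed by $S-\vecc$. The entry $M[\vecb, \vecb']$ equals $\prod_{i \in T_\vecb} b'_i$, which is $1$ when $T_\vecb \subseteq T_{\vecb'}$ and $0$ otherwise. I would then order the rows and columns by increasing $|T_\vecb|$ (breaking ties arbitrarily, but consistently across rows and columns). Under this ordering, the diagonal entries $M[\vecb,\vecb]$ are all $1$, and whenever $\vecb \neq \vecb'$ with $|T_\vecb| \geq |T_{\vecb'}|$, the containment $T_\vecb \subseteq T_{\vecb'}$ would force $T_\vecb = T_{\vecb'}$ and hence $\vecb = \vecb'$, a contradiction; therefore all entries strictly below the diagonal vanish. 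So $M$ is upper triangular with unit diagonal, giving $\rank(V_t(S-\vecc)) \geq \rank(M) = |S|$.

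I do not anticipate a genuine obstacle here: the whole argument is essentially the classical observation that the inclusion (zeta) matrix on a collection of subsets of $[n]$, indexed by those subsets themselves, is unitriangular when the subsets are listed in any linear extension of $\subseteq$. The only thing to double-check is that the chosen row-monomials $\vecy^{T_\vecb}$ really are legitimate rows of $V_t(S-\vecc)$, which holds exactly because every $\vecb \in S - \vecc$ has $|T_\vecb| \leq t$.
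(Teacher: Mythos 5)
Your proposal is correct and follows essentially the same route as the paper: restrict to the square submatrix whose rows are the characteristic monomials of the points of $S-\vecc$ (which lie in $\mathcal{H}_{\leq t}$, so these rows exist), order rows and columns by Hamming weight, and observe the resulting matrix is unitriangular. Your added remarks (the trivial upper bound $\rank \leq |S|$ from the column count, and the tie-breaking within a weight class) only make explicit what the paper leaves implicit.
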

\begin{proof}
Since $S \subseteq \vecc + \mathcal{H}_{\leq t}$, the set $S' := S - \vecc \subset \mathcal{H}_{\leq t}$. Thus the matrix $V_t(S - \vecc)$ is simply the matrix $V_t(S')$. For any $\veca \in \set{0,1}^n$, let $m_\veca$ refer to the `characteristic' monomial $\prod_{i:a_i =1} x_i$, and let $m_{\mathbf{0}} = 1$.  \\

Consider the sub-matrix of $V_t(S')$ by restricting to rows indexed by $\setdef{m_{\veca}}{\veca \in S'}$.
By rearranging the rows and columns in the increasing order of the weight of $\veca$, it is evident that the sub-matrix is upper-triangular with ones on the diagonal. 
It therefore follows that the rank of $V_t(S')$ (which is just $V_t(S - \vecc)$) is at least $\abs{S'} = \abs{S}$. 
\end{proof}

\noindent
Combining \autoref{obs:rank-unchanged-by-shifting} and \autoref{lem:rank of shifted matrix}, we have our required bound on the rank of $V_t(S)$. 

\begin{lemma}\label{lem:rank-Vt}
Let $\mathcal{E}$ be an arbitrary subset of $\F_q^n$ of size at most $\delta \cdot q^n$. 
Then, there exists a set $S \subset \F_q^n \setminus \mathcal{E}$ such that $S \subseteq \vecc + \mathcal{H}$ for some $\vecc \in \F_q^n$ for which
\[
\rank(V_t(S)) \spaced{\geq} (1- \delta) \cdot \abs{\mathcal{H}_{\leq t}} \spaced{=} (1 - \delta) \cdot (\text{\# rows of $V_t(S)$})
\]
\end{lemma}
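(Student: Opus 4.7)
The plan is to assemble the conclusion directly from the three preceding building blocks: \autoref{obs:redn-to-hypercube} (an averaging argument that finds a good translate of $\mathcal{H}_{\leq t}$ avoiding most of $\mathcal{E}$), \autoref{obs:rank-unchanged-by-shifting} (translation invariance of the rank of the evaluation matrix), and \autoref{lem:rank of shifted matrix} (the rank of $V_t$ on a subset of the Hamming ball equals its cardinality). So the proof is essentially a three-line composition of these facts; there is no real obstacle.

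First I would invoke \autoref{obs:redn-to-hypercube} with the given $\mathcal{E}$ and the parameter $t$ to obtain a vector $\vecc \in \F_q^n$ for which
\[
\abs{(\vecc + \mathcal{H}_{\leq t}) \cap \mathcal{E}} \;\leq\; \delta \cdot \abs{\mathcal{H}_{\leq t}}.
\]
Then I would define the candidate set $S := (\vecc + \mathcal{H}_{\leq t}) \setminus \mathcal{E}$, which is by construction disjoint from $\mathcal{E}$ and contained in the translate $\vecc + \mathcal{H}_{\leq t} \subseteq \vecc + \mathcal{H}$, and which has size
\[
\abs{S} \;\geq\; \abs{\mathcal{H}_{\leq t}} - \delta \cdot \abs{\mathcal{H}_{\leq t}} \;=\; (1-\delta)\cdot\abs{\mathcal{H}_{\leq t}}.
\]

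Next I would lower-bound $\rank(V_t(S))$. By \autoref{obs:rank-unchanged-by-shifting} the rank is unchanged under the translation by $-\vecc$, so $\rank(V_t(S)) = \rank(V_t(S-\vecc))$. Since $S - \vecc \subseteq \mathcal{H}_{\leq t} \subseteq \set{0,1}^n$, \autoref{lem:rank of shifted matrix} applies and yields $\rank(V_t(S-\vecc)) = \abs{S-\vecc} = \abs{S}$. Stringing these together gives
\[
\rank(V_t(S)) \;=\; \abs{S} \;\geq\; (1-\delta)\cdot\abs{\mathcal{H}_{\leq t}},
\]
which is exactly the claimed bound since the number of rows of $V_t(S)$ is $\abs{\mathcal{H}_{\leq t}}$. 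The only thing worth double-checking is the compatibility of parameters: the lemma statement writes $S \subseteq \vecc + \mathcal{H}$ while our construction lands in the smaller $\vecc + \mathcal{H}_{\leq t}$, but this is obviously consistent with the weaker containment demanded by the statement.
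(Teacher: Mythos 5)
Your proof is correct and follows exactly the route the paper intends: choose $\vecc$ via \autoref{obs:redn-to-hypercube}, set $S = (\vecc + \mathcal{H}_{\leq t})\setminus\mathcal{E}$, and combine \autoref{obs:rank-unchanged-by-shifting} with \autoref{lem:rank of shifted matrix} to get $\rank(V_t(S)) = \abs{S} \geq (1-\delta)\abs{\mathcal{H}_{\leq t}}$. Nothing is missing.
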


\subsection*{Putting them together}

\begin{lemma}[Rank bound for $\Lambda(\NW_{d,m,e})$]~\label{lem:rank bound for nw}
Let $\mathcal{E}$ be an arbitrary subset of $\F_q^n$ of size at most $\delta \cdot q^n$, with $\delta = \exp(-\omega(\log^2 d))$. 
Then, there exists a set $S \subset \F_q^n \setminus \mathcal{E}$ such that $S \subseteq \vecc + \mathcal{H}$ for some $\vecc \in \F_q^n$ for which
\[
\rank(\Lambda(\NW_{d,m,e})) \spaced{\geq} \binom{n}{\ell + d - k} \cdot\exp(-O(\log^2 d))
\]
where the parameters $d,m,e,k,\ell$ are chosen as in \autoref{lem:KS-tight-bound}. 
\end{lemma}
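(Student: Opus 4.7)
The plan is to combine the three ingredients that have been prepared in this section: the tight bound on the rank of the coefficient matrix from \autoref{lem:KS-tight-bound}, the rank bound on the evaluation matrix from \autoref{lem:rank-Vt}, and the linear-algebraic fact \autoref{lem:rank-of-products} that controls the rank of a product of two matrices in terms of their individual ranks. The only genuine content to supply is (i) choosing the right set $S$, and (ii) checking that the error coming from $V_t(S)$ being rank-deficient is dominated by the main term from $C_{k,\ell}(\NW_{d,m,e})$.

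First, I would invoke \autoref{lem:rank-Vt} on the given ``forbidden'' set $\mathcal{E}$. This yields a translate $\vecc + \mathcal{H}_{\leq t}$ of the Hamming ball (with $t := \ell + d - k$) and a subset $S \subseteq (\vecc + \mathcal{H}_{\leq t}) \setminus \mathcal{E}$ with $|S| \geq (1-\delta)\,|\mathcal{H}_{\leq t}|$. This $S$ is contained in a translate of the full hypercube, so the multilinearization discussion preceding \autoref{lem:KS-tight-bound} applies and the factorization $\Lambda(\NW_{d,m,e}) = C_{k,\ell}(\NW_{d,m,e}) \cdot V_t(S)$ is valid with columns of $C_{k,\ell}$ and rows of $V_t(S)$ both indexed by multilinear monomials of degree at most $t$, of which there are exactly $|\mathcal{H}_{\leq t}|$.

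Next, I would plug this factorization into \autoref{lem:rank-of-products}, giving
\[
\rank(\Lambda(\NW_{d,m,e})) \;\geq\; \rank(C_{k,\ell}(\NW_{d,m,e})) + \rank(V_t(S)) - |\mathcal{H}_{\leq t}|.
\]
By \autoref{lem:KS-tight-bound}, the first summand is at least $\binom{n}{\ell+d-k}\cdot \exp(-O(\log^2 d))$, and by the choice of $S$ the last two terms together contribute at least $-\delta\cdot |\mathcal{H}_{\leq t}|$.

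The only remaining step, which I expect to be the most delicate piece of book-keeping but nothing worse, is to verify that the error $\delta\cdot |\mathcal{H}_{\leq t}|$ does not swallow the main term. Since $\ell = \tfrac{n}{2}(1-\epsilon)$ with $\epsilon = \Theta(\log d/\sqrt{d})$ and $k = O(\sqrt{d})$, we have $t \leq n/2$, so the elementary inequality $|\mathcal{H}_{\leq t}| \leq (n+1)\binom{n}{t} = \poly(n)\cdot \binom{n}{\ell+d-k}$ holds. Combined with the hypothesis $\delta = \exp(-\omega(\log^2 d))$ (which beats $\poly(n)$ since $n = md = \poly(d)$), this gives
\[
\delta\cdot |\mathcal{H}_{\leq t}| \;\leq\; \tfrac{1}{2}\,\binom{n}{\ell+d-k}\cdot \exp(-O(\log^2 d)),
\]
and therefore the main term dominates, yielding the claimed bound $\rank(\Lambda(\NW_{d,m,e})) \geq \binom{n}{\ell+d-k}\cdot \exp(-O(\log^2 d))$.
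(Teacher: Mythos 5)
Your proposal is correct and follows essentially the same route as the paper: take the set $S$ from \autoref{lem:rank-Vt} with $t=\ell+d-k$, factor $\Lambda(\NW_{d,m,e}) = C_{k,\ell}(\NW_{d,m,e})\cdot V_t(S)$, apply \autoref{lem:rank-of-products} together with \autoref{lem:KS-tight-bound}, and absorb the $\delta\cdot\abs{\mathcal{H}_{\leq t}} \leq \delta\cdot\poly(n)\binom{n}{\ell+d-k}$ error using $\delta=\exp(-\omega(\log^2 d))$. Your extra check that $t\leq n/2$ justifying $\abs{\mathcal{H}_{\leq t}}\leq \poly(n)\binom{n}{t}$ is exactly the bookkeeping the paper performs implicitly.
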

\begin{proof}
  Consider the set $S$ chosen in \autoref{lem:rank-Vt} (for $t = \ell + d - k$). By \autoref{lem:rank-Vt},
\[
\rank(V_t(S)) - (\text{\# rows of $V_t(S)$}) \spaced{\leq} (- \delta) \abs{\mathcal{H}_{\leq t}} \spaced{\leq} (- \delta) \cdot n \cdot \binom{n}{\ell + d - k}
\]
\autoref{lem:KS-tight-bound} shows that rank of $C_{k,\ell}(\NW_{d,m,e})$ can be lower bounded by 
\[
\rank(C_{k,\ell}(\NW_{d,m,e})) \spaced{\geq}  \binom{n}{\ell + d - k} \cdot \exp(-O(\log^2 d))
\]
Thus, since $\Lambda(\NW_{d,m,e}) = C_{k,\ell}(\NW_{d,m,e}) \cdot V_t(S)$ with $t = \ell + d - k$, \autoref{lem:rank-of-products} implies that
\begin{eqnarray*}
\rank(\Lambda(\NW_{d,m,e})) & \geq & \rank(C_{k,\ell}(\NW_{d,m,e})) + \rank(V_t(S)) - (\text{\# rows of $V_t(S)$})\\
& \geq & \binom{n}{\ell + d - k} \cdot \exp(-O(\log^2 d)) - \delta \cdot n \cdot \binom{n}{\ell + d - k}\\
& \geq & \binom{n}{\ell + d - k} \cdot \exp(-O(\log^2 d))\quad\quad \text{as $\delta = \exp(-\omega(\log^2 d))$.}\qedhere
\end{eqnarray*}
\end{proof}

Combining this with \autoref{eqn:gamma to rank bound}, we get the following lemma.
\begin{lemma}[Rank bound for $\Lambda(\NW_{d,m,e})$]~\label{lem:complexity bound for nw}
Let $\mathcal{E}$ be an arbitrary subset of $\F_q^n$ of size at most $\delta \cdot q^n$, with $\delta = \exp(-\omega(\log^2 d))$. 
Then, there exists a set $S \subset \F_q^n \setminus \mathcal{E}$ such that $S \subseteq \vecc + \mathcal{H}$ for some $\vecc \in \F_q^n$ for which
\[
\rank(\Gamma_{k, \ell, S}(\NW_{d,m,e})) \spaced{\geq} \binom{n}{\ell + d - k} \cdot\exp(-O(\log^2 d))
\]
\end{lemma}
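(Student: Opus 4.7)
The plan is very short: the statement is essentially a repackaging of \autoref{lem:rank bound for nw} in terms of the complexity measure $\Gamma_{k,\ell,S}$ rather than the auxiliary matrix $\Lambda$. The only thing to verify is the identification between the two quantities, which is exactly what the paper has already recorded as \autoref{eqn:gamma to rank bound}. So the proof should be a single paragraph that invokes the previous lemma and substitutes using this identification; the bulk of the genuine content (the tight analysis of \cite{KS14} giving a lower bound on $\rank(C_{k,\ell}(\NW_{d,m,e}))$, the averaging argument picking a good translate $\vecc$ that avoids $\mathcal{E}$, and the triangular-structure argument for $\rank(V_t(S))$) has already been carried out.

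Concretely, I would first reiterate the identification. By construction, the rows of $\Lambda(\NW_{d,m,e})$ enumerate the evaluation vectors $\eval_S(\vecx^{\beta}\cdot\partial_{\vecx^{\alpha}}(\NW_{d,m,e}))$ as $\vecx^{\beta}$ ranges over monomials of degree exactly $\ell$ and $\partial_{\vecx^{\alpha}}$ ranges over $k$-th order partial derivative operators. The $\F_q$-row-span of $\Lambda(\NW_{d,m,e})$ is therefore exactly $\span\inbrace{\eval_S\inparen{\vecx^{=\ell}\partial^{=k}(\NW_{d,m,e})}}$, whose dimension is, by definition, $\Gamma_{k,\ell,S}(\NW_{d,m,e})$. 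Hence $\rank(\Lambda(\NW_{d,m,e})) = \Gamma_{k,\ell,S}(\NW_{d,m,e})$, which is the content of \autoref{eqn:gamma to rank bound}.

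Second, I would invoke \autoref{lem:rank bound for nw} with the same parameters $d,m,e,k,\ell$ guaranteed by \autoref{lem:KS-tight-bound}, applied to the given exceptional set $\mathcal{E}$ of size at most $\delta\cdot q^n$ with $\delta = \exp(-\omega(\log^2 d))$. That lemma provides a shift $\vecc \in \F_q^n$ and a set $S\subseteq (\vecc + \mathcal{H})\setminus \mathcal{E}$ such that
\[
\rank(\Lambda(\NW_{d,m,e})) \;\geq\; \binom{n}{\ell + d - k}\cdot \exp(-O(\log^2 d)).
\]
Substituting the identification from the previous paragraph yields exactly the claimed lower bound on $\Gamma_{k,\ell,S}(\NW_{d,m,e})$, with the same $\vecc$ and $S$.

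The ``main obstacle'' here is not really an obstacle at all: it is a trivial bookkeeping step whose entire purpose is to restate the rank bound on $\Lambda$ in the language of the complexity measure $\Gamma_{k,\ell,S}$, which is what will be compared against the circuit upper bound of \autoref{lem:upper-bound-circuit} in the final wrap-up. All of the genuine technical work has happened earlier, either in \autoref{lem:KS-tight-bound} (the tight analysis of projected shifted partials for $\NW_{d,m,e}$) or in the chain of \autoref{obs:redn-to-hypercube}, \autoref{obs:rank-unchanged-by-shifting} and \autoref{lem:rank of shifted matrix} (the choice of a hypercube translate avoiding $\mathcal{E}$ together with the upper-triangular argument for $V_t(S-\vecc)$).
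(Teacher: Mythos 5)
Your proposal matches the paper exactly: the paper's own proof of this lemma is precisely the one-line combination of \autoref{lem:rank bound for nw} with the identification $\rank(\Lambda(\NW_{d,m,e})) = \Gamma_{k,\ell,S}(\NW_{d,m,e})$ recorded in \autoref{eqn:gamma to rank bound}. Your write-up just spells out this bookkeeping in slightly more detail, which is fine.
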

\section{Wrapping up the proof}\label{sec:wrapping-proof}

\begin{theorem}\label{thm:mainthm}
Let $\F_q$ be the finite field of cardinality $q$. Let $C$ be a depth-$5$ circuit of formal degree at most $2d$ which computes the polynomial $\NW_{d,m,e}$ with parameters as in \autoref{lem:KS-tight-bound}. Then
 $$\text{Size}(C) > 2^{\sqrt{d}/100}.  $$
Further, the same lower bound holds even if $C$ was a circuit of the form 
\[
C \spaced{=} \sum_i \prod_{j \in [m]} \sum_k \prod_\ell L_{ijk\ell}
\]
with $m = O(\sqrt{d})$. 
\end{theorem}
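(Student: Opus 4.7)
The plan is to obtain a contradiction by assuming $\text{Size}(C) \leq 2^{\sqrt{d}/100}$ and combining the upper bound from \autoref{lem:upper-bound-circuit} with the lower bound from \autoref{lem:complexity bound for nw}. The key observation is that if $C$ computes $\NW_{d,m,e}$, then $\Gamma_{k,\ell,S}(C) = \Gamma_{k,\ell,S}(\NW_{d,m,e})$ for every choice of $S$, so the two bounds must be compatible.

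First, I would apply \autoref{cor:corollarywp} to the assumed small circuit $C$ to produce the ``bad set'' $\mathcal{E} \subseteq \F_q^n$ of size at most $\delta q^n$ with $\delta = 2^{-\sqrt{d}/50}$, consisting of points where some high-rank term fails to vanish with multiplicity $k = \tau/2q^3$. Since $\delta = \exp(-\omega(\log^2 d))$ (as $\sqrt{d} \gg \log^2 d$), this $\delta$ satisfies the hypothesis of \autoref{lem:complexity bound for nw}, which in turn produces a set $S \subseteq (\vecc + \mathcal{H}) \setminus \mathcal{E}$ with
\[
\Gamma_{k,\ell,S}(\NW_{d,m,e}) \;\geq\; \binom{n}{\ell + d - k}\cdot \exp(-O(\log^2 d)),
\]
where $\ell, k$ are chosen as in \autoref{lem:KS-tight-bound} (so in particular $\ell + k\tau q < n/2$). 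This same $S$ satisfies the hypotheses of \autoref{lem:upper-bound-circuit}, giving
\[
\Gamma_{k,\ell,S}(C) \;\leq\; 2^{\sqrt{d}/100}\cdot \binom{\frac{4d}{\tau}+1}{k}\cdot \binom{n}{\ell + k\tau q}\cdot \poly(n).
\]

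Equating the two bounds, I would then verify that the lower bound exceeds the upper bound. The middle binomial on the upper-bound side is at most $2^{O(\sqrt{d}/q)}$ since $4d/\tau + 1 = O(\sqrt{d}/q)$ for the chosen $\tau = \Theta(q\sqrt{d})$. The main calculation is therefore the ratio of binomials: since $d - k > k\tau q$ (both are $O(\sqrt{d})$-many terms away from $\ell$, and $k \tau q = O(d/q^2) < d - k$ for large $q$-adjusted constants), and since $\ell + d - k$ and $\ell + k\tau q$ both sit just above $n/2$ by $O(\sqrt{d})$ versus $n\epsilon/2 = \Theta(\sqrt{d}\log d)$,
\[
\frac{\binom{n}{\ell + d - k}}{\binom{n}{\ell + k\tau q}} \;=\; \prod_{i = \ell + k\tau q + 1}^{\ell + d - k} \frac{n - i + 1}{i} \;\geq\; (1 + \Omega(\epsilon))^{d - k - k\tau q} \;=\; \exp(\Omega(\sqrt{d}\log d)),
\]
using $\epsilon = \Theta(\log d/\sqrt{d})$ and $d - k - k\tau q = \Omega(d)$. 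This factor dominates both $2^{\sqrt{d}/100}$ and $\exp(O(\log^2 d))$, delivering the contradiction.

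For the second part of the theorem, the exact same argument goes through except that \autoref{lem:upper-bound-circuit} is replaced by \autoref{ref:remk-nonhomogeneous}, which handles the case where the inner product fan-in is $m = O(d/\tau) = O(\sqrt{d}/q)$ without any constraint on formal degree; the modified bound $\binom{O(d/\tau)}{k}$ has the same $2^{O(\sqrt{d}/q)}$ behaviour so the comparison is unchanged. The main obstacle in writing this out is purely bookkeeping: pinning down the implicit constants in $\tau = \Theta(q\sqrt{d})$ and $k = \tau/2q^3$ so that simultaneously (i) \autoref{cor:corollarywp} applies with the stated failure probability, (ii) $\ell + k\tau q < n/2$ as required by \autoref{lem:upper-bound-circuit}, and (iii) the binomial ratio estimate above still beats the residual upper-bound factors — but each is a routine verification once $\tau$ is chosen to balance them.
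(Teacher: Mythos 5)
Your proposal is correct and follows essentially the same route as the paper: assume $\size(C)\leq 2^{\sqrt{d}/100}$, take $\mathcal{E}$ from \autoref{cor:corollarywp}, feed it into \autoref{lem:complexity bound for nw} to get $S$, compare with \autoref{lem:upper-bound-circuit} (and \autoref{ref:remk-nonhomogeneous} for the non-homogeneous case), and win because the binomial ratio contributes $\exp(\Omega(\sqrt{d}\log d))$, which dominates $2^{O(\sqrt{d})}\cdot\poly(n)$; the paper does the same comparison via \autoref{lem:binom-approx}. A few parenthetical parameter remarks are off (the paper effectively takes $\tau=\Theta(\sqrt{d})$ with $k=\tau/2q^3$, and since $n=md=\Theta(d^3)$ one has $n\epsilon/2=\Theta(d^{2.5}\log d)$, so $\ell+d-k$ and $\ell+k\tau q$ sit just above $\ell$ and well below $n/2$), but the estimates you actually use are correct, so these slips do not affect the argument.
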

\begin{proof}
We shall prove the above theorem for homogeneous depth-$5$ circuits. The lower bound for such non-homogeneous circuits would also follow directly since such circuits also have the same upper-bound on the complexity measure (\autoref{ref:remk-nonhomogeneous}).\\

Assume on the contrary that there is a circuit $C$ computing $\NW_{d,m,e}$ with $\text{Size}(C) \leq 2^{\sqrt{d}/100}$. 
Let $\tau$ be as defined in \autoref{cor:corollarywp} and let $k = \tau/2q^3$. 
Let $\mathcal{E} = \mathcal{E(C)}$ be the set as defined in \autoref{subsec:upper-bound}. We know that  
$$|\mathcal{E}| \leq \delta\cdot q^n$$
for some $\delta = \exp(-O(\sqrt{d}))$. Let $\ell = \frac{n}{2}\inparen{1 - \epsilon}$ where $\epsilon = \frac{\log d}{c \sqrt{d}}$ chosen as in \autoref{lem:KS-tight-bound}. Since $n = d^3$, clearly we have satisfy $\ell + k\tau q< n/2$. 
Let  $S \subset \F_q^n \setminus \mathcal{E}$ be the set guaranteed by \autoref{lem:complexity bound for nw}. From \autoref{lem:complexity bound for nw}, we know that 
$$\Gamma_{k, \ell, S}(\NW) \geq \binom{n}{\ell + d - k} \cdot\exp(-O(\log^2 d)) $$
This may be simplified using \autoref{lem:binom-approx} to
$$ 
\Gamma_{k, \ell, S}(\NW) \geq  \binom{n}{\ell} \cdot (1+\epsilon)^{2d-2k} \cdot \exp(-O(d\epsilon^2)) \cdot \exp(-O(\log^2 d)) 
$$
Also, from \autoref{lem:upper-bound-circuit}, we know that
$$\Gamma_{k, \ell, S}(C) \leq 2^{\sqrt{d}/100}\cdot \binom{\frac{4d}{\tau} + 1}{k} \cdot \binom{n}{\ell + qk\tau} \cdot \poly(n)$$
Again, using \autoref{lem:binom-approx}, we get 
$$\Gamma_{k, \ell, S}(C) \leq 2^{\sqrt{d}/100}\cdot \binom{\frac{4d}{\tau} + 1}{k} \cdot \binom{n}{\ell} \cdot (1+\epsilon)^{2qk\tau} \cdot \exp(O(-qk\tau\cdot \epsilon^2)) \cdot \poly(n)$$
Since $C$ computes $\NW_{d,m,e}$, so it must be the case that
\begin{eqnarray*}
2^{\sqrt{d}/100}\cdot \poly(n) & \geq & (1+\epsilon)^{(d-k) + (d-k -2qk\tau)}\cdot \exp(-O_q(\log^2 d))
\end{eqnarray*}
Since $k = \tau/2q^3$, so $2qk\tau = \tau^2/q^2$. From our choice of $\tau$ in \autoref{cor:corollarywp}, $\tau < \frac{q\sqrt{d}}{6}$. So $$2qk\tau = \tau^2/q^2 < d/36$$ 
Therefore, 
\begin{eqnarray*}
2^{\sqrt{d}/100}\cdot \poly(n) \geq (1+\epsilon)^{(d-k)}\cdot \exp(-O_q(\log^2 d))
\end{eqnarray*}
But this is a contradiction since $(1+\epsilon)^{(d-k)} = \exp({\Omega(\sqrt{d}\log d)})$ by our choice of parameters. Therefore, the size of $C$ is at least $2^{\sqrt{d}/100}$.
\end{proof}

In fact, the above proof gives more. It shows that if we have a depth-$5$ circuit computing $\NW_{d,m,e}$ over $\F_q$, then either the number of high-rank terms is at least $2^{\sqrt{d}/50}$ or the top fan-in is $\exp(\Omega(\sqrt{d}\log d))$. 

\subsection{Getting the right order of quantifiers}\label{sec:order-of-quantifiers}

In our proof so far, we first fix the field $\F_q$ that we are working over and the parameters of our polynomial are then chosen based on $q$. Thus, as $q$ varies, the polynomial for which we show the lower bound also seems to vary. The ideal scenario would be to construct a fixed polynomial family so that for every $q$ we get a lower bound of $\exp(\Omega_q(\sqrt{d}))$. We do that now, and this would complete the proof of \autoref{thm:mainthm-all fields}. 

We now show that this is not necessary. We would fix a family of polynomials as defined in \autoref{def:NW final}, and then argue that for every finite field $\F_q$ (here we think of $q$ as $O(1)$), there is a projection of this family, obtained by setting some of the variables to $0$ or $1$, such that \autoref{thm:mainthm} holds. This would complete the proof of \autoref{thm:mainthm-all fields}. \\

We shall be dealing with a lot of parameters and constraints. The following is essentially the ``zone'' in which we can prove strong lower bounds (\autoref{lem:KS-tight-bound}). 

\begin{definition}[Goldilocks Zone]\label{defn:goldilocks-zone}
We shall say that parameters $m,d,e, k,\epsilon$ with $k = \Theta(\sqrt{d})$ and $\epsilon = \Theta\pfrac{\log d}{\sqrt{d}}$ lie in the \emph{Goldilocks Zone} if they satisfy
\begin{eqnarray*}
 m^k & \geq & (1+\epsilon)^{2(d-k)}\\
 m^{e-k} & = & \pfrac{2}{1+\epsilon}^{d-k} \cdot \poly(m).
\end{eqnarray*}
\end{definition}

Recall that for \autoref{lem:KS-tight-bound}, and consequently \autoref{thm:mainthm}, the parameters $m, d, e, k$ must lie in the Goldilocks zone. The crucial point is that this is a field dependent condition since $k$ (and hence everything else) explicitly depends on $q$. In the next lemma, we show that we can start with a fixed polynomial  such that for every finite field $\F_q$ of fixed size, there exists a $0, 1$ projection which lies in the Goldilocks zone.  
\begin{lemma}\label{lem:goldilocks-projection}
Consider the $\NW_{d,m,e}$ polynomial with $m = \Theta(d^2)$ and $e$ chosen so that
\[
m^{e} \spaced{=} 2^{d}\cdot \poly(m).
\]
Suppose $k = \Theta(\sqrt{d})$ and $\epsilon = \Theta\pfrac{\log d}{\sqrt{d}}$ satisfy the constraint $m^k > (1+\epsilon)^{2(d-k)}$. Then, there exists a $d' \in [d-O(\sqrt{d}\log d),d]$ such that $\NW_{d',m,e}$ is a $0/1$ projection of $\NW_{d,m,e}$ and the parameters $\set{d',m,e,k,\epsilon}$ fall in the Goldilocks Zone. 
\end{lemma}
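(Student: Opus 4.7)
The plan has two parts: exhibit a $0/1$ projection from $\NW_{d,m,e}$ to $\NW_{d',m,e}$, and then choose $d'$ in the claimed range so that $\set{d',m,e,k,\epsilon}$ lies in the Goldilocks Zone. For the projection, I would set $x_{j,v} := 1$ for every $j \in \set{d'+1,\dots,d}$ and every $v \in [m]$, leaving the other variables untouched. Under this substitution each monomial $x_{1,p(1)}\cdots x_{d,p(d)}$ collapses to $x_{1,p(1)}\cdots x_{d',p(d')}$, and as long as $d' \geq e$ the map sending a polynomial $p \in \F_m[t]$ of degree less than $e$ to $(p(1),\ldots,p(d'))$ is injective (two such polynomials can agree at at most $e-1$ points). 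Hence distinct $p$ yield distinct surviving monomials and the projected polynomial equals $\NW_{d',m,e}$ on the nose.

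The first Goldilocks constraint $m^k \geq (1+\epsilon)^{2(d'-k)}$ is monotone in $d'$: whenever it holds for a given value, it holds for every smaller one. So the hypothesis delivers it for free, and all the action is in the second constraint $m^{e-k} = (2/(1+\epsilon))^{d'-k}\cdot \poly(m)$. Substituting the assumed relation $m^e = 2^d\cdot\poly(m)$, this is equivalent to
\[
2^{d-d'+k}\,(1+\epsilon)^{d'-k} \;=\; m^k \cdot \poly(m),
\]
so writing $L := d-d'+k$ and taking natural logarithms transforms it into the linear equation $L\ln 2 + (d-L)\ln(1+\epsilon) = k\ln m + O(\ln m)$. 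Since $\ln m = \Theta(\log d)$, $\ln(1+\epsilon) = \Theta(\epsilon) = \Theta(\log d/\sqrt{d})$ and $k = \Theta(\sqrt{d})$, both $k\ln m$ and $d\ln(1+\epsilon)$ are $\Theta(\sqrt{d}\log d)$; moreover, the hypothesis $m^k \geq (1+\epsilon)^{2(d-k)}$ forces $k\ln m - d\ln(1+\epsilon) \geq (d-2k)\ln(1+\epsilon) = \Omega(\sqrt{d}\log d)$. Solving the linear equation yields $L = \Theta(\sqrt{d}\log d)$, so $d - d' = L - k = \Theta(\sqrt{d}\log d)$, comfortably inside the claimed window $[d - O(\sqrt{d}\log d),\, d]$.

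To finish, I would round $d'$ to the nearest integer in this range. A unit change in $d'$ alters the left side of the displayed equation by at most a factor $2(1+\epsilon) < 3$, which is absorbed by the $\poly(m)$ slack on the right, so the rounded $d'$ still satisfies the second constraint. The auxiliary asymptotics required of the Goldilocks zone ($k = \Theta(\sqrt{d'})$ and $\epsilon = \Theta(\log d'/\sqrt{d'})$) hold since $d' = d\cdot(1-o(1))$, and the injectivity condition $d' \geq e$ holds because $m^e = 2^d\cdot\poly(m)$ with $m = \Theta(d^2)$ forces $e = O(d/\log d)$, which is much smaller than $d - O(\sqrt{d}\log d)$. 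The main ``obstacle,'' if one can call it that, is bookkeeping the $\poly(m)$ slack in the second constraint carefully enough that both the rounding error and the $O(\ln m)$ term in the linear equation are comfortably swallowed; everything else is routine.
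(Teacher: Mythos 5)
Your proposal is correct and follows essentially the same route as the paper: project by setting all variables in $d-d'$ rows to $1$, note the first Goldilocks constraint is monotone in $d'$, and tune $d'$ within an $O(\sqrt{d}\log d)$ window so that the second constraint's $\poly(m)$ slack absorbs the per-unit factor of about $2/(1+\epsilon)$. Your explicit linear-equation bookkeeping and the check that $d'\geq e$ (so the surviving monomials are distinct) are slightly more careful than the paper's write-up, but the argument is the same.
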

\begin{proof}
Since $m^e = 2^d\cdot \poly(m)$, $m^k > (1+\epsilon)^{2(d-k)} \text{ and } (1+\epsilon)^{d} = \exp(\Theta(\sqrt{d}\log d))$, we have
\[
m^{e-k} \spaced{=}  \pfrac{2}{1+\epsilon}^{d-k}\cdot \exp(-\Theta(\sqrt{d}\log d)).
\]
Since the slack in $m^{e-k}$ is just $\exp(\Theta(\sqrt{d}\log d))$ (when compared to the desired value in \autoref{defn:goldilocks-zone}), there exists some $d' \in [d - O(\sqrt{d}\log d),d]$ such that 
\[
m^{e-k} \spaced{=} \pfrac{2}{1+\epsilon}^{d' - k} \cdot \poly(m).
\]
Further, since $m^k > (1+\epsilon)^{2(d-k)}$, it follows that $m^k > (1+\epsilon)^{2(d'- k)}$ as $d' < d$. Hence the parameters $\set{d',m,e,k,\epsilon}$ indeed fall in the Goldilocks Zone (~\autoref{defn:goldilocks-zone}). 

It suffices to show that $\NW_{d',m,e}$ is a projection of $\NW_{d,m,e}$. This is readily seen as setting the variables $x_{ij} = 1$ for all $i \in [d - d']$ and $j \in [m]$ yields $\NW_{d',m,e}$ up to relabelling variables. 
\end{proof}

With this, we can finally prove our main theorems. 

\begin{theorem}[\autoref{thm:mainthm-all fields} restated]
Consider the polynomial $\NW_{d,m,e}$ with parameters chosen such that $m = \Theta(d^2)$ and $m^e = 2^d \cdot \poly(m)$. Then, for any fixed finite field $\F_q$, any homogeneous depth-$5$ circuit over $\F_q$ computing $\NW_{d,m,e}$ must have size at least $2^{\sqrt{d}/200}$. 
\end{theorem}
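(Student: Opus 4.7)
\medskip

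The plan is to reduce the lower bound for $\NW_{d,m,e}$ over a fixed $\F_q$ to the lower bound already established in \autoref{thm:mainthm}, which does hold but only once the parameters $\set{d,m,e,k,\epsilon}$ sit in the Goldilocks Zone (\autoref{defn:goldilocks-zone}). Since $k = \Theta(\sqrt{d})$ is forced to depend on $q$ by \autoref{cor:corollarywp}, the original parameters $m = \Theta(d^2)$, $m^e = 2^d \cdot \poly(m)$ need not themselves lie in the Goldilocks Zone for a given $q$; the whole point of \autoref{lem:goldilocks-projection} is precisely to recover this by passing to a slight projection.

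More concretely, fix any finite field $\F_q$ and suppose for contradiction that $C$ is a homogeneous depth-$5$ circuit over $\F_q$ computing $\NW_{d,m,e}$ with $\size(C) \leq 2^{\sqrt{d}/200}$. First, invoke \autoref{lem:goldilocks-projection} with the $q$-dependent choices of $k = \Theta(\sqrt{d})$ and $\epsilon = \Theta(\log d / \sqrt{d})$ to obtain a $d' \in [d - O(\sqrt{d}\log d),\, d]$ such that $\NW_{d',m,e}$ is a $0/1$ projection of $\NW_{d,m,e}$ and the parameters $\set{d',m,e,k,\epsilon}$ lie in the Goldilocks Zone. Apply the same projection (setting the appropriate $x_{ij}$ to $1$ and the rest of the relabelled variables as in the lemma) to the circuit $C$ to obtain a depth-$5$ circuit $C'$ computing $\NW_{d',m,e}$ over $\F_q$, with $\size(C') \leq \size(C) \leq 2^{\sqrt{d}/200}$.

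The next step is to check that $C'$ satisfies the hypotheses of \autoref{thm:mainthm} for the parameter $d'$. Since $C$ was homogeneous of degree $d$, its formal degree is $d$, and the substitution of constants for variables can only decrease formal degree, so $C'$ has formal degree at most $d$. Because $d' = d - O(\sqrt{d}\log d) \geq d/2$ for all sufficiently large $d$, this gives $\text{formal degree}(C') \leq d \leq 2 d'$, which is exactly the hypothesis required by \autoref{thm:mainthm}. Therefore \autoref{thm:mainthm} yields $\size(C') > 2^{\sqrt{d'}/100}$.

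Finally, I combine these size bounds. Since $d' \geq d - O(\sqrt{d}\log d)$, for all sufficiently large $d$ we have $\sqrt{d'} \geq \sqrt{d}/2$, and hence $2^{\sqrt{d'}/100} \geq 2^{\sqrt{d}/200}$. This contradicts $\size(C') \leq 2^{\sqrt{d}/200}$, completing the proof. The only real subtlety — and the only place one has to be careful — is the order-of-quantifiers manoeuvre in step one: the polynomial family $\set{\NW_{d,m,e}}$ is chosen before $q$, while the Goldilocks parameters are chosen after $q$, and \autoref{lem:goldilocks-projection} is exactly the bridge that lets the $q$-dependent parameters be absorbed into a $q$-dependent $0/1$ projection of a fixed polynomial. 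Once this projection is in hand, the remainder is a routine reduction together with the trivial comparison $\sqrt{d'} \geq \sqrt{d}/2$.
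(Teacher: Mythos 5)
Your proposal is correct and follows essentially the same route as the paper: invoke \autoref{lem:goldilocks-projection} to pass to a $0/1$ projection $\NW_{d',m,e}$ in the Goldilocks Zone, project the circuit accordingly, verify the formal-degree hypothesis $d \leq 2d'$, and apply \autoref{thm:mainthm} to get $\size(C) \geq \size(C') > 2^{\sqrt{d'}/100} \geq 2^{\sqrt{d}/200}$. The added explicit checks (size non-increase under projection, $\sqrt{d'} \geq \sqrt{d}/2$) are exactly what the paper leaves implicit.
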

\begin{proof}
Fix a field $\F_q$ and let $k = \sqrt{d}/12q^3$. 

Suppose on the contrary that there is indeed a homogeneous depth-$5$
circuit $C$ computing $\NW_{d,m,e}$. Then, by \autoref{lem:goldilocks-projection}, this also implies there is
a projection $C'$ that computes the $\NW_{d',m,e}$ such that there is an $d - O(\sqrt{d}\log d) \leq d' \leq d$ and there is an $\epsilon = \Theta\pfrac{\log d}{\sqrt{d}}$ for which $\set{d',m,e,k,\epsilon}$ fall in the Goldilocks Zone (\autoref{defn:goldilocks-zone}). Now $C'$ is a circuit formal degree $d \leq d' + O(\sqrt{d}\log d) \leq 2d'$ that computes the polynomial $\NW_{d',m,e}$. By \autoref{thm:mainthm}, this implies that 
\[
\size(C) \geq \size(C') > 2^{\sqrt{d'}/100} > 2^{\sqrt{d}/200}. \qedhere
\]
\end{proof}

\noindent
The proof of this theorem also follows along the same lines. 

\begin{theoremwp}[~\autoref{thm:mainthm-nonhom} restated]
Consider the polynomial $\NW_{d,m,e}$ with parameters chosen such that $m = \Theta(d^2)$ and $m^e = 2^d \cdot \poly(m)$. Then, for any fixed finite field $\F_q$, any depth-$5$ circuit over $\F_q$ of the form 
\[
C \spaced{=} \sum_i \prod_{j \in [m]} \sum_k \prod_\ell L_{ijk\ell}
\]
where each $L_{ijk\ell}$ is a linear polynomial and $m = O(\sqrt{d})$ that computes $\NW_{d,m,e}$ must have size at least $2^{\sqrt{d}/200}$. 
\end{theoremwp}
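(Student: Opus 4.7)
The plan is to mimic the proof of \autoref{thm:mainthm-all fields} almost verbatim, since \autoref{thm:mainthm} has already been stated and proved in the strengthened form that covers exactly this non-homogeneous setting. The only work left is to pass from a circuit computing $\NW_{d,m,e}$ to a circuit of the same shape computing the projected polynomial $\NW_{d',m,e}$ whose parameters fall in the Goldilocks Zone (\autoref{defn:goldilocks-zone}), so that the hypotheses of \autoref{thm:mainthm} apply.

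Concretely, suppose towards a contradiction that there is a $\Sigma\Pi^{[m]}\Sigma\Pi\Sigma$ circuit $C$ over $\F_q$ with middle product fan-in $O(\sqrt{d})$ computing $\NW_{d,m,e}$ whose size is smaller than $2^{\sqrt{d}/200}$. Fix $k = \sqrt{d}/(12q^3)$ as in the proof of \autoref{thm:mainthm-all fields}. By \autoref{lem:goldilocks-projection}, there exists $d' \in [d - O(\sqrt{d}\log d),\, d]$ together with an $\epsilon = \Theta(\log d/\sqrt{d})$ such that $\{d',m,e,k,\epsilon\}$ lies in the Goldilocks Zone, and $\NW_{d',m,e}$ is a $0/1$ projection of $\NW_{d,m,e}$. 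Applying this projection to the variables of $C$ yields a circuit $C'$ computing $\NW_{d',m,e}$, and the crucial structural observation is that substituting constants into the variables of a $\Sigma\Pi^{[m]}\Sigma\Pi\Sigma$ circuit yields a circuit of the same shape with middle product fan-in at most $m = O(\sqrt{d}) = O(\sqrt{d'})$. Moreover, $\size(C') \leq \size(C) < 2^{\sqrt{d}/200} \leq 2^{\sqrt{d'}/100}$, and the formal degree of $C'$ is at most the formal degree of $C$, which we may assume is at most $d \leq 2d'$ since $d - d' = O(\sqrt{d}\log d)$.

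Now $C'$ satisfies exactly the hypotheses of the second (non-homogeneous) part of \autoref{thm:mainthm}: it is a depth-$5$ circuit of the prescribed shape with middle fan-in $O(\sqrt{d'})$ and formal degree at most $2d'$, computing $\NW_{d',m,e}$ with parameters in the Goldilocks Zone. Consequently $\size(C') > 2^{\sqrt{d'}/100} \geq 2^{\sqrt{d}/200}$, contradicting the size bound on $C'$ inherited from $C$. The only step that is not purely mechanical is verifying that the relevant complexity-measure upper bound really does extend to the non-homogeneous shape, but this is precisely the content of \autoref{ref:remk-nonhomogeneous}, which was the main reason \autoref{thm:mainthm} was stated with the stronger conclusion in the first place.

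I expect no genuine obstacle beyond bookkeeping: the two worries are (a) ensuring the circuit shape is preserved under $0/1$ substitution, which is immediate since constants only shrink the effective structure, and (b) tracking the constants so that the $\sqrt{d}/200$ exponent absorbs the slack coming from $d' = d - O(\sqrt{d}\log d)$, which is straightforward since $\sqrt{d'} \geq \sqrt{d}(1 - o(1)) \geq \sqrt{d}/2$ for large $d$.
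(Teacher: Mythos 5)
Your proposal is correct and matches the paper's own (implicitly sketched) argument: the paper proves this theorem exactly by the route you describe, namely projecting via \autoref{lem:goldilocks-projection} into the Goldilocks Zone, observing that $0/1$ substitution preserves the $\Sigma\Pi^{[O(\sqrt{d})]}\Sigma\Pi\Sigma$ shape, and invoking the non-homogeneous part of \autoref{thm:mainthm}, whose measure upper bound comes from \autoref{ref:remk-nonhomogeneous}. One cosmetic point: your aside that the formal degree of $C'$ ``may be assumed at most $d \leq 2d'$'' is both unjustified for non-homogeneous circuits and unnecessary, since the remark-based bound only needs the level-two product fan-in restriction, not any formal-degree bound.
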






\section{Discussion}

\subsection{Connections between arithmetic circuits over $\F_q$ and $\mathsf{AC}^0[\bmod q]$}\label{sec:AC0-connections}

Although constant depth arithmetic circuits over $\F_q$ appear to be similar to the class $\mathsf{AC}^0[\bmod q]$, they are surprisingly very different with respect to functions computed by them. A striking example, due to Agrawal, Allender and Datta~\cite{aad00}, is that arithmetic circuits over $\F_3$ can ``compute'' both the $\mathrm{Mod}3$ function, as well as the $\mathrm{Mod}2$ function via
\[
\mathrm{Mod}2(x_1,\dots, x_n) \quad=\quad \inparen{2 + \prod_{i=1}^n (1 + x_i)}^2.
\] 
However, it is true that functions computed by arithmetic circuits over $\F_{p^k}$ have strong connections with $\mathsf{AC}^0[\bmod p(p^k - 1)]$ but unless we are working over $\F_2$ it seems difficult lift a lower bound for $\mathrm{AC}^{0}[\bmod p]$ to arithmetic circuits over $\F_p$. For more on this, see \cite{aad00}. 

The only exception we know of is the result of Grigoriev and
Razborov~\cite{gr00} where they lift Smolensky's \cite{smolensky87}
lower bound for $\mathrm{AC}^{0}[\bmod p]$ to depth-$3$ arithmetic
circuits over $\F_p$, and this crucially uses the fact that depth-$3$
arithmetic circuits can be point-wise approximated by a ``sparse
polynomial''. But in general, constant depth arithmetic circuits over $\F_p$ and
boolean circuits in $\mathrm{AC}^{0}[\bmod p]$ seem to be two very
different classes. 

\subsection{Lower bounds for iterated matrix multiplication}
Given the results in this paper, one might wonder if the lower bounds in this paper work for a polynomial in $\VP$. One natural candidate polynomial for which one might hope to show such a lower bound would be the iterated matrix multiplication polynomial (IMM). It was shown in \cite{KS14} that IMM has a large complexity with respect to the measure of projected shifted partial derivatives. Unfortunately, the bounds in \cite{KS14} only show that the dimension of the space of projected shifted partial derivatives of the IMM (degree $d$ in $d^{O(1)}$ variables) are $\exp{(-\delta\sqrt{d}\log d)}$ factor close to the maximum possible value for some constant $\delta$. This slack seems  to be insufficient for the proofs in this paper to work as in the proof of \autoref{lem:complexity bound for nw}, we relied on the fact that for the polynomial $\NW$, the projected shifted partials complexity was at most a quasi-polynomial factor away from the largest possible. 

\subsection{Finer separations for bounded depth circuits ?}
In \cite{KS14a}, it was shown that homogeneous depth-$4$ circuits are exponentially more powerful than homogeneous depth-$4$ circuits with bounded bottom fan-in. A natural question to ask is  whether such separations can be shown between homogeneous depth-$4$ circuit and homogeneous depth-$5$ circuits. One of the first strategies to attempt for this question would be to try and show that there is a homogeneous depth-$5$ circuit such that its projected shifted partial derivative complexity is quite large. The results in this paper show that the measure can not to be too close to the largest possible value, in particular it needs to be at least a factor $\exp(-\Omega(\sqrt{d}))$ away from the largest possible value. If this is bound is tight, then such a separation between homogeneous depth-$5$ circuits and  homogeneous depth-$4$ circuits can still be shown using projected shifted partial derivatives. However, it is not clear  if this is the case. As mentioned before, even the known lower bounds on the dimension of the projected shifted partials for the IMM seem a factor $\exp{(-\Omega(\sqrt{d}\log d))}$ away from the largest possible value. 
 
 \subsection{The tightness of the results and relevance to $\VP$ vs $\VNP$}
For homogeneous depth-$4$ circuits, we know $\exp{(\Omega(\sqrt{d}\log d))}$ lower bounds~\cite{KLSS, KS14} and any asymptotic improvement in the exponent would imply general arithmetic circuit lower bounds. In this sense, the lower bounds for homogeneous depth-$4$ circuits are tight, over all fields. It is natural to ask, if the bounds in this paper are tight in this sense? The answer to this question is far from obvious to us. In particular, it is not clear if we can use computational advantage of having  linear forms at the bottom level of the circuit to get a better depth reduction from $\VP$ to homogeneous depth-$4$ circuits, when compared to depth reduction to homogeneous depth-$4$ circuits.

\subsection{Lower bounds over fields of characteristic zero ?}\label{sec:lb-char-zero}
One might wonder if the techniques in this paper could be potentially adapted to work for depth-$5$ circuits over fields of characteristic zero. In the proof of \autoref{lem:upper-bound-circuit}, we strongly relied on the fact that we are working over a fixed finite field, so it clearly seems hard to generalize over large fields (even when the characteristic is small). In addition to this obvious technical obstruction to generalizing the proof in this paper, there seems to be another reason why the proof strategy in this paper could be hard to replicate over fields of characteristic zero, namely, an analog of \autoref{thm:mainthm-nonhom} over fields of characteristic zero would imply that $\VP \neq \VNP$. The reason is that over characteristic zero fields, one can obtain better depth reductions to non-homogeneous depth-$5$ circuits by combining \cite{av08,koiran,Tav13} with \cite{gkks13b}. Although this is reasonably well known, we give a formal proof here for completeness. 

The following lemma is a simple generalization of the proof of depth reduction to depth-$4$ circuits by Tavenas~\cite{Tav13}. 
\begin{lemma}[Depth reduction to homogeneous depth six circuits]~\label{lem:depth 6 reduction}
Let $P$ be a polynomial of degree $d$ in $\poly(d)$ variables which can be computed by an arithmetic circuit $C$ of size $\poly(d)$. Then, there is a homogeneous depth-6 circuit $C'$ which computes $P$ and satisfies 
\begin{itemize}
\item $\text{Size(C)} \leq \exp{(O(d^{1/3}\log d))}$, and
\item The fan-in of all the product gates in $C'$ is bounded by $O(d^{1/3})$.
\end{itemize}
\end{lemma}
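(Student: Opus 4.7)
The plan is to apply the Koiran--Tavenas-style depth reduction twice in succession, with the splitting degree chosen at each step so as to shrink the degree by a factor of $d^{1/3}$. The key enabling statement, implicit in the proofs of \cite{vsbr83, Tav13}, is the following parametric form: for any $1 \leq t \leq d$, any polynomial of degree $d$ computed by a size-$s$ circuit can be written as $P = \sum_{i=1}^{T} \prod_{j=1}^{m_i} g_{ij}$, where $T \leq s^{O(d/t)}$, each $m_i = O(d/t)$, each $g_{ij}$ has degree at most $t$, and crucially each $g_{ij}$ is itself computable by a sub-circuit of the original circuit (hence of size at most $s$). The last property is what makes iteration possible, since the intermediate factors are not merely low-degree polynomials but polynomials that themselves admit small circuit representations that can be fed back into the reduction.

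With this in hand, I would proceed in three phases. First, apply the reduction with $t = d^{2/3}$ to obtain $P = \sum_i \prod_{j=1}^{O(d^{1/3})} Q_{ij}$, which supplies the top two layers $\Sigma \Pi^{[O(d^{1/3})]}$ with outer fan-in $\exp(O(d^{1/3}\log d))$ and each $Q_{ij}$ of degree at most $d^{2/3}$ computable by a circuit of size $\poly(d)$. Second, apply the reduction again to each such $Q_{ij}$ with parameter $t' = d^{1/3}$ to obtain $Q_{ij} = \sum_k \prod_{l=1}^{O(d^{1/3})} R_{ijkl}$, which supplies the middle two layers $\Sigma \Pi^{[O(d^{1/3})]}$, where each $R_{ijkl}$ has degree at most $d^{1/3}$ and the inner sum has fan-in $\exp(O(d^{1/3}\log d))$.

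Third, since each $R_{ijkl}$ has degree at most $d^{1/3}$ in $n = \poly(d)$ variables, it can be written trivially as a sum of at most $\binom{n + d^{1/3}}{d^{1/3}} \leq \exp(O(d^{1/3}\log d))$ monomials, each a product of at most $d^{1/3}$ variables. This supplies the bottom two layers $\Sigma \Pi^{[O(d^{1/3})]}$. Composing the three representations gives a $\Sigma \Pi^{[O(d^{1/3})]} \Sigma \Pi^{[O(d^{1/3})]} \Sigma \Pi^{[O(d^{1/3})]}$ circuit of total size $\exp(O(d^{1/3}\log d))$, with all product fan-ins bounded by $O(d^{1/3})$, which is exactly what the lemma asserts.

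The main technical subtlety I anticipate is ensuring \emph{homogeneity} of the final circuit. The polynomials $Q_{ij}$ and $R_{ijkl}$ produced by the parametric reduction are not a priori homogeneous even though their total degrees are bounded, so each must be split into its homogeneous components and, at each product gate, only those tuples of components whose degrees sum to the correct value should be retained. Since there are at most $O(d)$ degree choices per factor, this multiplies each layer's fan-in by only $\poly(d)$, which is comfortably absorbed into the $\exp(O(d^{1/3}\log d))$ bound. The only other delicate point is already built into the enabling statement above: the iterated reduction requires that intermediate factors have small circuit complexity and not merely bounded degree, which is precisely what the VSBR/Tavenas construction guarantees.
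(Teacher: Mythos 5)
Your proof is correct and is essentially the intended argument: the paper does not actually spell out a proof of this lemma, describing it only as a simple generalization of Tavenas's depth-$4$ reduction, and your double application of the parametric decomposition with thresholds $d^{2/3}$ and $d^{1/3}$, followed by expanding the degree-$d^{1/3}$ factors into monomials, is exactly that generalization. One small correction: the factors $g_{ij}$ are gate polynomials of the homogenized, degree-balanced normal form of the circuit, so they have circuits of size $\poly(s,d)$ rather than at most $s$, but since $s = \poly(d)$ this changes none of the bounds, and homogenizing first also renders your homogeneity patch automatic.
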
  
Now, we start with the circuit $C'$ as guaranteed by the lemma above, and for each of the product gates at the second level, look at its inputs. Each such input is a $\sum\prod^{O(d^{1/3})}\sum\prod^{O(d^{1/3})}$ circuit (depth-$4$ circuit with all product fan-ins being at most $O(d^{1/3})$) of size at most $\exp{(O(d^{1/3}\log d))}$.  We now apply the depth reduction to depth-$3$ by Gupta et al.~\cite{gkks13b}
to each one of these depth-$4$ circuits. As a result, each of these depth-$4$ circuits get reduced to a depth-$3$ circuit, with at most a factor of $\exp{(O(d^{1/3}))}$ blow up in size. Plugging these depth-$3$ circuits back into $C'$, we obtain a depth-$5$ circuit $C''$ such that 
\begin{itemize}
\item $\text{Size(C)} \leq \exp{(O(d^{1/3}\log d))}$, and
\item The fan-in of all the product gates at level two of $C''$ is bounded by $O(d^{1/3})$.
\end{itemize} 
Recall that the depth reduction in\cite{gkks13b} only works over fields of characteristic zero. 
This yields the following depth reduction to non-homogeneous depth-$5$ circuits. 
\begin{lemma}[Depth reduction to non-homogeneous depth-$5$ circuits]~\label{lem:depth 5 reduction}
Let $\F$ be a field of characteristic zero. Let $P$ be a polynomial of degree $d$ in $\poly(d)$ variables over $\F$ which can be computed by an arithmetic circuit $C$ of size $\poly(d)$. Then, there is a depth-5 circuit $C''$ which computes $P$ and satisfies 
\begin{itemize}
\item $\text{Size(C)} \leq \exp{(O(d^{1/3}\log d))}$, and
\item The fan-in of all the product gates at level two of $C'$ is bounded by $O(d^{1/3})$.
\end{itemize}
\end{lemma}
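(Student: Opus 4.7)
The plan is to compose the two depth-reduction ingredients already in play: the homogeneous depth-$6$ reduction of \autoref{lem:depth 6 reduction} (a three-factor analogue of Tavenas' reduction), and the characteristic-zero reduction of Gupta-Kamath-Kayal-Saptharishi~\cite{gkks13b} from depth-$4$ to non-homogeneous depth-$3$. The two middle $\Sigma\Pi$ blocks of the depth-$6$ circuit naturally group into a top $\Sigma\Pi$ block and a lower $\Sigma\Pi\Sigma\Pi$ block sitting underneath each level-$2$ product gate; the idea is to leave the top block alone and apply \cite{gkks13b} inside each lower block.

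First, I would apply \autoref{lem:depth 6 reduction} to $P$ to obtain a homogeneous $\Sigma\Pi\Sigma\Pi\Sigma\Pi$ circuit $C'$ of size at most $\exp(O(d^{1/3}\log d))$ in which every product gate has fan-in $O(d^{1/3})$. Peeling off the root $\Sigma$ and the topmost $\Pi$ layer, each input to a level-$2$ product gate is the root of a $\Sigma\Pi\Sigma\Pi$ subcircuit. Since the two product layers within such a subcircuit have fan-in $O(d^{1/3})$ each, the polynomial computed by it has degree at most $O(d^{2/3})$, while its size is still bounded by $\exp(O(d^{1/3}\log d))$.

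Next, I would invoke \cite{gkks13b} separately on each of these $\Sigma\Pi\Sigma\Pi$ subcircuits. Applied to a subcircuit of degree $d' = O(d^{2/3})$, it produces an equivalent $\Sigma\Pi\Sigma$ circuit with at most a multiplicative blow-up of $\exp(O(\sqrt{d'})) = \exp(O(d^{1/3}))$ in size. Substituting each resulting depth-$3$ block in place of the corresponding depth-$4$ block then yields a circuit of the form $\Sigma\Pi\Sigma\Pi\Sigma$, in which the top product gates retain their fan-in $O(d^{1/3})$ inherited from $C'$ and whose total size is at most $\exp(O(d^{1/3}\log d)) \cdot \exp(O(d^{1/3})) = \exp(O(d^{1/3}\log d))$, exactly as required.

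The main bookkeeping care will be in tracking the degree $d'$ that the \cite{gkks13b} reduction ``sees'' at each lower block: it is $O(d^{2/3})$, not $d$, and it is precisely this gap (together with the identity $\sqrt{d^{2/3}} = d^{1/3}$) that keeps the per-block blow-up at the mild $\exp(O(d^{1/3}))$ level rather than $\exp(O(\sqrt{d}))$, which would otherwise spoil the target bound. The only place characteristic zero is used is in this \cite{gkks13b} step, which is precisely why the lemma is stated over characteristic-zero fields; every other ingredient (the Tavenas-style reduction and the structural substitution) works over an arbitrary field.
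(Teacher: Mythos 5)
Your proposal is correct and follows essentially the same route as the paper's own argument: compose \autoref{lem:depth 6 reduction} with the depth-$4$-to-depth-$3$ conversion of \cite{gkks13b} applied separately to each $\Sigma\Pi\Sigma\Pi$ subcircuit feeding the level-two product gates, then substitute the resulting depth-$3$ blocks back in, keeping the top product fan-in $O(d^{1/3})$ and total size $\exp(O(d^{1/3}\log d))$. Your bookkeeping via the degree $d' = O(d^{2/3})$ seen by each block is just a rephrasing of the paper's observation that these blocks are $\Sigma\Pi^{[O(d^{1/3})]}\Sigma\Pi^{[O(d^{1/3})]}$ circuits, so the per-block blow-up of $\exp(O(d^{1/3}))$ you claim is exactly the one the paper asserts.
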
  

Now, observe that an analogue of \autoref{thm:mainthm-nonhom} over fields of characteristic zero, would imply an $\exp{(\Omega(d^{1/2}))}$ lower bound
for the depth-$5$ circuits obtained in \autoref{lem:depth 5 reduction}, and hence imply $\VP \neq \VNP$.  
\section*{Acknowledgements}
We are very  grateful to Mike Saks and Shubhangi Saraf for many discussions and much encouragement. The chat with Mike about powering circuits over finite fields was specially insightful. Part of this work was done while the first author was an intern at MSR New England. We are thankful to Madhu Sudan and the other members of the lab, for  stimulating discussions and  generous hospitality. Many thanks to Madhu for patiently sitting through a presentation of the proof. 

We would also like to thank Eric Allender for answering our questions about connections between boolean circuits and arithmetic circuits  over finite fields and pointing us to reference~\cite{aad00}.

This work was partly motivated by discussions over the questions of the projected shifted partials complexity of homogeneous depth-$5$ circuits while the authors were at MSR Bangalore in Summer'14. We are grateful to Neeraj Kayal for hosting us and for many insightful conversations. 

\bibliographystyle{RPurl}

\bibliography{references}

\appendix

\section{Tight analysis of the \cite{KS14} lower bound}\label{sec:tight-lb-proof}

\def\GammaPSD{\Gamma^{\mathrm{PSD}}}
\def\mult{\mathrm{mult}}
\def\LM{\mathrm{LM}}

We recall the measure of \emph{projected shifted partial derivatives} that was used in \cite{KLSS} and \cite{KS14}. 
\[
\GammaPSD_{k,\ell}(P) \spaced{=} \dim\set{\mult\inparen{\vecx^{=\ell} \partial^{=k}(P)} }
\]
where $\mult(f)$ is just the polynomial $f$ restricted to just its multilinear monomials. 
As mentioned before, this $\GammaPSD_{k,\ell}(P)$ is precisely $\rank(C'_{k,\ell}(P))$ as defined in \autoref{sec:rank-C}. \\

\noindent 
The goal of this section would be to prove \autoref{lem:KS-tight-bound} that we restate below. 

\begin{lemma*} For every $d$ and $k = O(\sqrt{d})$ there exists parameters $m,e,  \epsilon$ such that $m = \Theta(d^2)$, $n = md$ and $\epsilon = \Theta\pfrac{\log d}{\sqrt{d}}$ with
\begin{eqnarray*}
  m^k & \geq & (1+\epsilon)^{2(d-k)}\\
  m^{e-k} & = & \pfrac{2}{1+\epsilon}^{d-k} \cdot \poly(m). 
\end{eqnarray*}
For any $\set{d,m,e,k,\epsilon}$ satisfying the above constraints,  the polynomial $\NW_{d,m,e}$, if $\ell = \frac{n}{2}(1 - \epsilon)$, then over any field $\F$, we have
\[
\GammaPSD_{k,\ell}(\NW_{d,m,e}) \spaced{\geq} \binom{n}{\ell + d - k} \cdot \exp(-O(\log^2 d)).
\]
\end{lemma*}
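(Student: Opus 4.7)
The plan is to follow the combinatorial lower bound for projected shifted partial derivatives of the Nisan-Wigderson polynomial from Kumar-Saraf~\cite{KS14}, but with a substantially more careful accounting of the constant/polylog losses. The goal is to show that $C'_{k,\ell}(\NW_{d,m,e})$ has rank at least $\binom{n}{\ell + d - k} \cdot \exp(-O(\log^2 d))$, which is within a quasi-polynomial factor of its maximum possible rank --- a tightness that the crude KS14 bound does not achieve, and which is needed for \autoref{lem:complexity bound for nw} to go through.

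First, I will exploit the set-multilinear structure of $\NW_{d,m,e}$: its $m^e$ monomials are in bijection with polynomials $p \in \F_m[t]$ of degree $<e$, each encoding the graph of $p$ on the $d$ blocks of $m$ variables. A set-multilinear partial derivative $\partial_{\vecx^\alpha}$ of order $k$, with $\alpha$ specifying one variable from each of $k$ chosen blocks, corresponds to imposing $k$ interpolation conditions on $p$; the resulting derivative is a sum of exactly $m^{e-k}$ set-multilinear monomials on the remaining $d-k$ blocks. Restricting attention to shifts $\vecx^\beta$ that are multilinear of degree exactly $\ell$ and supported on positions disjoint from the support of $\alpha$, the multilinearization of $\vecx^\beta \cdot \partial_{\vecx^\alpha}(\NW)$ is supported on multilinear monomials of degree exactly $\ell + d - k$ --- these index the columns of $C'_{k,\ell}$.

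The rank lower bound will come from a double-counting / incidence argument on the natural submatrix whose rows are indexed by pairs $(\alpha,\beta)$ of the above type. One counts both (i)~the total number of (row, column) incidences with nonzero entries and (ii)~the maximum number of rows that can share any single column; the rank is then at least (total incidences)/(maximum column load). The number of incidences per row is essentially $m^{e-k}$ times a shift count, and the \emph{average} column load works out to the total incidences divided by $\binom{n}{\ell + d - k}$. The Goldilocks-zone constraints $m^k \geq (1+\epsilon)^{2(d-k)}$ and $m^{e-k} = (2/(1+\epsilon))^{d-k} \poly(m)$ are calibrated precisely so that the average load matches what one would obtain in a balanced assignment; the dominant loss comes from the Gaussian-type approximation of binomial coefficients (via \autoref{lem:binom-approx}), which contributes a factor of $\exp(-O(d\epsilon^2)) = \exp(-O(\log^2 d))$ from the choice $\epsilon = \Theta(\log d / \sqrt{d})$.

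The main obstacle will be bounding the \emph{maximum} column load by an $\exp(O(\log^2 d))$ factor above the average. For a fixed multilinear monomial $M$ of degree $\ell + d - k$, the pairs $(\alpha, \beta)$ mapping to $M$ are in bijection with decompositions of $M$'s support into (a) a ``graph'' part coming from a polynomial $p \in \F_m[t]_{<e}$ that intersects the block-structure of $M$ in $d-k$ positions and is constrained on $k$ further block-positions by $\alpha$, and (b) a pure shift part of size $\ell - k$. A Chernoff-type concentration argument on the choice of shift support, exploiting the sparsity $k = \Theta(\sqrt{d})$ and the distance of $\ell = (n/2)(1-\epsilon)$ from $n/2$, will show that no column carries more than an $\exp(O(\log^2 d))$ factor above average. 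Combining this uniformity of column loads with the incidence count yields the claimed rank bound, and hence the lemma.
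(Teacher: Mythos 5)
Your proposal breaks down at its central step. The inference ``rank $\geq$ (total number of nonzero (row,column) incidences)/(maximum column load)'' is not a valid linear-algebraic statement: that ratio only lower bounds the number of \emph{nonzero columns}, not the rank (the all-ones matrix has maximal incidences and perfectly uniform column loads, yet rank $1$). In this line of work the rank of the coefficient matrix is lower bounded by the number of \emph{distinct leading monomials} among its rows, and if you recast your argument that way a second problem appears: each row has exactly one leading monomial, and after the multilinear projection the leading monomial of $\mathrm{mult}(\gamma\cdot\partial_{\vecx^\alpha}(\NW_{d,m,e}))$ need \emph{not} be $\gamma\cdot\mathrm{LM}(\partial_{\vecx^\alpha}(\NW_{d,m,e}))$, since higher monomials may become non-multilinear under the shift. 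This is precisely the difficulty the paper's proof is built around, and your proposal never engages with it. The paper handles it indirectly: for each derivative $\alpha$ and each monomial $\beta$ of $\partial_\alpha(\NW)$ it introduces the surrogate sets $S(\alpha,\beta)$ of degree-$\ell$ multilinear shifts disjoint from $\beta$ (whose sizes are exactly computable), proves $\sum_\beta |A(\alpha,\beta)| \geq |\bigcup_\beta S(\alpha,\beta)|$ by an injection, and then runs a two-level inclusion--exclusion in which the pairwise-intersection terms $T_2$ and $T_3$ are bounded using the specific structural fact that two distinct monomials of a derivative of $\NW_{d,m,e}$ (i.e.\ two distinct univariate polynomials of degree $<e$) agree in at most $e-k$ block positions.

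Your substitute for this — a ``Chernoff-type concentration argument'' showing no column load exceeds the average by more than $\exp(O(\log^2 d))$ — is not a proof: there is no randomness here, the column loads are deterministic combinatorial quantities, and it is not clear such a maximum-load bound even holds. What the argument actually needs (and what suffices) is an upper bound on the \emph{sum} of pairwise intersections $\sum |A(\alpha,\beta)\cap A(\alpha',\beta')|$, i.e.\ an average-type bound obtained from the gcd-counting above, fed into the strengthened inclusion--exclusion. Two further calibrations are also missing from your sketch: the derivatives must be restricted to a set $\Delta$ of size exactly $(1+\epsilon)^{2(d-k)}$ (using all $\approx m^k$ set-multilinear derivatives makes the $|\Delta|^2$-scaled intersection term swamp the $|\Delta|$-scaled main term), and one must actually exhibit parameters $m,e,\epsilon$ in the stated ``Goldilocks'' range, which the paper does by taking $m$ a power of two near $d^2$, tuning the constant in $\epsilon$, and then choosing $e$ by a discrete intermediate-value argument. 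As it stands, the proposal identifies the right target and the right polynomial structure but does not contain a correct mechanism for converting the counting into a rank bound.
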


\noindent 
The rest of this section would just be a proof of this lemma. \\

Before we proceed to the lower bound on $\GammaPSD_{k,\ell}(\NW_{d,m,e})$, let us first show that we can indeed find parameters that satisfy the above constraints. 
Fix $m$ to be the smallest power of $2$ greater than $d^2$ to get $m = \Theta(d^2)$. Next, we shall fix the constant $c$ in $\epsilon = \frac{\log d}{c \sqrt{d}}$ so that 
\[
m^k \spaced{\geq} (1+\epsilon)^{2(d-k)}
\]
This is always possible by choosing $c$ to be large enough as $(1+\epsilon)^{d-k} = \exp(O(\sqrt{d}\log d))$ and that is also the order of $m^k$. 

Once we have done that, we shall fix $e$ so as to ensure that 
\[
m^{e-k} \spaced{=} \pfrac{2}{1+\epsilon}^{d-k} \cdot \poly(m)
\]
This is always possible because choosing $e = k$ makes the $\mathrm{LHS}< \mathrm{RHS}$ and choosing $e = m$ makes $\mathrm{LHS} > \mathrm{RHS}$. Hence, there must be an integer $e$ such that $\mathrm{LHS}$ and $\mathrm{RHS}$ are within a multiplicative factor of $m$. 

\bigskip

All lower bounds on  the dimension of shifted partial derivatives of a polynomial $P$  was obtained by finding a \emph{large} set of \emph{distinct leading monomials}. 
In \cite{KS14}, they take this approach but require a very careful analysis. 
The key difference in this setting is the following: 

\begin{quote}
  If $\beta$ is the leading monomial of a polynomial $P$, then for any monomial $\gamma$, we also have that $\beta \cdot \gamma$ is the leading monomial of $\gamma P$. 

  However, the leading monomial of $\mathrm{mult}(\gamma P)$ could be $\beta' \cdot \gamma$ for some $\beta' \neq \beta$ (as higher monomials could be made non-multilinear during the shift by $\gamma$). 
\end{quote}

The multilinear projection makes the task of counting leading monomials much harder and \cite{KS14} come up an indirect way to count them. Throughout this discussion, let $\LM(f)$ refer to the leading monomial of $f$ in some natural ordering, say the lexicographic order. 

\subsubsection*{Leading monomials after multilinear projections}

Let $P$ the polynomial of degree $d$ for which we are trying to lower bound $\Gamma^{\mathrm{PSD}}_{k,\ell}(P)$. 
For every monomial multilinear monomial $\alpha$ of degree $k$, and a monomial $\beta \in \partial_\alpha(P)$, define the set $A(\alpha, \beta)$ as
\[
A(\alpha, \beta) \spaced{=} \setdef{\gamma}{\begin{array}{c}\deg(\gamma) = \ell + d - k\;\text{and there is a $\gamma'$ of degree $\ell$}\\\text{such that }\gamma  = \mathrm{LM}(\mathrm{mult}(\gamma' \cdot \partial_\alpha(P))) = \gamma' \cdot \beta \end{array}}
\]
In other words, we want the number of distinct monomials that are contributed by $\beta$, which are also distinct leading monomials obtained from $\partial_\alpha(P)$ that are divisible by $\beta$. 
We then have
\[
\Gamma^{\mathrm{PSD}}_{k,\ell}(P) \spaced{\geq} \abs{\Union_{\alpha, \beta} A(\alpha, \beta)}
\]

\noindent
{\bf Choice of derivatives:} Instead of looking at all derivatives in  $\partial^{=k}$, we shall restrict ourselves to just a subset of derivatives. Restricting the above union to a subset $\Delta  \subset \vecx^{=k}$ still continues to remain a lower bound for $\GammaPSD_{k,\ell}(P)$. Keeping in mind that we are dealing with $P = \NW_{d,m,e}$ and that $m^k > (1+\epsilon)^{2(d-k)}$. We shall choose $\Delta$ to be a set of size exactly $(1+\epsilon)^{2(d-k)}$ which consists of monomials of the form $x_{1a_1}\cdots x_{ka_k}$ with each $a_i \leq m$. This shall become relevant later. 
\begin{equation}\label{eqn:union-of-As}
\Gamma^{\mathrm{PSD}}_{k,\ell}(P) \spaced{\geq} \abs{\Union_{\substack{\alpha \in \Delta \\\beta \in \vecx^{=\ell}}} A(\alpha, \beta)}
\end{equation}

\bigskip

We shall need the following lemma from \cite{KS14} that is a strengthening of the standard Inclusion-Exclusion principle. 

\begin{lemma}[Stronger Inclusion-Exclusion \cite{KS14}]\label{lem:str-inc-exc}
Let $A_1,\dots, A_r$ be sets such that there is some $\lambda > 1$ such that
\[
\sum_{i\neq j} \abs{A_i \intersection A_j} \spaced{\leq} \sum_i \lambda \cdot \abs{A_i}
\]
Then, 
\[
\abs{\Union_i A_i} \spaced{\geq} \inparen{\frac{1}{4\lambda}} \cdot \inparen{\sum_i \abs{A_i}}
\]
\end{lemma}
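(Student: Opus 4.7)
The plan is to prove this via a clean second-moment / Cauchy--Schwarz argument after setting up the right double-counting. Let $U = \bigcup_i A_i$, and for each $x \in U$ define the \emph{degree} $d(x) := |\{i : x \in A_i\}|$. The two quantities in the hypothesis translate as
\[
\sum_i |A_i| \;=\; \sum_{x \in U} d(x), \qquad \sum_{i \neq j} |A_i \cap A_j| \;=\; \sum_{x \in U} d(x)(d(x) - 1),
\]
where the second identity comes from the fact that each $x$ of degree $d(x)$ is counted by the ordered pair $(i,j)$ with $i \neq j$ exactly $d(x)(d(x)-1)$ times.

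With this reformulation, the hypothesis $\sum_{i \neq j}|A_i \cap A_j| \leq \lambda \sum_i |A_i|$ becomes $\sum_x d(x)(d(x)-1) \leq \lambda \sum_x d(x)$, and hence
\[
\sum_{x \in U} d(x)^2 \;\leq\; (\lambda + 1) \sum_{x \in U} d(x) \;\leq\; 2\lambda \sum_{x \in U} d(x),
\]
using $\lambda \geq 1$. Then Cauchy--Schwarz applied to the vectors $(d(x))_{x \in U}$ and $(1)_{x \in U}$ gives
\[
\left(\sum_{x \in U} d(x)\right)^2 \;\leq\; |U| \cdot \sum_{x \in U} d(x)^2,
\]
which, combined with the previous inequality, yields
\[
|U| \;\geq\; \frac{\left(\sum_x d(x)\right)^2}{\sum_x d(x)^2} \;\geq\; \frac{\sum_x d(x)}{2\lambda} \;=\; \frac{\sum_i |A_i|}{2\lambda} \;\geq\; \frac{1}{4\lambda} \cdot \sum_i |A_i|,
\]
which is the desired bound (and in fact a slightly stronger one; the factor of $4$ in the statement gives a comfortable cushion).

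There really is no serious obstacle here: the whole proof is a one-line Cauchy--Schwarz once one rewrites the set-theoretic quantities in terms of the degree sequence $\{d(x)\}_{x \in U}$. The only thing to be careful about is whether $\sum_{i \neq j}$ refers to ordered or unordered pairs, but either convention leads to the same conclusion with at most a factor-of-two change in the intermediate constant, still comfortably within the $1/(4\lambda)$ slack. I would keep the write-up short and emphasize the degree-sequence reformulation, since that is the conceptual content of the argument.
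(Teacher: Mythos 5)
Your proof is correct, and in fact gives the stronger constant $1/(2\lambda)$: the identities $\sum_i |A_i| = \sum_{x} d(x)$ and $\sum_{i\neq j}|A_i\cap A_j| = \sum_x d(x)(d(x)-1)$ (ordered pairs) are right, the hypothesis then gives $\sum_x d(x)^2 \leq (\lambda+1)\sum_x d(x) \leq 2\lambda \sum_x d(x)$, and Cauchy--Schwarz yields $|\bigcup_i A_i| \geq \bigl(\sum_x d(x)\bigr)^2 / \sum_x d(x)^2 \geq \frac{1}{2\lambda}\sum_i |A_i|$; your remark about the ordered/unordered convention is also handled correctly, since the unordered reading only degrades this to $\frac{1}{3\lambda}$, still above the claimed $\frac{1}{4\lambda}$. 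Note that the paper itself does not prove this lemma --- it is quoted from Kumar--Saraf --- and the argument there is of a different flavor: one truncates at the multiplicity threshold $2\lambda$, observing that elements $x$ with $d(x) > 2\lambda$ contribute more than $2\lambda\, d(x)$ to $\sum_{i\neq j}|A_i \cap A_j|$, so by the hypothesis they account for at most half of $\sum_x d(x)$; the remaining mass $\geq \frac{1}{2}\sum_i|A_i|$ is carried by elements of multiplicity at most $2\lambda$, each counted at most $2\lambda$ times, which is exactly where the factor $4$ comes from. Your second-moment route is shorter, avoids the case split, and gives a slightly sharper bound; both arguments are elementary and interchangeable for the way the lemma is used here (only the order of the constant matters).
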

\begin{corollary}\label{cor:inc-exc-str}
Considers sets $A_1,\dots, A_r$  and let $S_1 = \sum_i \abs{A_i}$ and $S_2 = \sum_{i\neq j} \abs{A_i \intersection A_j}$. 
Then, 
\[
\abs{\Union A_i} \spaced{\geq} \frac{S_1}{4} \cdot \min\inparen{1,\frac{S_1}{S_2}}
\]
\end{corollary}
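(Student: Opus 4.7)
The corollary claims that, without any a priori hypothesis on how $S_2$ relates to $S_1$, the size of the union of $A_1,\dots, A_r$ is at least $(S_1/4) \cdot \min(1, S_1/S_2)$. My plan is to derive this as a simple consequence of \autoref{lem:str-inc-exc} by choosing the parameter $\lambda$ appropriately in each of two natural cases, depending on whether $S_2 \le S_1$ or $S_2 > S_1$.

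First I would handle the case $S_2 > S_1$. Here I set $\lambda := S_2/S_1 > 1$, which is valid for the hypothesis of \autoref{lem:str-inc-exc} since $\sum_{i\neq j} |A_i\cap A_j| = S_2 = \lambda \cdot S_1 = \lambda \sum_i |A_i|$. Plugging into the conclusion of the lemma gives
\[
\Bigl|\Union_i A_i\Bigr| \;\geq\; \frac{S_1}{4\lambda} \;=\; \frac{S_1}{4} \cdot \frac{S_1}{S_2},
\]
which matches the corollary in this regime since $\min(1, S_1/S_2) = S_1/S_2$.

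For the complementary case $S_2 \leq S_1$, I want to use $\lambda = 1$, but the lemma requires $\lambda > 1$ strictly. The easy workaround is to apply the lemma for every $\lambda > 1$: the hypothesis $S_2 \leq \lambda S_1$ holds trivially since $S_2 \leq S_1 < \lambda S_1$, so the conclusion gives $|\bigcup_i A_i| \geq S_1/(4\lambda)$ for all $\lambda > 1$. Since the left-hand side is independent of $\lambda$, letting $\lambda \to 1^+$ yields $|\bigcup_i A_i| \geq S_1/4$, which matches the corollary in this regime since $\min(1, S_1/S_2) = 1$. (Formally, if $|\bigcup_i A_i| < S_1/4$, choose $\lambda$ close enough to $1$ so that $S_1/(4\lambda) > |\bigcup_i A_i|$ to reach a contradiction.)

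This is essentially a routine manipulation, so I do not expect any serious obstacle; the only mild subtlety is the strict inequality $\lambda > 1$ in the hypothesis of \autoref{lem:str-inc-exc}, which is handled cleanly by the limiting argument above. Combining the two cases covers all possible relationships between $S_1$ and $S_2$ (including the degenerate case $S_2 = 0$, which falls into the second case and gives the bound $S_1/4$), completing the proof.
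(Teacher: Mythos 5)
Your proof is correct and is exactly the intended derivation: the paper leaves \autoref{cor:inc-exc-str} as an immediate consequence of \autoref{lem:str-inc-exc}, obtained by taking $\lambda = S_2/S_1$ when $S_2 > S_1$ and the trivial regime otherwise, which is what you do. Your limiting argument $\lambda \to 1^+$ cleanly handles the strict inequality $\lambda > 1$ in the lemma's hypothesis, so there is no gap.
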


\subsubsection*{Estimating $\abs{\Union A(\alpha, \beta)}$ via Inclusion-Exclusion}
\[
\abs{\Union_{\alpha, \beta}A(\alpha,\beta)}\spaced{\geq} \sum_{\alpha,\beta}\abs{A(\alpha, \beta)} \spaced{-} \sum_{(\alpha, \beta)\neq (\alpha',\beta')}\abs{A(\alpha, \beta) \intersection A(\alpha',\beta')}
\]

Let us first address the term $\sum \abs{A(\alpha, \beta)}$. 
As mentioned earlier, it is not an easy task to get a good handle on the set $A(\alpha, \beta)$ for polynomial such as $\NW$, for any reasonable monomial ordering. 
However, \cite{KS14} circumvent this difficult by using an indirect approach to estimate this term. 

For any derivative $\alpha$ and $\beta \in \partial_\alpha(P)$, define the set $S(\alpha, \beta)$ as the following set of multilinear monomials of degree $\ell$ that is disjoint from $\beta$. 
\[
S(\alpha, \beta) \spaced{=} \setdef{\gamma}{\begin{array}{c}\text{$\gamma$ is multilinear, has}\\\text{degree $\ell$ and $\gcd(\beta,\gamma)=1$ }\end{array}}
\]
This on the other hand is independent of any monomial ordering, and is also easy to calculate:
\[
\text{For every $\alpha, \beta$}\quad\quad \abs{S(\alpha, \beta)} \spaced{=} \binom{n - d + k}{\ell}.
\] 
\begin{lemma}[\cite{KS14}]\label{lem:As-to-Ss}
For any $\alpha$, 
\[
\sum_{\beta} \abs{A(\alpha, \beta)} \spaced{\geq} \abs{\Union_{\beta} S(\alpha, \beta)}
\]
\end{lemma}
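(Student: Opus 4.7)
\medskip

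\noindent\textbf{Proof plan for \autoref{lem:As-to-Ss}.} The plan is to construct an explicit injection from $\bigcup_\beta S(\alpha,\beta)$ into the disjoint union $\bigsqcup_\beta A(\alpha,\beta)$; its existence immediately gives the claimed inequality, since $|\bigsqcup_\beta A(\alpha,\beta)| = \sum_\beta |A(\alpha,\beta)|$.

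First I would unpack what it means for a multilinear $\gamma$ of degree $\ell$ to lie in $\bigcup_\beta S(\alpha,\beta)$: it means there exists at least one monomial $\beta$ of $\partial_\alpha(P)$ with $\gcd(\gamma,\beta)=1$. Consequently, the polynomial $\mathrm{mult}(\gamma \cdot \partial_\alpha(P))$ is non-zero, because the monomials of $\gamma \cdot \partial_\alpha(P)$ that survive the multilinear projection are exactly those $\gamma \cdot \beta$ with $\beta$ a monomial of $\partial_\alpha(P)$ satisfying $\gcd(\gamma,\beta)=1$. This observation is the entire content of the step where we need multilinearity enter.

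Next I would define $\beta^\star(\gamma)$ to be the unique monomial of $\partial_\alpha(P)$ such that $\gamma \cdot \beta^\star(\gamma) = \mathrm{LM}(\mathrm{mult}(\gamma \cdot \partial_\alpha(P)))$. By definition of $A(\alpha,\cdot)$, the pair $(\beta^\star(\gamma),\, \gamma \cdot \beta^\star(\gamma))$ lies in $\bigsqcup_\beta A(\alpha,\beta)$ (i.e.\ $\gamma \cdot \beta^\star(\gamma) \in A(\alpha, \beta^\star(\gamma))$). This defines the map
\[
\Phi : \bigcup_\beta S(\alpha,\beta) \;\longrightarrow\; \bigsqcup_\beta A(\alpha,\beta), \qquad \gamma \;\longmapsto\; \bigl(\beta^\star(\gamma),\, \gamma \cdot \beta^\star(\gamma)\bigr).
\]

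Finally I would verify injectivity, which is trivial: given the image pair $(\beta^\star, \delta)$, one recovers $\gamma$ unambiguously as $\delta / \beta^\star$ (a formal monomial division, which is well-defined here since $\beta^\star$ divides $\delta$). I do not expect any real obstacle in this proof; the only conceptual point to get right is that the leading monomial of the \emph{multilinearization} is automatically of the form $\gamma \cdot \beta$ with $\gcd(\gamma,\beta) = 1$, so the ``witness'' $\beta^\star(\gamma)$ automatically certifies $\gamma \in S(\alpha, \beta^\star(\gamma))$ and the recovery step is consistent.
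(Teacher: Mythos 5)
Your proposal is correct and is essentially the paper's own argument: the paper likewise maps each $\gamma \in \bigcup_\beta S(\alpha,\beta)$ to the pair $(\beta, \gamma\cdot\beta)$ where $\gamma\cdot\beta = \mathrm{LM}(\mathrm{mult}(\gamma\cdot\partial_\alpha(P)))$, and concludes by injectivity into the set of such pairs, whose size is $\sum_\beta |A(\alpha,\beta)|$. Your write-up just makes explicit the two points the paper leaves implicit (non-vanishing of the projection because the surviving monomials are exactly the coprime products, and recovery of $\gamma$ by dividing out $\beta^\star$), so there is no substantive difference.
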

\begin{proof}
Consider any $\gamma \in \Union_{\beta}S(\alpha, \beta)$. 
By definition, there is at least one non-multilinear monomial in $\gamma \cdot \partial_\alpha(P)$. 
Thus, in particular $\mathrm{LM}(\mathrm{mult}(\gamma \cdot \partial_\alpha(P))$ is non-zero and equal to some $\gamma \cdot \beta$ for some monomial $\beta \in \partial_\alpha(P)$. 
This also implies that $\gamma' = \gamma\cdot \beta \in A(\alpha, \beta)$. 
This yields an injective map $\phi$ 
\[
\phi:\Union_\beta S(\alpha,\beta) \spaced{\rightarrowtail} \setdef{(\beta, \gamma')}{\beta\in \partial_\alpha(P)\;,\;\gamma' \in A(\alpha, \beta)}
\] 
Since the size of the RHS is precisely $\sum_\beta \abs{A(\alpha, \beta)}$, the lemma follows. 
\end{proof}

\noindent
Thus, by another use of Inclusion-Exclusion on the $S(\alpha, \beta)$'s, we get
\begin{eqnarray*}
\abs{\Union_{\alpha, \beta}A(\alpha,\beta)}&\geq& \sum_{\alpha,\beta}\abs{A(\alpha, \beta)} \spaced{-} \sum_{(\alpha, \beta)\neq (\alpha',\beta')}\abs{A(\alpha, \beta) \intersection A(\alpha',\beta')}\\
 & \geq & \sum_\alpha \inparen{\sum_\beta \abs{S(\alpha, \beta)}} \spaced{-} \sum_\alpha \inparen{\sum_{\beta \neq \beta'}\abs{S(\alpha, \beta)\intersection S(\alpha,\beta')}}\\
 & & \quad\quad \spaced{-} \sum_{(\alpha, \beta)\neq (\alpha',\beta')}\abs{A(\alpha, \beta) \intersection A(\alpha',\beta')}
\end{eqnarray*}
Let us call the three terms in the RHS of the last equation as $T_1$, $T_2$ and $T_3$ respectively. 
Since we know the size of each $S(\alpha, \beta)$ exactly, the value of $T_1$ is easily obtained. 

\begin{lemma}[\cite{KS14}]\label{lem:T_1-value}
\begin{eqnarray*}
T_1(\alpha) \spaced{:=} \sum_{\beta}\abs{S(\alpha,\beta)}&=&\text{(\# mons in a deriv)} \cdot \binom{n-d+k}{\ell}
\end{eqnarray*}
\end{lemma}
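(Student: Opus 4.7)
The plan is essentially a direct unpacking of the definitions, since the size $|S(\alpha,\beta)|$ has already been asserted to equal $\binom{n-d+k}{\ell}$ independently of $\beta$, and the lemma is just the summation of that equality.

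First I would observe that since $P = \NW_{d,m,e}$ is a multilinear polynomial of degree $d$, and since $\alpha$ is a degree-$k$ multilinear derivative operator (recall $\alpha \in \Delta \subset \vecx^{=k}$ consists of monomials of the form $x_{1,a_1}\cdots x_{k,a_k}$), every monomial $\beta$ appearing in $\partial_\alpha(P)$ is multilinear and has degree exactly $d-k$. In particular the support of $\beta$ consists of exactly $d-k$ distinct variables out of the $n = md$ available.

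Next I would re-derive the claimed equality $|S(\alpha,\beta)| = \binom{n-d+k}{\ell}$. By definition,
\[
S(\alpha,\beta) \spaced{=} \setdef{\gamma}{\gamma \text{ multilinear}, \deg(\gamma) = \ell, \gcd(\beta,\gamma)=1}.
\]
Because $\beta$ occupies $d-k$ variables, the condition $\gcd(\beta,\gamma)=1$ means $\gamma$ is supported on the remaining $n-(d-k)=n-d+k$ variables; since $\gamma$ is multilinear of degree $\ell$, it is uniquely specified by choosing an $\ell$-subset of that pool. Hence $|S(\alpha,\beta)| = \binom{n-d+k}{\ell}$, and crucially this number does not depend on which $\beta$ we picked.

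Finally I would sum over all monomials $\beta$ appearing in $\partial_\alpha(P)$. Since every summand equals the same value $\binom{n-d+k}{\ell}$, we obtain
\[
T_1(\alpha) \;=\; \sum_{\beta \in \partial_\alpha(P)} |S(\alpha,\beta)| \;=\; \bigl(\text{\# monomials in }\partial_\alpha(P)\bigr) \cdot \binom{n-d+k}{\ell},
\]
which is the assertion. There is no genuine obstacle: the substantive content lies elsewhere, namely in later lemmas that (i) lower-bound the number of monomials in $\partial_\alpha(\NW_{d,m,e})$ so that $T_1$ is actually large, and (ii) control the intersection terms $T_2$ and $T_3$ so that the inclusion–exclusion via \autoref{lem:str-inc-exc} delivers the desired lower bound on $\GammaPSD_{k,\ell}(\NW_{d,m,e})$.
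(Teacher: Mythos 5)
Your proposal is correct and matches the paper's argument: the paper likewise observes that every $\beta \in \partial_\alpha(\NW_{d,m,e})$ is multilinear of degree $d-k$, so $\abs{S(\alpha,\beta)} = \binom{n-d+k}{\ell}$ independently of $\beta$, and the lemma is just this constant summed over the monomials of the derivative. Nothing further is needed.
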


\noindent
We shall be simplifying such binomial coefficients very often. 

\begin{lemma}\label{lem:binom-approx}
Let $n$ and $\ell$ be parameters such that $\ell = \frac{n}{2}(1 - \epsilon)$ for some $\epsilon = o(1)$. 
For any $a, b$ such that $a,b = O(\sqrt{n})$, 
\[
\binom{n - a}{\ell - b} \quad = \quad \binom{n}{\ell} \cdot 2^{-a} \cdot (1+\epsilon)^{a-2b} \cdot \exp(O(b\cdot \epsilon^2))
\]
\end{lemma}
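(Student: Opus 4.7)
The plan is a direct asymptotic expansion of the ratio $\binom{n-a}{\ell-b}/\binom{n}{\ell}$ via falling factorials. After writing out both binomial coefficients and cancelling common factors, the ratio collapses to
\[
\frac{\binom{n-a}{\ell-b}}{\binom{n}{\ell}} \spaced{=} \frac{\ell^{\underline{b}} \cdot (n-\ell)^{\underline{a-b}}}{n^{\underline{a}}},
\]
where $x^{\underline{k}} := x(x-1)\cdots(x-k+1)$. This algebraic simplification is the key reason we can avoid a full Stirling-style argument and work purely with elementary expansions.

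Next I would replace each falling factorial by its leading term. For $x = \Theta(n)$ and $k = O(\sqrt{n})$, writing $x^{\underline{k}} = x^k \prod_{i=0}^{k-1}(1 - i/x)$ and summing $\log(1-i/x) = -i/x + O(i^2/x^2)$ gives $x^{\underline{k}} = x^{k}\cdot \exp(O(k^2/x))$. Since $a,b = O(\sqrt{n})$ and $\ell$, $n-\ell$, $n$ are all $\Theta(n)$, each of the three correction factors is $\exp(O(1))$. Substituting $\ell = n(1-\epsilon)/2$ and $n-\ell = n(1+\epsilon)/2$, the leading terms combine neatly:
\[
\frac{\ell^b \cdot (n-\ell)^{a-b}}{n^a} \spaced{=} 2^{-a} \cdot (1-\epsilon)^b \cdot (1+\epsilon)^{a-b}.
\]

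The final step is to express the result as a single power of $(1+\epsilon)$. Using the elementary identity $(1-\epsilon)(1+\epsilon) = 1 - \epsilon^2 = \exp(-\epsilon^2 + O(\epsilon^4))$ and raising to the $b$-th power gives $(1-\epsilon)^b = (1+\epsilon)^{-b}\cdot \exp(O(b\epsilon^2))$. Substituting, $(1-\epsilon)^b (1+\epsilon)^{a-b} = (1+\epsilon)^{a-2b} \cdot \exp(O(b\epsilon^2))$, which matches the claimed form.

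The only real obstacle is bookkeeping the error terms coherently. The falling factorial step contributes an extra multiplicative error of $\exp(O((a^2 + b^2 + (a-b)^2)/n)) = \exp(O(1))$, which must be folded into the $\exp(O(b\epsilon^2))$ factor of the statement. In the parameter regimes where this lemma is invoked (e.g., in \autoref{thm:mainthm}, where $b\epsilon^2$ grows and is compared to the dominant $(1+\epsilon)^{\Omega(d)}$ term), this constant is harmless and absorbed into the existing slack; if a strictly sharper bound were needed, one would track the quadratic Stirling corrections explicitly and state the error as $\exp(O(b\epsilon^2 + (a^2+b^2)/n))$. No conceptual difficulty arises beyond this careful accounting.
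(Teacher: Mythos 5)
Your proof is correct and follows essentially the same route as the paper: cancel the factorials, replace the falling factorials by their leading powers, substitute $\ell = \frac{n}{2}(1-\epsilon)$ and $n-\ell = \frac{n}{2}(1+\epsilon)$, and convert $(1-\epsilon)^b$ into $(1+\epsilon)^{-b}\exp(O(b\epsilon^2))$. The only cosmetic difference is that you bound the falling-factorial correction explicitly by $\exp(O((a^2+b^2)/n)) = \exp(O(1))$, whereas the paper absorbs a $\poly(n)$ factor via a cited estimate; both slacks are harmless in the parameter regimes where the lemma is applied.
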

\begin{proof}
The proof of the above lemma would repeated use the fact that $n! = (n-a)! \cdot n^a \cdot \poly(n)$ whenever $a = O(\sqrt{n})$ (see \cite[Lemma 3.4]{gkks13}). 
\begin{eqnarray*}
\frac{\binom{n-a}{\ell -b}}{\binom{n}{\ell}} & = & \frac{(n-a)!}{n!} \cdot \frac{\ell!}{(\ell -b)!}\cdot \frac{(n-\ell)!}{(n-\ell-a+b)!}\\
& \stackrel{\poly}{\approx}& \frac{1}{n^a} \cdot \ell^b \cdot \frac{(n-\ell)^a}{(n-\ell)^b}\\
& = & \frac{\inparen{\frac{n}{2}}^a(1 +\epsilon)^a}{n^a} \cdot \frac{(1-\epsilon)^{b}}{(1+\epsilon)^b}\\
& = & 2^{-a} \cdot (1+\epsilon)^{a - 2b} \cdot \exp(O(b\cdot \epsilon^2))
\end{eqnarray*}
\end{proof}

\noindent
Since our of parameters would be $\epsilon = \Theta\inparen{\frac{\log d}{\sqrt{d}}}$, the bound on $T_1$ can be simplified as
\begin{eqnarray*}
T_1(\alpha) & = & \text{(\# mons in a deriv)} \cdot \binom{n}{\ell} \cdot \inparen{\frac{1+\epsilon}{2}}^{d-k} \cdot \exp(-O(\log^2 d))\\
  & = & m^{e-k}\cdot \binom{n}{\ell} \cdot \inparen{\frac{1+\epsilon}{2}}^{d-k} \cdot \exp(-O(\log^2 d))\\
& =  &  \binom{n}{\ell} \cdot \exp(-O(\log^2 d))
\end{eqnarray*}
where we used the fact that every non-zero $k$-th order derivative of $\NW_{d,m,e}$ has exactly $m^{e-k}$ monomials and our setting of parameters. 

\begin{remark*}To avoid writing this factor of $\exp(O(\log^2 d))$, we shall use $\approx$ of $\gtrsim$ or $\lesssim$ to indicate that a factor $\exp(O(\log^2 d))$ is omitted. 
\end{remark*}

\bigskip

\noindent We now move on to the calculation of $T_2$. This is the first place where the choice of the polynomial and parameters becomes crucial. 

\begin{lemma}[\cite{KS14}]\label{lem:T_2-for-NW}
For the polynomial $P = \NW_{d,m,e}$, if $n = md$ and $\ell = \frac{n}{2}(1 - \epsilon)$ for $\epsilon = \Theta\inparen{\frac{\log d}{\sqrt{d}}}$, for any $\alpha \in \Delta$
\[
T_2(\alpha) \spaced{:=} \sum_{\beta\neq \beta'}\abs{S(\alpha, \beta)\intersection S(\alpha, \beta')} \quad \lesssim \quad m^{2(e-k)}\cdot \binom{n}{\ell} \cdot \inparen{\frac{1+\epsilon}{2}}^{2d -2k} 
\]
\end{lemma}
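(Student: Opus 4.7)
The plan is to group ordered pairs $(\beta,\beta')$ of distinct monomials in $\partial_\alpha(\NW_{d,m,e})$ according to $r := |\beta \cap \beta'|$, and exploit the polynomial-interpolation structure of $\NW$ to count them. Fix any $\alpha = x_{1,a_1}\cdots x_{k,a_k} \in \Delta$. Then $\partial_\alpha(\NW_{d,m,e})$ is a sum of $m^{e-k}$ monomials, one for each degree-$<e$ polynomial $p$ over $\F_m$ with $p(i)=a_i$ on $[k]$, namely $\beta_p = x_{k+1,p(k+1)}\cdots x_{d,p(d)}$. Two such monomials $\beta_p,\beta_{p'}$ share exactly the variables indexed by $\{i \in \{k+1,\dots,d\} : p(i)=p'(i)\}$. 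Since $p-p'$ is a nonzero polynomial of degree $<e$ already vanishing on $[k]$, it can vanish at most $e-k-1$ further times; hence $r \leq e-k-1$.

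Next, I will estimate the number $N_r$ of ordered pairs with $|\beta_p \cap \beta_{p'}|=r$ from above. For any specific set $T \subseteq \{k+1,\dots,d\}$ of size $r$, the number of ordered pairs $(p,p')$ extending $\alpha$ that agree on all of $T$ is $m^{e-k}\cdot m^{e-k-r}$, because $p$ is free and $p'$ is determined modulo its values on the $e-k-r$ points of $[e]$ outside $[k]\cup T$. Summing over $T$ gives $N_r \leq \binom{d-k}{r}\cdot m^{2(e-k)-r}$ (an overestimate since pairs with agreement exceeding $r$ are multi-counted, but this only helps). Since $|S(\alpha,\beta_p)\cap S(\alpha,\beta_{p'})| = \binom{n-2(d-k)+r}{\ell}$ (multilinear degree-$\ell$ monomials coprime to the $2(d-k)-r$ variables in $\beta_p \cup \beta_{p'}$), I obtain
\[
T_2(\alpha) \;\leq\; \sum_{r=0}^{e-k-1}\binom{d-k}{r}\cdot m^{2(e-k)-r}\cdot \binom{n-2(d-k)+r}{\ell}.
\]

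The final step is simplification via Lemma \ref{lem:binom-approx}. Applied with $a = 2(d-k)-r$ and $b=0$, the binomial coefficient rewrites as $\binom{n}{\ell}\cdot \bigl(\tfrac{1+\epsilon}{2}\bigr)^{2(d-k)-r}\cdot \exp(O(1))$ (the $O(b\epsilon^2)$ error disappears). Pulling the $r$-independent factors out,
\[
T_2(\alpha) \;\lesssim\; m^{2(e-k)}\cdot \binom{n}{\ell}\cdot \left(\frac{1+\epsilon}{2}\right)^{2(d-k)}\cdot \sum_{r\geq 0}\binom{d-k}{r}\left(\frac{2}{m(1+\epsilon)}\right)^{r},
\]
and this trailing sum equals $\bigl(1+\tfrac{2}{m(1+\epsilon)}\bigr)^{d-k}$, which is $\exp\bigl(O((d-k)/m)\bigr) = \exp(O(1/d)) = O(1)$ because $m = \Theta(d^2)$. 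This yields exactly the claimed estimate.

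The only subtle point is the bookkeeping behind the $N_r$ bound: one must either accept the harmless overcount (which is all that is needed since we want an upper bound) or do a careful Möbius-style inversion. The choice $m = \Theta(d^2)$ is what makes the geometric-series factor $O(1)$ rather than blowing up, so this is where the precise parameter regime enters. Everything else is routine manipulation of binomial coefficients and the dimension bound from linear algebra over $\F_m[t]$.
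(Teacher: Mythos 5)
Your proof is correct and follows essentially the same route as the paper's: group pairs $(\beta,\beta')$ by the size of their agreement, use $|S(\alpha,\beta)\cap S(\alpha,\beta')|=\binom{n-2(d-k)+r}{\ell}$, bound the number of such pairs by roughly $\binom{d-k}{r}m^{2(e-k)-r}$, and absorb the resulting sum into an $O(1)$ factor via \autoref{lem:binom-approx} and $\bigl(1+\tfrac{2}{(1+\epsilon)m}\bigr)^{d-k}=O(1)$. The only differences are cosmetic: you make the interpolation count of agreeing pairs explicit (and note the slightly tighter bound $r\leq e-k-1$), where the paper states the count $m^{e-k}\binom{d-k}{t}(m-1)^{e-k-t}$ directly.
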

\begin{proof}
Recall that $S(\alpha,\beta) \intersection S(\alpha,\beta')$ is just set of all multilinear monomials $\gamma$ of degree $\ell$ that are disjoint from both $\beta$ and $\beta'$. Hence, for any pair of multilinear degree $(d-k)$ monomials $\beta \neq \beta' \in \partial_\alpha(P)$ such that $\deg(\gcd(\beta, \beta')) = t$, 
\[
\abs{S(\alpha, \beta)\intersection S(\alpha, \beta')} \spaced{=} \binom{n - 2d + 2k +t}{\ell}
\]
Thus, if we can count the number of pairs $(\beta, \beta')$ that agree on exactly $t$ places, we can obtain $T_2(\alpha)$. 
Note that for $\NW_{d,m,e}$, any two $\beta, \beta' \in\partial_\alpha(\NW_{d,m,e})$ can agree on at most $e-k$ places. 
Further, the number of pairs that agree in exactly $0\leq t\leq e-k$ places is at most
\[
m^{e-k} \cdot \binom{d-k}{t} \cdot (m-1)^{e-k-t}
\]
as there are $m^{e-k}$ choices for $\beta$, and $\binom{d-k}{t}$ choices for places where they may agree, and $(m-1)^{e-k-t}$ choices for $\beta'$ that agree with $\beta$ on those $t$ places. 
Thus,
\begin{eqnarray*}
T_2(\alpha) &\leq& \sum_{t=0}^{e-k} m^{e-k} \cdot \binom{d-k}{t} \cdot (m-1)^{e-k-t} \cdot  \binom{n - 2d + 2k +t}{\ell}\\
& \approx  & \sum_{t=0}^{e-k} m^{e-k} \cdot \binom{d-k}{t} \cdot (m-1)^{e-k-t} \cdot  \binom{n}{\ell} \frac{1}{2^{2d-2k -t}}\cdot (1+\epsilon)^{2d - 2k - t}\\
& \leq & m^{2(e-k)}\binom{n}{\ell}\inparen{\frac{1+\epsilon}{2}}^{2d -2k}\cdot\sum_{t=0}^{e-k}\binom{d-k}{t}\inparen{\frac{2}{(1+\epsilon)m}}^t\\
& \leq & m^{2(e-k)}\binom{n}{\ell}\inparen{\frac{1+\epsilon}{2}}^{2d -2k}\cdot \inparen{1+\frac{2}{(1+\epsilon)m}}^{d-k}\\
& = & m^{2(e-k)}\cdot \binom{n}{\ell} \cdot \inparen{\frac{1+\epsilon}{2}}^{2d -2k}\cdot O(1) \qquad\text{if $m = \Omega(d)$}\qedhere
\end{eqnarray*}
\end{proof}
\noindent

Combining this with \autoref{lem:T_1-value} and using Inclusion-Exclusion (\autoref{cor:inc-exc-str}), we get that for every $\alpha\in \Delta$,
\begin{eqnarray*}
\abs{\Union_{\beta} S(\alpha,\beta)} &\spaced{\gtrsim}& T_1(\alpha) \cdot \min\inparen{1,\frac{T_1(\alpha)}{T_2(\alpha)}}\\
& \approx & T_1(\alpha) \cdot \min\inparen{1,\frac{\pfrac{2}{1+\epsilon}^{d-k}}{m^{e-k}}}\\
& \approx & T_1(\alpha)
\end{eqnarray*}
by our choice of parameters. Note that $e$ needs to tailored very precisely to force the above condition! 
If $e$ is chosen too large or small, we get nothing from this whole exercise!

Thus by \autoref{lem:As-to-Ss} and \autoref{lem:T_1-value}, we get
\begin{equation}\label{eqn:T2-bound}
\sum_{\substack{\alpha\in \Delta\\\beta \in \partial_\alpha(P)}} \abs{A(\alpha,\beta)} \spaced{\geq} \abs{\Delta} \cdot \abs{\Union_{\beta} S(\alpha,\beta)} \spaced{\geq} \abs{\Delta} \cdot T_1(\alpha) \spaced{\approx} \abs{\Delta} \cdot \binom{n}{\ell}
\end{equation}

\subsubsection*{Upper bounding $\sum \abs{A(\alpha,\beta)\intersection A(\alpha',\beta')}$}

We are still left with the task of upper bounding
\[
T_3 \quad = \quad \sum_{(\alpha, \beta)\neq (\alpha',\beta')} \abs{A(\alpha, \beta) \intersection A(\alpha',\beta')}
\]
As mentioned earlier, we really do not have a good handle on the set $A(\alpha, \beta)$, and certainly not on the intersection of two such sets. 
Once again, we shall use a proxy that is easier to estimate to upper bound $T_3$. 

The set $A(\alpha, \beta) \intersection A(\alpha',\beta')$ consists of multilinear monomials $\gamma$ of degree $\ell + d -k$ such that there exists multilinear monomials $\gamma', \gamma''$ of degree $\ell$ satisfying
\begin{eqnarray*}
\gamma & = & \gamma' \beta \spaced{=} \gamma'' \beta',\\
 \gamma'\beta & = & \mathrm{LM}(\mathrm{mult}(\gamma' \partial_\alpha(P)))\\
\text{and}\quad \gamma''\beta' & = & \mathrm{LM}(\mathrm{mult}(\gamma'' \partial_{\alpha'}(P)))
\end{eqnarray*}
This in particular implies that $\gamma$ must be divisible by both $\beta$ and $\beta'$. 

\begin{observation}\label{obs:T3-proxy}
If $\deg(\gcd(\beta, \beta')) = t$, then
\[
\abs{A(\alpha, \beta) \intersection A(\alpha', \beta')} \spaced{\leq} \binom{n - 2d + 2k + t}{\ell - d + k +t}
\]
\end{observation}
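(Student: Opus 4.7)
The plan is essentially a counting argument based on divisibility. By definition, any $\gamma \in A(\alpha,\beta) \cap A(\alpha',\beta')$ is a multilinear monomial of degree $\ell + d - k$ that can simultaneously be written as $\gamma'\beta = \gamma''\beta'$ for multilinear $\gamma',\gamma''$ of degree $\ell$. In particular, $\gamma$ must be divisible by both $\beta$ and $\beta'$, hence by $\mathrm{lcm}(\beta,\beta')$. So I would first discard all other conditions (the leading-monomial requirements) and only use divisibility, which can only enlarge the set.

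Next, since $\beta$ and $\beta'$ are multilinear monomials of degree $d-k$ whose gcd has degree $t$, the least common multiple has degree $\deg(\mathrm{lcm}(\beta,\beta')) = 2(d-k) - t$. Thus any $\gamma$ of the desired form factors uniquely as $\gamma = \mathrm{lcm}(\beta,\beta') \cdot \gamma^{*}$, where $\gamma^{*}$ is a multilinear monomial of degree
\[
(\ell + d - k) - (2(d-k) - t) \;=\; \ell - d + k + t,
\]
built from variables disjoint from the support of $\mathrm{lcm}(\beta,\beta')$.

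The number of such available variables is $n - (2(d-k) - t) = n - 2d + 2k + t$, so the number of admissible $\gamma^{*}$ (and hence of admissible $\gamma$) is at most $\binom{n - 2d + 2k + t}{\ell - d + k + t}$, which is exactly the claimed bound.

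There is no real obstacle here: the only thing to verify is the arithmetic of the degrees and the fact that we are only using a necessary condition (divisibility by $\mathrm{lcm}(\beta,\beta')$) rather than the full definition of $A(\alpha,\beta)$, which is precisely why the bound is an inequality rather than an equality. The ``loss'' in this proxy step is what the subsequent summation over $t$ will have to absorb when bounding $T_3$, but that accounting happens outside this observation.
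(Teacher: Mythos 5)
Your argument is correct and is essentially the paper's own proof: both drop the leading-monomial conditions, use only that $\gamma$ is divisible by $\beta$ and $\beta'$ (equivalently by their lcm, of support size $2(d-k)-t$), and count the multilinear extensions of degree $\ell - d + k + t$ over the remaining $n - 2d + 2k + t$ variables. Nothing further is needed.
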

\begin{proof}
Every monomial $\gamma \in A(\alpha, \beta) \intersection A(\alpha', \beta')$ must be divisible by $\beta$ and $\beta'$. 
Since $\abs{\beta \union \beta'} = 2d - 2k - t$, the number of choices of $\gamma$ is precisely
\[
\binom{n - (2d - 2k -t)}{(\ell + d - k) - (2d - 2k - t)} \quad = \quad \binom{n - 2d + 2k + t}{\ell - d + k + t}\qedhere
\]
\end{proof}

One needs a similar argument as in the case of $T_2$ to figure out how many pairs $(\alpha, \beta) \neq (\alpha',\beta')$ are there with $\deg(\gcd(\beta, \beta')) = t$ and sum them up accordingly. 

\begin{lemma}[\cite{KS14}] \label{lem:T3-bound}
For the polynomial $\NW_{d,m,e}$, and $n = md$ and $\ell = \frac{n}{2}(1 - \epsilon)$ for $\epsilon = \Theta\inparen{\frac{\log d}{\sqrt{d}}}$, 
\[
\sum_{(\alpha,\beta)\neq (\alpha',\beta')} \abs{A(\alpha,\beta) \intersection A(\alpha',\beta')} \quad \lesssim \quad \abs{\Delta}^2 \cdot \pfrac{m^{e-k}}{2^{d - k}}^2 \cdot \binom{n}{\ell}\cdot
\]
\end{lemma}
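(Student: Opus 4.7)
The plan is to split the sum defining $T_3$ by $t = \deg(\gcd(\beta,\beta'))$, bound each intersection via \autoref{obs:T3-proxy}, and count pairs using the combinatorial structure of $\NW_{d,m,e}$. I would first use that a monomial $\beta \in \partial_\alpha(\NW_{d,m,e})$ corresponds bijectively to a polynomial $p \in \F_m[X]$ of degree less than $e$ whose evaluations at $[k]$ match $\alpha$, so each distinct pair $((\alpha,\beta), (\alpha',\beta'))$ corresponds to an ordered pair of distinct polynomials $p \neq p'$. To bound $N(t)$, the number of such pairs with $\deg(\gcd(\beta, \beta')) = t$, I would use an ``at least $t$'' count: for fixed $\alpha,\alpha' \in \Delta$, a polynomial $p$ extending $\alpha$, and a $t$-subset $T \subseteq \{k+1,\ldots,d\}$ of desired agreement positions, the polynomial $p'$ must extend $\alpha'$ and satisfy $p'|_T = p|_T$, which is $k+t$ evaluation constraints on a polynomial of $e$ degrees of freedom, leaving at most $m^{\max(0,\,e-k-t)}$ choices. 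This yields
\[ N(t) \;\leq\; |\Delta|^2 \cdot m^{e-k} \cdot \binom{d-k}{t} \cdot m^{\max(0,\,e-k-t)}. \]

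Combining with \autoref{lem:binom-approx}, which gives $\binom{n-2d+2k+t}{\ell-d+k+t} \approx \binom{n}{\ell}(2(1+\epsilon))^t/4^{d-k}$, the dominant regime $t \leq e-k$ contributes at most
\[ |\Delta|^2 \cdot \frac{m^{2(e-k)}}{4^{d-k}} \binom{n}{\ell} \cdot \sum_{t=0}^{e-k}\binom{d-k}{t}\pfrac{2(1+\epsilon)}{m}^t \;\leq\; |\Delta|^2 \cdot \frac{m^{2(e-k)}}{4^{d-k}} \binom{n}{\ell} \cdot \inparen{1 + \frac{2(1+\epsilon)}{m}}^{d-k}. \]
Since $m = \Theta(d^2)$ and $d-k = \Theta(d)$, the last factor equals $1 + o(1)$, producing the claimed bound up to the $\exp(O(\log^2 d))$ slack hidden in the $\approx$.

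The principal technical obstacle is the tail $t > e-k$, where $m^{\max(0,e-k-t)} = 1$ and the factor $(2(1+\epsilon))^t$ grows quickly, so the naive ``at least $t$'' bound can look dangerous. I would first observe that for distinct $p, p'$ of degree less than $e$, the polynomial $p-p'$ has at most $e-1$ zeros in $[d]$, restricting the relevant sum to $t \leq e-1$. Bounding $\binom{d-k}{t}(2(1+\epsilon))^t \leq (Cd)^{e-1}$ for a constant $C$ and comparing against $m^{e-k} = \Theta(d^{2(e-k)})$, the tail's contribution to $T_3$ is at most $k \cdot (Cd)^{e-1}/m^{e-k}$ times the target. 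For $e = \Theta(d/\log d)$ and $k = O(\sqrt{d})$, this ratio is $\exp(-\Omega(d/\log d))$, much smaller than the $\exp(O(\log^2 d))$ slack, so the tail is safely absorbed. The polynomial gap $m = \Theta(d^2)$ versus $d$ is exactly what is needed both to make the $t \leq e-k$ geometric sum converge and to dominate the tail; relaxing the parameter choices would break both steps.
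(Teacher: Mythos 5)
Your proposal is correct and follows essentially the same route as the paper: split by $t = \deg(\gcd(\beta,\beta'))$, bound each intersection via \autoref{obs:T3-proxy}, count pairs through the underlying degree-$(<e)$ polynomials, apply \autoref{lem:binom-approx}, and absorb the resulting geometric sum using $m = \Theta(d^2)$. The only difference is that you treat the tail $t > e-k$ explicitly (capping the count of $p'$ at one and using that distinct polynomials agree on at most $e-1$ points, so the tail is exponentially negligible), whereas the paper handles this range loosely by extending the factor $(m-1)^{e-k-t}$ to all $t \le e$; your version is the more careful rendering of the same argument.
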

\begin{proof}
Fix a pair of derivatives $\alpha,\alpha'$. Let
\[
T_3(\alpha,\alpha') \spaced{:=} \sum_{\substack{\beta \in \partial_\alpha(P)\\
\beta'\in \partial_{\alpha'}(P)\\
(\alpha,\beta)\neq (\alpha',\beta')}} \abs{A(\alpha,\beta)\intersection A(\alpha',\beta')}
\]
As before, we shall first count the number of pairs of monomials $\beta \in \partial_\alpha P$ and $\beta' \in \partial_{\alpha'} P$ such that $\gcd(\beta, \beta') = t$. 
Note that since $\alpha$ may differ from $\alpha'$, we could potentially have $\gcd(\beta_1,\beta_2) = e$. 
Once again, this is easily seen to be at most
\[
m^{e-k} \cdot \binom{d-k}{t} \cdot (m-1)^{e-k-t}. 
\]
\noindent Therefore, using \autoref{obs:T3-proxy}, 
\begin{eqnarray*}
T_3(\alpha, \alpha') & \leq & \sum_{t=0}^{e} m^{e-k} \cdot (m-1)^{e-k -t} \binom{d-k}{t} \binom{n- 2d + 2k +t}{\ell - d + k + t}\\
& \approx & \sum_{t=0}^{e} m^{e-k} \cdot (m-1)^{e-k -t} \binom{d-k}{t} \cdot \binom{n}{\ell} \inparen{\frac{1}{2}}^{2d - 2k -t}  (1+\epsilon)^{t}\\
& \leq & \frac{m^{2(e-k)}}{2^{2(d-k)}} \cdot \binom{n}{\ell} \cdot \inparen{1 + \frac{2(1+\epsilon)}{m}}^{d-k}\\
& \approx & \pfrac{m^{e-k}}{2^{d-k}}^2 \cdot \binom{n}{\ell} \\
\implies T_3 & \lesssim & \abs{\Delta}^2 \cdot \pfrac{m^{e-k}}{2^{d-k}}^2 \cdot \binom{n}{\ell}
\end{eqnarray*}
\end{proof}
\noindent
Recalling that we have chosen our parameters so that 
\[
\frac{m^{e-k}}{2^{d-k}} \spaced{\approx} \inparen{\frac{1}{1+\epsilon}}^{d-k} \quad\text{and}\quad \abs{\Delta} = (1+\epsilon)^{2(d-k)},
\]
the above equation reduces to 
\[
T_3 \spaced{=}\sum_{(\alpha,\beta)\neq (\alpha',\beta')}\abs{A(\alpha,\beta)\intersection A(\alpha',\beta')} \quad \lesssim \quad \abs{\Delta} \cdot  \binom{n}{\ell}.
\]
Combining with \eqref{eqn:T2-bound}, we obtain the required bound for $\abs{\Union A(\alpha, \beta)}$ via Inclusion-Exclusion (\autoref{cor:inc-exc-str}). 
\[
\GammaPSD_{k,\ell}(\NW_{d,m,e}) \spaced{\geq} \abs{\Union_{\alpha,\beta} A(\alpha,\beta)} \spaced{\gtrsim} \binom{n}{\ell} \cdot (1+\epsilon)^{2d - 2k}
\]
The only thing left to observe is that by \autoref{lem:binom-approx},
\[
\binom{n}{\ell + d - k} \spaced{\approx} \binom{n}{\ell} \cdot (1+\epsilon)^{2d - 2k}
\]
and that completes the proof of \autoref{lem:KS-tight-bound}. \hfill\qed

\end{document}